\definecolor{mygreen}{RGB}{28,172,0} 
\definecolor{mylilas}{RGB}{170,55,241}
\DeclareFixedFont{\ttb}{T1}{txtt}{bx}{n}{12} 
\DeclareFixedFont{\ttm}{T1}{txtt}{m}{n}{12}  
\newtheorem{theorem}{Theorem}
\newtheorem{corollary}{Corollary}
\newtheorem{lemma}{Lemma}
\newtheorem{observation}{Observation}
\newtheorem{problem}{Problem}
\theoremstyle{definition}
\newtheorem{definition}{Definition}
\newtheorem{remark}{Remark}
\definecolor{mygreen}{RGB}{28,172,0} 
\definecolor{mylilas}{RGB}{170,55,241}
\newacronym{otr}{OTR}{Online Traffic Routing}
\newacronym{otr-i}{OTR-I}{Online Traffic Routing with Identical Users}
\pgfplotsset{compat=1.14}
\pgfplotsset{scaled y ticks=false}
\pgfplotsset{scaled x ticks=false}
\definecolor{mycolor1}{rgb}{0.00000,0.44700,0.74100}%
\definecolor{mycolor2}{rgb}{0.85000,0.32500,0.09800}%
\definecolor{mycolor3}{rgb}{0.00000,0.44700,0.74100}%
\definecolor{mycolor4}{rgb}{0.00000,0.44700,0.74100}%
\definecolor{mycolor5}{rgb}{0.85000,0.32500,0.09800}%
\definecolor{mycolor6}{rgb}{0.92900,0.69400,0.12500}%
\definecolor{mycolor7}{rgb}{0.49400,0.18400,0.55600}%
\definecolor{mycolor8}{rgb}{0.46600,0.67400,0.18800}%
\definecolor{mycolor9}{rgb}{0.00000,0.44700,0.74100}%
\definecolor{mycolor10}{rgb}{0.85000,0.32500,0.09800}%
\definecolor{mycolor11}{rgb}{0.00000,0.44700,0.74100}%
\definecolor{mycolor12}{rgb}{0.85000,0.32500,0.09800}%
\definecolor{deepblue}{rgb}{0,0,0.5}
\definecolor{deepred}{rgb}{0.6,0,0}
\definecolor{deepgreen}{rgb}{0,0.5,0}
\newcommand{\dd}[1]{\mathrm{d}#1}
\renewcommand\footnotemark{}
\newcommand{\fakeparagraph}[1]{\vspace{5pt} 

\noindent\textbf{#1}}
\def\fso{$\beta$-SO\xspace}
\def\fpso{$\beta$-PSO\xspace}
\def\objue{h^{UE}}
\def\objso{h^{SO}}
\def\objitap{h^{I_{\alpha}}}
\def\P{\mathcal{P}}
\def\x{\mathbf{x}}
\def\f{\mathbf{f}}
\def\A{\mathcal{A}}
\newif\ifarxiv   
\title{Balancing Fairness and Efficiency in Traffic Routing via Interpolated Traffic Assignment}
\author{Devansh Jalota$^1$, Kiril Solovey$^2$, Matthew Tsao$^1$, Stephen Zoepf$^3$, and Marco Pavone$^1$%
\thanks{$^1$ Stanford University, USA; {\tt \{djalota, kirilsol, mwtsao, pavone\}@stanford.edu}.}%
\thanks{$^3$ Technion - Israel Institute of Technology, Israel;
{\tt kirilsol@stanford.edu}.}
\thanks{$^2$ Lacuna AI, USA; {\tt stephen.zoepf@lacuna.ai}.}%
}
\date{}
\begin{document}

\maketitle

\begin{abstract}
    System optimum (SO) routing, wherein the total travel time of all users is minimized, is a holy grail for transportation authorities. However, SO routing may discriminate against users who incur much larger travel times than others to achieve high system efficiency, i.e., low total travel times. To address the inherent unfairness of SO routing, we study the ${\beta}$-fair SO problem whose goal is to minimize the total travel time while guaranteeing a ${\beta\geq 1}$ level of unfairness, which specifies the maximum possible ratio between the travel times of different users with shared origins and destinations. 

    To obtain feasible solutions to the ${\beta}$-fair SO problem while achieving high system efficiency, we develop a new convex program, the Interpolated Traffic Assignment Problem (I-TAP), which interpolates between a fairness-promoting and an efficiency-promoting traffic-assignment objective. We evaluate the efficacy of I-TAP through theoretical bounds on the total system travel time and level of unfairness in terms of its interpolation parameter, as well as present a numerical comparison between I-TAP and a state-of-the-art algorithm on a range of transportation networks. The numerical results indicate that our approach is faster by several orders of magnitude as compared to the benchmark algorithm, while achieving higher system efficiency for all desirable levels of unfairness. We further leverage the structure of I-TAP to develop two pricing mechanisms to collectively enforce the I-TAP solution in the presence of selfish homogeneous and heterogeneous users, respectively, that independently choose routes to minimize their own travel costs. We mention that this is the first study of pricing in the context of fair routing for general road networks (as opposed to, e.g., parallel road networks).
\end{abstract}

\section{Introduction}

Traffic congestion has soared in major urban centres across the world, leading to widespread environmental pollution and huge economic losses. In the US alone, almost 90 billion US dollars of \ifarxiv economic \fi losses are incurred every year, with commuters losing hundreds of hours due to traffic congestion~\cite{us-congestion-costs}. A contributing factor to increasing road traffic \ifarxiv levels \fi is the often sub-optimal route selection by users due to the lack of centralized \ifarxiv traffic \fi control \cite{how-bad-is-selfish,ROUGHGARDEN2004389}. In particular, \textit{selfish routing}, wherein users choose routes to minimize their travel times, results in a user equilibrium (UE) traffic pattern that is often far from the system optimum (SO) \cite{Sheffi1985,boston-large-poa}. To cope with the efficiency loss due to the selfishness of users, several methods including the control of a fraction of compliant users~\cite{Sharon2018TrafficOF} and marginal cost tolls, where users pay for the externalities they impose on others, have been used to enforce the SO solution as a UE \cite{pigou,wardrop-ue}. 


While determining SO tolls is of fundamental theoretical importance, it is of limited practical interest \cite{METRP} since SO traffic patterns are often unfair with some users incurring much larger travel times than others. This discrepancy among user travel times is referred to as \textit{unfairness}, which, more formally, is the maximum possible ratio across all origin-destination (O-D) pairs of the experienced travel time of a given user to the travel time of the fastest user between the same O-D pair. The unfairness of the SO solution can be quite high in real-world transportation networks, since users may spend nearly twice as much time as others travelling between the same O-D pair \cite{so-routing-seminal}. Moreover, a theoretical analysis established that the SO solution can even have unbounded unfairness~\cite{Roughgarden2002HowUI}.

The lack of consideration of user-specific travel times in the \ifarxiv global \fi SO problem has led to the design of methods that aim to achieve a balance between the total travel time of a traffic assignment and the level of fairness that it provides. In a seminal work, Jahn et al. \cite{so-routing-seminal} introduced the Constrained System Optimum (CSO) to reduce the unfairness of traffic flows by bounding the ratio of the normal length of a path of a given user to the normal length of the shortest path for the same O-D pair. Here, \textit{normal length} is any metric for an edge that is fixed \emph{a priori} and is independent of the traffic flow, e.g., edge length or free-flow travel time. While many approaches to solve the CSO problem have been developed~\cite{ANGELELLI20161,ANGELELLI2020,ANGELELLI2018234}, they suffer from the \ifarxiv inherent \fi limitation that the level of experienced unfairness in terms of user travel times can be much higher than the bound on the ratio of normal lengths that the CSO is guaranteed to satisfy. In addition to this drawback, the algorithmic approaches to solve the CSO problem are often computationally prohibitive and do not provide theoretical guarantees in terms of the resulting solution fairness and efficiency. Furthermore, it is unclear how to develop a pricing scheme to enforce such proposed traffic assignments in practice.

In this work, we study a problem analogous to CSO that differs in the problem's unfairness constraints. In particular, we explicitly consider the experienced unfairness in terms of user travel times as in \cite{ANGELELLI2020105016}, which, arguably, is a more accurate representation of user constraints as it accounts for costs that vary according to a traffic assignment. Our work further addresses the algorithmic concerns of existing approaches to solve fairness-constrained traffic routing problems by developing (i) a computationally-efficient approach that trades off efficiency and fairness in traffic routing, (ii) theoretical bounds to quantify the performance of our algorithm, and (iii) a pricing mechanism to enforce the resulting traffic assignment.

\fakeparagraph{Contributions.}
We study the $\beta$-fair System Optimum (\fso) problem, which involves minimizing the total travel time of users subject to unfairness constraints, where a $\beta\geq 1$ bound on unfairness specifies the maximum allowable ratio between the travel times of different users with shared origins and destinations. 

We develop a simple yet effective 
approach for \fso that involves solving a new convex program, the interpolated traffic assignment problem (I-TAP). I-TAP interpolates between the fair UE and efficient SO objectives to achieve a solution that is simultaneously fair and efficient. This allows us to approximate the \fso problem as an unconstrained traffic assignment problem, which can be solved quickly. We further present theoretical bounds on the total system travel time and unfairness level in terms of the interpolation parameter of I-TAP.


We then exploit the structure of I-TAP to develop two pricing schemes which enforce users to selfishly select the flows satisfying the $\beta$ bound on unfairness computed through our approach. \ifarxiv In particular, for \else For \fi homogeneous users with the same value of time we develop a natural marginal-cost pricing scheme. For heterogeneous users, we exploit a linear-programming method \cite{multicommodity-extension} \ifarxiv to derive prices that enforce the optimal flows computed through I-TAP\else\fi. We mention that our work is the first to study road pricing in connection with fair routing for general road networks as opposed to, e.g., parallel networks.

Finally, we evaluate the performance of our approach on real-world transportation networks for the above (Section~\ref{sec:numerical-experiments}) and other valid notions of unfairness \ifarxiv (Appendix~\ref{apdx:general-unfairness}) \else (Appendix F) \fi. The numerical results indicate significant computational savings as well as superior performance for all desirable levels of unfairness~$\beta$, as compared to the algorithm in~\cite{so-routing-seminal}. Moreover, our results demonstrate that our approach can reduce \ifarxiv the level of \fi unfairness by 50\% while increasing the total travel time by at most 2\%, which indicates that a huge gain in user fairness can be achieved for a small loss in 
efficiency, making our approach a desirable option for use in route guidance systems.

This paper is organized as follows. Section~\ref{sec:related-lit} reviews related literature. We introduce in Section~\ref{sec:model} the \fso problem and metrics to evaluate the fairness and efficiency of a \ifarxiv feasible \fi traffic assignment. We introduce the I-TAP method and discuss its properties in Section~\ref{sec:itap-main}, and develop pricing schemes \ifarxiv to enforce the computed traffic assignments as a UE \fi in Section~\ref{sec:main-pricing}. We evaluate the performance of the I-TAP method 
through numerical experiments in Section~\ref{sec:numerical-experiments} and provide directions for future work in Section~\ref{sec:future-work}. \ifarxiv Furthermore, in the appendix we also present an extension of our work to additional fairness notions beyond the one considered in the main text.

\ifarxiv
\else

In the appendix, we provide proofs that were omitted from the main text and discuss additional results on the sensitivity analysis as well as numerical implementation details of our developed approach. Furthermore, we elaborate on some additional observations on our pricing results and present an extension of our work to additional fairness notions beyond the one considered in the main text.


\fi

\section{Related Work} \label{sec:related-lit}

The trade-off between system efficiency and user fairness has been widely studied in applications including resource allocation, reducing the bias of machine-learning algorithms, and influence maximization. While different notions of fairness have been proposed, 
the level of fairness is typically controlled through the problem's objective or constraints. For instance, fairness parameters that trade-off the level of fairness in the objective can be tuned to investigate the loss in system efficiency in the context of influence maximization \cite{Rahmattalabi2020FairIM} and resource allocation~\cite{Bertsimas-Eff-Fairness} problems. On the other hand, fairness parameters that bound the degree of allowable inequality between different user groups through the problem's constraints have been proposed to reduce bias towards disadvantaged groups \cite{stoica-seeding}\ifarxiv. For example, group-based fairness notions \cite{yiling-fair-classification} have been studied to reduce the bias of machine-learning algorithms, and diversity constraints have been introduced to ensure that the benefits of social interventions are fairly distributed \cite{Tsang2019GroupFairnessII}. \else, e.g., through group-based or diversity constraints~\cite{yiling-fair-classification,Tsang2019GroupFairnessII}. \fi

In the context of traffic routing, several traffic assignment formulations have been proposed to achieve a balance between multiple performance criteria~\cite{DIAL1996,DIAL1997357}, with a particular focus on fairness considerations in traffic routing~\cite{so-routing-seminal}. Since Jahn et al. \cite{so-routing-seminal} introduced the CSO problem, there have been both theoretical studies~\cite{so-routing-user-constraints} as well as the development of heuristic approaches to solve the NP-hard CSO problem. For instance,~\cite{so-routing-seminal} proposed a Frank-Wolfe based heuristic, \ifarxiv wherein the solution to the linearized CSO problem is obtained by solving a constrained shortest-path problem, \fi \ifarxiv while another work~\cite{BAYRAM2015146} developed a second-order cone programming technique. Several subsequent approaches for CSO have considered linear relaxations of the original problem~\cite{ANGELELLI20161,ANGELELLI2020,ANGELELLI2018234}. \else while others have considered linear relaxations of the original problem~\cite{ANGELELLI20161,ANGELELLI2018234,ANGELELLI2020}. \fi Each of these approaches bounds the level of unfairness in terms of normal lengths of paths by restricting the set of eligible paths on which users can travel to those that meet a specified level of normal unfairness. However, the experienced unfairness in terms of the travel times \ifarxiv on the restricted path set \fi may be much higher than the \ifarxiv specified \fi level of normal unfairness, which is an \emph{a priori} fixed quantity. 

This inherent drawback of the CSO problem in limiting the experienced unfairness in terms of user travel times was overcome by~\cite{ANGELELLI2020105016}, which proposed two Mixed Integer Non-Linear Programming models to capture traffic-dependent notions of unfairness. Their approach to solve these models relies on a linearization heuristic for the edge travel-time functions, which are in general \emph{non-linear}. Achieving a high level of accuracy of the linear relaxations in approximating the true travel-time functions, however, requires solving a large MILP which is computationally \emph{expensive}. Unlike~\cite{ANGELELLI2020105016}, our I-TAP method is computationally \emph{inexpensive}, while directly accounting for non-linear travel-time functions.

A further limitation of the existing methods for fair traffic routing is that there are limited results in providing pricing schemes to induce selfish users to collectively form the proposed traffic patterns, e.g., those satisfying a certain bound on unfairness. For instance,~\cite{marden-parallel} provides tolling mechanisms to enforce fairness constrained flows which applies only \ifarxiv to the simplified model of a parallel network\else to parallel networks\fi. In more general networks,~\cite{lujak-ue-so} proposes an auction-based bidding mechanism for users to be assigned to precomputed paths. However, this approach cannot be applied as-is to our setting as users are unconstrained with respect to a specific path set. \ifarxiv Thus, we develop road tolling mechanisms to induce selfish users to collectively form traffic assignments guaranteeing a specified level of unfairness. \fi

\section{Model and Problem Definition} \label{sec:model}
We model the road network as a directed graph $G = (V, E)$, with the vertex and edge sets $V$ and $E$, respectively. Each edge $e \in E$ has a normal length $\eta_e$, 
\ifarxiv 
which represents a fixed quantity such as the physical length of the corresponding road segment, \fi 
and a flow-dependent travel-time function $t_e: \mathbb{R}_{\geq 0} \rightarrow \mathbb{R}_{> 0}$, which maps $x_e$, the rate of traffic on edge $e$, to the travel time $t_e(x_e)$. 
As is standard in the traffic routing literature, we
assume that the function~$t_e$, for each $e \in E$, is differentiable, convex, locally Lipschitz continuous, and monotonically increasing.

Users make trips between a set of O-D pairs, and we model users with the same origin and destination as one commodity, where $K$ is the set of all commodities. Each commodity $k \in K$ has a demand rate $d_k>0$, which represents the amount of flow to be routed on a set of directed paths $\mathcal{P}_k$ between its origin and destination. 
\ifarxiv 
Here $\mathcal{P}_k$ is the set of all possible paths between the origin and destination corresponding to commodity $k$. 
\fi 
The edge flow of each commodity $k$ is given by $\x^k = \{x_e^k\}_{e\in E}$, while the aggregate edge flow of all commodities is denoted as $\mathbf{x}:=\{x_e\}_{e\in E}$. For an edge flow $\mathbf{x}:=\{x_e\}_{e\in E}$ and a path $P \in \P = \cup_{k \in K} \mathcal{P}_k$, the amount of flow routed on the path is denoted as $\mathbf{x}_P$, where the vector of path flows $\f = \{\mathbf{x}_P: P \in \P \}$. Then, the travel time on path $P$ is $t_P(\mathbf{x}) = \sum_{e \in P} t_e(x_e)$, while $\eta_P = \sum_{e \in P} \eta_e$ is its normal length. 




We assume users are selfish and thus choose paths that minimize their total travel cost that is a linear function of tolls and travel time. For a value of time parameter $v>0$, and a vector of edge prices (or tolls) $\boldsymbol{\tau} = \{\tau_e\}_{e \in E}$, the travel cost on a given path $P$ under the traffic assignment $\x$ is given by $C_{P}(\x, \boldsymbol{\tau}) = v t_P(\mathbf{x}) + \sum_{e \in P} \tau_e$.

\subsection{Traffic Assignment}
In this work we will consider several variants of the traffic assignment problem (TAP). The goal of the SO traffic assignment problem (SO-TAP) is to route users to minimize the total system travel time. This behavior is captured in the following convex program:

\begin{definition}[Program for SO-TAP \cite{Sheffi1985}] \label{def:so-tap}
\begin{mini!}|s|[2]<b>
	{\f }{\objso(\mathbf{x}):=\sum_{e \in E} x_e t_e(x_e), \label{eq:so-obj}} 
	{}
	{}
	\addConstraint{\sum_{k \in K} \sum_{P \in \mathcal{P}_k: e \in P} \mathbf{x}_P}{= x_e, \quad \forall e \in E \label{eq:edge-constraint},}
	\addConstraint{\sum_{P \in \mathcal{P}_k} \mathbf{x}_P}{= d_k, \quad \forall k \in K \label{eq:demand-constraint},}
	\addConstraint{\mathbf{x}_P}{ \geq 0, \quad \forall P \in \P \label{eq:nonnegativity-constraints},}
\end{mini!}
with edge flow Constraints~\eqref{eq:edge-constraint}, demand Constraints~\eqref{eq:demand-constraint}, and non-negativity
Constraints~\eqref{eq:nonnegativity-constraints}.
\end{definition}
We mention that the total travel time objective is only a function of the aggregate edge flow $\x$, which is related to the path flow $\f$ through Constraint~\eqref{eq:edge-constraint}. Note for any given path flow $\f$ that both the edge flow $\x$ and the commodity-specific edge flows $\x^k$ for each commodity $k \in K$ are uniquely defined.


\ifarxiv
Closely related to SO-TAP is the UE traffic assignment problem (UE-TAP) that emerges from the selfish behavior of users, where each user strives to minimize its own travel time, without regard to the effect that it has on the overall travel time of all the users in the system. This behavior is captured through the following convex program:
\else
Closely related is the UE traffic assignment problem (UE-TAP) that emerges from the selfish behavior of users that minimize their own travel time, and is described by the following convex program:
\fi
\begin{definition}[Program for UE-TAP \cite{Sheffi1985}]
\begin{mini!}|s|[2]<b>
	{\f}{\objue(\mathbf{x}):=\sum_{e\in E}\int_{0}^{x_e} t_e(y) \dd{y}, \label{eq:ue-obj}}
	{}
	{}
	\addConstraint{~\eqref{eq:edge-constraint}}{-\eqref{eq:nonnegativity-constraints}.}
\end{mini!}
\end{definition}

While the integral objective used to define UE-TAP has not found a clear economic or behavioral interpretation within the transportation and game-theory communities~\cite{Sheffi1985}, the optimal solution of UE-TAP corresponds to an equilibrium, which can be seen through the KKT conditions of this optimization problem. That is, UE-TAP provides a polynomial time computable method to determine the user equilibrium flows. A defining property of the UE solution is that it is fair for all users since the travel time of all the flow that is routed between the same O-D pair is equal. In contrast, at the SO solution the sum of the travel time and marginal cost of travel is the same for all users travelling between the same O-D pair. Thus, marginal cost pricing is used to induce selfish users to collectively form the SO traffic pattern. While the number of constraints, which depend on the path sets $\mathcal{P}_k$, can be exponential in the size of the transportation network, both SO-TAP and UE-TAP are efficiently computable since they can be formulated without explicitly enumerating all the path level flows and constraints~\cite{Sheffi1985}.

\subsection{Fairness and Efficiency Metrics} \label{sec:fair-eff-metric}
\ifarxiv
We evaluate the quality of any given traffic assignment $\x$, which satisfies Constraints~\eqref{eq:edge-constraint}-\eqref{eq:nonnegativity-constraints}, using two metrics, namely: (i) efficiency, which is of importance to a traffic authority as well as all users collectively, and (ii) fairness, which is of direct importance to each user.
\else
We evaluate the quality of any traffic assignment $\x$ using two metrics, namely: (i) efficiency and (ii) fairness.
\fi

We evaluate the efficiency of a traffic assignment by comparing its total travel time to that of the SO edge flow $\mathbf{x}^{SO}$. Recalling that $\objso(\mathbf{x})$ denotes the total travel time of the edge flow $\mathbf{x}$, 
the \textit{inefficiency ratio} of $\mathbf{x}$ is
\ifarxiv 
\begin{equation}
    \rho(\mathbf{x}) := \frac{\objso(\mathbf{x})}{\objso(\mathbf{x}^{SO})}.
\end{equation}
\else
$\rho(\mathbf{x}) := \frac{\objso(\mathbf{x})}{\objso(\mathbf{x}^{SO})}$.
\fi 
Note that for the UE solution $\mathbf{x}^{UE}$, the inefficiency ratio is the Price of Anarchy  
\ifarxiv 
(PoA)~\cite{worst-case-equilibria}, which we denote as~$\Bar{\rho}$.
\else
(PoA)~\cite{worst-case-equilibria}.
\fi

To evaluate the fairness of a traffic assignment, we first introduce the notion of a \textit{positive path} from \cite{basu2017reconciling}.
\begin{definition} [Positive Path]
For any path flow $\f$ with corresponding commodity-specific edge flows $\x^k$, a path $P \in \mathcal{P}_k$ is positive for a commodity $k \in K$ if for all edges $e \in P$, $x_e^k$ is strictly positive. The set of all positive paths for a flow $\f$ and commodity $k$ is denoted as $\mathcal{P}_{k}^{+}(\f) = \{P:P \in \mathcal{P}_k, x_e^k>0, \text{ for all } e \in P \}$.
\end{definition}
The importance of the notion of a positive path is that the path decomposition of the commodity-specific edge flows $\x^k$ may be non-unique; however the set of positive paths is always uniquely defined for such edge flows. That is, for commodity specific edge flows $\x^k$ the set of positive paths for any two path decompositions $\f_1$ and $\f_2$ are equal, i.e., $\mathcal{P}_{k}^{+}(\f_1) = \mathcal{P}_{k}^{+}(\f_2)$.

We evaluate the fairness of a traffic flow $\f$ with an edge decomposition $\x$ 
through its corresponding unfairness $U$, which is defined as the maximum ratio across all O-D pairs of (i) the 
travel time on the slowest, i.e., highest travel time, positive path to (ii) the travel time on the fastest positive path between the same O-D pair\ifarxiv:
\begin{align*}
    U(\f) := \max_{k \in K} \max_{Q, R \in \P_k^{+}} \frac{t_Q(\x)}{t_R(\x)}
    .
\end{align*}
\else
, i.e., $U(\f) := \max_{k \in K}  \max_{Q, R \in \P_k^{+}} \frac{t_Q(\x)}{t_R(\x)}$.
\fi
That is, $U(\f)$ returns the maximum possible ratio of travel times on positive paths across all commodities with respect to the path flow $\f$. As a result, the unfairness $U$ is a number between one and infinity, and a traffic assignment has a high level of fairness if its unfairness is close to one while it has a low level of fairness if the corresponding unfairness is much larger than one. \ifarxiv A discussion on numerically computing the above defined notion of unfairness is presented in Section~\ref{sec:numerical-experiments}. In contrast, other valid notions of unfairness 
could also be considered. For instance, for a given path flow decomposition $\f$ with a corresponding edge flow $\x$, the unfairness $\Tilde{U}(\cdot)$ of the path flows can be evaluated as the maximum ratio between the travel times of any two users travelling between the same O-D pair, i.e., $\Tilde{U}(\f) = \max_{k \in K}  \max_{Q, R \in \P_k: \x_Q, \x_R>0} \frac{t_Q(\x)}{t_R(\x)} $. Note here that we only consider a ratio of travel times on paths with strictly positive flow for the path decomposition $\f$ rather than the ratio of travel times for all positive paths. We defer a detailed treatment of path-based unfairness measures to \ifarxiv Appendix~\ref{apdx:general-unfairness} \else Appendix F \fi and highlight here some key features of the positive path based unfairness measure $U$. 


The unfairness measure $U(\f)$ can be efficiently computed and has the benefit that it applies to all possible path decompositions of the commodity specific edge flows $\x^k$. As a result, in the context of a single O-D pair travel demand, the unfairness measure $U(\f)$ has the benefit that it is a property of the unique edge flow $\x$ and is relevant when users are not constrained to a specific path decomposition, as happens in practice. In contrast, path decomposition specific unfairness measures, e.g., $\Tilde{U}(\f)$, are likely to be more sensitive to the method used to compute the path decomposition. Furthermore, we note that the positive path based unfairness notion serves as an upper bound on the ratio of travel times for any two users travelling between the same O-D pair for the path flow $\f$, i.e., $\Tilde{U}(\f) \leq U(\f)$ for all $\f$. As a result, our theoretical bounds on unfairness obtained for the positive path-based unfairness notion will naturally extend to path decomposition specific unfairness measures such as $\Tilde{U}(\f)$. Thus, in the rest of this paper we focus on the positive path-based unfairness measure and, for numerical comparison, we present other path decomposition specific unfairness measures, e.g., $\Tilde{U}(\f)$, in Appendix~\ref{apdx:general-unfairness}.

\subsection{Toy Network Example} \label{sec:toy-network}

To illustrate the fairness and efficiency properties of the two optimization problems, SO-TAP and UE-TAP, studied in this work, we present a toy example of a two-edge Pigou network, as depicted in Figure~\ref{fig:toy}. In particular, consider a demand of one that needs to be routed from the origin $v_1$ to the destination $v_2$, with two edges ($e_1$ and $e_2$) connecting the origin to the destination. Observe that if the travel time functions on the two edges are given by $t_1(x_1) = 1$ and $t_2(x_2) = x_2$, then under the UE-TAP solution all users will be routed on edge two, while the SO-TAP solution that minimizes the total travel time will route $0.5$ units of flow on both edges. The level of unfairness and the total travel time of the two traffic assignments are presented in the following table, which indicates that the UE-TAP solution is fair but inefficient while the SO-TAP solution is efficient but unfair.


\begin{figure}[!h] 
    \begin{minipage}{0.4\linewidth}
		\centering
		\includegraphics[width=60mm]{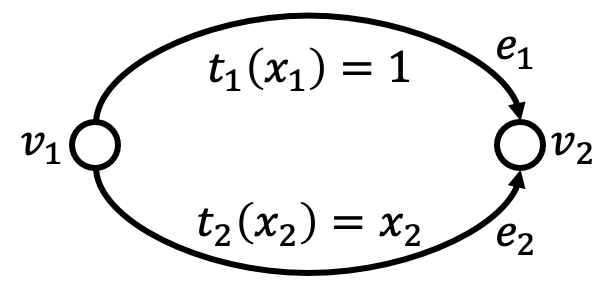}
	\end{minipage} \hspace{15pt}
	\begin{minipage}{0.5\linewidth}
		\small
		\label{table:student}
		\centering
		\begin{tabular}[b]{lcc}\hline
      & UE-TAP & SO-TAP  \\ \hline
      Total Travel Time & 1 & $3/4$ \\
      Unfairness & 1 & 2 \\ \hline
    \end{tabular}
	\end{minipage}\hfill
	\caption{A two-edge Pigou network example to illustrate the fairness and efficiency properties of the SO-TAP and UE-TAP optimization problems. For a demand of one, the UE-TAP solution routes all the flow on edge $e_2$ resulting in a fair solution but a total travel time of one. On the other hand, the SO-TAP solution routes $0.5$ units of flow on both edges, resulting in an efficient solution with the minimum total travel time, but an unfair solution with an unfairness of two.} 
	\label{fig:toy}
  \end{figure}

\subsection{$\beta$-Fair System Optimum} \label{sec:beta-so}

\ifarxiv
Our focus in this work is in solving the following problem where we impose an upper bound on the maximum allowable level of unfairness in the network. In particular, for an unfairness parameter $\beta \in [1, \infty)$, any feasible path flow $\f$ must satisfy $U(\f) \leq \beta$. To trade-off between user fairness and system efficiency, we consider the following $\beta$-fair System Optimum (\fso) problem.
\else
To trade-off between user fairness and system efficiency, we consider the following $\beta$-fair System Optimum (\fso) problem, wherein we impose an upper bound $\beta \in [1, \infty)$ on the maximum allowable unfairness in the network, i.e., $U(\mathbf{x}) \leq \beta$ for a traffic assignment $\x$.
\fi

\begin{definition}[Program for $\beta$-Fair System Optimum]
\begin{mini!}|s|[2]<b>
	{\f }{\sum_{e \in E} x_e t_e(x_e), \label{eq:efficient-Obj}}
	{\label{eq:Example3}}
	{}
	\addConstraint{~\eqref{eq:edge-constraint}}{-\eqref{eq:nonnegativity-constraints},}
	\addConstraint{U(\f)}{ \leq \beta \label{eq:unfairness-constraints}.}
\end{mini!}
\end{definition}

Note that without the unfairness Constraints~\eqref{eq:unfairness-constraints} (or when $\beta=\infty$), the above problem exactly coincides with SO-TAP. Furthermore, the \fso problem is always feasible for any $\beta \in [1, \infty)$, since a solution to UE-TAP exists and achieves an unfairness of $\beta = 1$.

We also note that the difference between the \fso and CSO problems is in the unfairness Constraints~\eqref{eq:unfairness-constraints}. While the \fso problem explicitly imposes an upper limit on the ratio of travel times on positive paths, the CSO problem imposes normal unfairness constraints for each path $P \in \mathcal{P}_k$ and any commodity $k \in K$ of the form $\eta_P \leq \phi \min_{P^* \in \mathcal{P}_k} \eta_{P^*}$ for some normal unfairness parameter $\phi\geq 1$. That is, the CSO problem minimizes the total travel time subject to flow conservation constraints over the set of paths with a normal unfairness level of at most~$\phi$. \ifarxiv The authors of \cite{so-routing-seminal} use normal unfairness, which is a fixed quantity, as a proxy to limit the maximum possible ratio of user travel times. Note that the experienced user travel times accounts for costs that vary according to a traffic assignment unlike normal unfairness. \else The authors of \cite{so-routing-seminal} use normal unfairness, which is a fixed quantity, as a proxy to limit the ratio of user travel times, which vary according to a traffic assignment. \fi

The optimal solution of the \fso problem corresponds to the highest achievable system efficiency whilst meeting unfairness constraints. However, solving \fso directly is generally intractable as the unfairness Constraints~\eqref{eq:unfairness-constraints} are non-convex if the travel time function is non-linear. Moreover, since the unfairness metric studied in this work accounts for user costs that vary according to a traffic assignment, unlike normal unfairness that is an apriori fixed quantity, the NP-hardness of the CSO problem \cite{so-routing-seminal} suggests the computational hardness of \fso \cite{basu2017reconciling,ANGELELLI2020105016}.

Finally, we mention that we consider a setting wherein the travel demand is time invariant and fractional user flows are allowed, as is standard in the traffic routing literature. Also, for notational simplicity, we consider for now a model where all users are homogeneous, i.e., they have an identical value of time $v$, and present an extension of our pricing result to the setting of heterogeneous users in Section~\ref{sec:pricing-heterogeneous}. 

\section{A Method for $\beta$-Fair System Optimum} \label{sec:itap-main}
In this section, we develop a computationally-efficient method for solving \fso with edge-based unfairness constraints, to achieve a traffic assignment with a low total travel time, whose level of unfairness is at most $\beta$. In particular, we propose a new formulation of TAP, which we term interpolated TAP (or I-TAP), wherein the objective function linearly interpolates between the objectives of UE-TAP and SO-TAP. Our main insight is that the UE solution achieves a high level of fairness, whereas the SO solution achieves a low total travel time, and we wish to get the best of both worlds---high level of fairness at a low total travel time.


In this section, we describe the I-TAP method and evaluate its efficacy for the \fso problem by addressing three key concerns regarding the solution efficiency, feasibility and computational tractability. In particular, we establish a relationship between I-TAP and \fso through theoretical bounds on the inefficiency ratio (Section~\ref{sec:eff-properties}) and its optimality for two-edge Pigou networks (Section~\ref{sec:optimality}). We also establish the feasibility of I-TAP for \fso by finding the range of values of the interpolation parameter such that the unfairness of the optimal I-TAP solution is guaranteed to be less than $\beta$ (Section~\ref{sec:eff-properties}). Finally, we present an equivalence between I-TAP and UE-TAP to show that I-TAP can be computed efficiently (Section~\ref{sec:compTractability}). These results indicate that we can approximate \fso as an unconstrained traffic assignment problem that can be solved quickly. We also mention that we perform a sensitivity analysis to establish the continuity of the optimal traffic assignment and its total travel time in the interpolation parameter of I-TAP in \ifarxiv Section~\ref{sec:solution-properties}\else Appendix B\fi.




\subsection{Interpolated Traffic Assignment} \label{sec:algos}

We provide a formal definition of interpolated TAP:

\begin{definition} [I-TAP] \label{def:interpolated-TAP}
For a convex combination parameter $\alpha \in [0, 1]$, the interpolated traffic assignment problem, denoted as I-TAP$_{\alpha}$, is given by:
\begin{mini!}|s|[2]<b>
	{\f}{\objitap(\x) := \alpha \objso(\x) + (1-\alpha) \objue(\x), \label{eq:interpolated-TAP-Obj}}
	{}
	{}
	\addConstraint{~\eqref{eq:edge-constraint}}{-\eqref{eq:nonnegativity-constraints}. \label{eq:interpolated-TAP-all-constraints}}
\end{mini!}
\end{definition}

A few comments are in order. First, it is clear that I-TAP$_0$ and I-TAP$_1$ correspond to UE-TAP and SO-TAP, respectively. Next, under the assumption that the travel time functions are strictly convex, we observe that for any $\alpha\in [0,1]$ the program I-TAP$_\alpha$ 
\ifarxiv 
has a strictly convex objective with linear constraints, and so I-TAP$_\alpha$ 
\fi 
is a convex optimization problem with a unique edge flow solution~$\x(\alpha)$.

For numerical implementation purposes, we propose a \emph{dense sampling} procedure to compute a solution for \fso with a low total travel time while guaranteeing a $\beta$ bound on unfairness.
\ifarxiv
\paragraph{Algorithm for Computation of Optimal Interpolation Parameter} 
To compute a good solution for \fso, we evaluate the optimal solution $\f(\alpha)$ of I-TAP$_\alpha$ (with corresponding edge flows $\x(\alpha)$) for $\alpha$ taken from a finite set $\A_s:=\{0,s,2s,\ldots,1\}$ for some step size $s\in (0,1)$. That is, in $O(\frac{1}{s})$ computations of I-TAP, we can return the path flow $\f(\alpha^*)$ (with unique edge decomposition $\mathbf{x}(\alpha^*)$), for some 
$\alpha^* \in \A_s$, with the lowest total travel time that is at most $\beta$-unfair, i.e., $U(\mathbf{f}(\alpha^*)) \leq \beta$, and the value $\objso(\x(\alpha^*))$ is minimized.
\else
In particular, to compute a good solution for \fso, we evaluate the optimal solution $\x(\alpha)$ of I-TAP$_\alpha$ for $\alpha$ taken from a finite set $\A_s:=\{0,s,2s,\ldots,1\}$ for some step size $s\in (0,1)$. That is, in $O(\frac{1}{s})$ computations of I-TAP, we can return the traffic assignment $\mathbf{x}(\alpha^*)$, for some 
$\alpha^* \in \A_s$, with the lowest total travel time that is at most $\beta$-unfair, i.e.,  $ U(\mathbf{x}(\alpha^*)) \leq \beta$, and the value $\objso(\x(\alpha^*))$ is minimized.
\fi

We observe experimentally (Section~\ref{sec:numerical-results}) that this method of computing the I-TAP solution \ifarxiv for a finite set of convex combination parameters \fi achieves a good solution for \fso in terms of fairness and total travel time. We note here that our approach also naturally extends to other unfairness notions wherein the user equilibrium achieves the highest possible level of fairness, while the system optimum achieves the lowest total travel times (see \ifarxiv Appendix~\ref{apdx:general-unfairness}\else Appendix F\fi). Finally, we restrict $\alpha$ to lie in the finite set $\A_s$ since the exact functional form of the optimal solution $\f(\alpha)$ (with edge flow $\x(\alpha)$), and thus the unfairness $U(\f(\alpha))$ and the total travel time $\objso(\f(\alpha))$ functions, in $\alpha$ is not directly known, though we show that $\x(\alpha)$ and $\objso(\x(\alpha))$ are continuous 
in $\alpha$ in Section~\ref{sec:solution-properties}.
\ifarxiv \else Finally, we restrict $\alpha$ to lie in the finite set $\A_s$ since the exact functional form of the optimal solution $\x(\alpha)$, and thus the unfairness $U(\x(\alpha))$ and the total travel time $\objso(\x(\alpha))$ functions, in $\alpha$ is not directly known, though we show that  $\x(\alpha)$ and $\objso(\x(\alpha))$ are continuous 
in $\alpha$ in Appendix B.

\fi 

We also test (Section~\ref{sec:numerical-experiments}) an alternative approach to I-TAP, which instead of taking a convex combination of the SO-TAP and UE-TAP objectives, interpolates between their solutions. That is, we first compute the optimal solutions of UE-TAP ($\x^{UE}$) and SO-TAP ($\x^{SO}$), and return the value $(1-\gamma) \x^{UE} + \gamma \x^{SO}$ for $\gamma \in [0,1]$. While this Interpolated Solution (I-Solution) method only requires two traffic assignment computations as compared to $O(\frac{1}{s})$ computations of the I-TAP method, it leads to poor performance in comparison with the I-TAP method (see Section~\ref{sec:numerical-experiments}) and does not induce a natural marginal-cost pricing scheme, as I-TAP does (see Section~\ref{sec:main-pricing}). Thus, we focus our attention on I-TAP for this and the next sections.

\subsection{Solution Efficiency and Fairness of I-TAP} \label{sec:eff-properties}

In this section, we study the influence of the convex combination parameter $\alpha$ of I-TAP on the efficiency and fairness of the optimal solution $\f(\alpha)$ (and edge flow $\x(\alpha)$). In particular, we characterize (i) an upper bound on the inefficiency ratio as we vary $\alpha$, and (ii) a range of values of $\alpha$ that are guaranteed to achieve a specified level of unfairness $\beta$ for any optimal solution $\f(\alpha)$.

\ifarxiv
We first evaluate the performance of I-TAP by establishing an upper bound on the inefficiency ratio as a function of $\alpha$. Theorem~\ref{thm:eff-upper-bound} shows that the upper bound of the inefficiency ratio $\rho(\x(\alpha))$ of the optimal traffic assignment of I-TAP$_{\alpha}$ is the minimum between two terms: (i) the Price of Anarchy (PoA) $\Bar{\rho}$, and (ii) a more elaborate bound that is monotonically non-increasing in $\alpha$.
\else
We first evaluate the performance of I-TAP by establishing an upper bound on the inefficiency ratio of the optimal solution of I-TAP$_{\alpha}$ as a function of $\alpha$. Theorem~\ref{thm:eff-upper-bound} shows that this bound is a minimum between (i) the PoA, which we denote as~$\Bar{\rho}$, and (ii) a more elaborate bound that is monotonically non-increasing in $\alpha$.
\fi

\begin{theorem} [I-TAP Solution Efficiency] \label{thm:eff-upper-bound}
For any $\alpha \in (0, 1)$, let $\x(\alpha)$ be the optimal solution to I-TAP$_{\alpha}$. Then, the inefficiency ratio
\[\rho(\x(\alpha))\!\leq\!\min\!\left\{ \Bar{\rho},1\!+\! \frac{1-\alpha}{\alpha}\!\cdot\! \frac{\objue(\x(1))\!-\!\objue(\x(0))}{\objso(\x(1))} \right\}.\]
\end{theorem}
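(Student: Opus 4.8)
The plan is to prove the two arguments of the minimum separately, each by exploiting the optimality of $\x(\alpha)$ for I-TAP$_\alpha$ against a carefully chosen competing feasible flow, and then to take whichever bound is smaller. Throughout I would write $\x(0)=\x^{UE}$ and $\x(1)=\x^{SO}$ for the two endpoints, recall that $\Bar{\rho}=\objso(\x(0))/\objso(\x(1))$, and use that $\objso(\x(1))>0$ (since each $t_e>0$) together with $\alpha>0$ so that the normalizations below are legitimate.

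For the elaborate bound I would start from the fact that $\x(\alpha)$ minimizes $\objitap=\alpha\objso+(1-\alpha)\objue$ over all feasible flows and test this optimality against the competitor $\x(1)$:
\[\alpha\,\objso(\x(\alpha))+(1-\alpha)\,\objue(\x(\alpha))\ \leq\ \alpha\,\objso(\x(1))+(1-\alpha)\,\objue(\x(1)).\]
Rearranging isolates $\alpha\,\objso(\x(\alpha))$ on the left at the cost of the slack term $(1-\alpha)\bigl(\objue(\x(1))-\objue(\x(\alpha))\bigr)$. To control this term I would invoke that $\x(0)$ is the global minimizer of $\objue$, so $\objue(\x(\alpha))\geq\objue(\x(0))$, which upgrades the estimate to one involving only the endpoint quantities $\objue(\x(1))-\objue(\x(0))$. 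Dividing first by $\alpha$ and then by $\objso(\x(1))$ yields precisely the second term of the claimed minimum.

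For the PoA bound I would run the identical variational argument, but now test the optimality of $\x(\alpha)$ against the \emph{other} endpoint $\x(0)$:
\[\alpha\,\objso(\x(\alpha))+(1-\alpha)\,\objue(\x(\alpha))\ \leq\ \alpha\,\objso(\x(0))+(1-\alpha)\,\objue(\x(0)).\]
Since $\x(0)$ minimizes $\objue$, the difference $\objue(\x(0))-\objue(\x(\alpha))$ is nonpositive, so the $(1-\alpha)$ slack can only tighten the inequality; dividing by $\alpha>0$ then gives $\objso(\x(\alpha))\leq\objso(\x(0))=\objso(\x^{UE})$. Dividing through by $\objso(\x^{SO})$ converts this into $\rho(\x(\alpha))\leq\Bar{\rho}$. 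Taking the smaller of the two bounds establishes the theorem.

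I do not anticipate a serious obstacle: each inequality is a one-line consequence of the defining optimality of the interpolated program combined with the extremal property of whichever endpoint is used as competitor. The only points requiring care are the sign bookkeeping—ensuring that in each case the $(1-\alpha)\objue$ slack is pushed in the favorable direction—and confirming the denominators $\alpha$ and $\objso(\x(1))$ are strictly positive so the normalizations are valid. If one instead wanted the full monotonicity of $\alpha\mapsto\objso(\x(\alpha))$, a comparative-statics argument summing the two optimality inequalities for parameters $\alpha_1<\alpha_2$ also works, but it is unnecessary here since comparing directly against the single endpoint $\x(0)$ already delivers the $\Bar{\rho}$ bound.
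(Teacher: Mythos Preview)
Your proposal is correct and essentially identical to the paper's own proof: both parts test the optimality of $\x(\alpha)$ for I-TAP$_\alpha$ against the endpoint competitors $\x(0)$ and $\x(1)$, respectively, and then invoke $\objue(\x(0))\leq\objue(\x(\alpha))$ to push the $(1-\alpha)$ slack in the favorable direction before normalizing. The only cosmetic difference is ordering---the paper treats the $\Bar{\rho}$ bound first---and your remarks on positivity of the denominators and on the unneeded monotonicity argument are sound additions.
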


\begin{proof}
To prove this result, we show that (i) $\rho(\x(\alpha)) \leq \Bar{\rho}$ and (ii) $\rho(\x(\alpha)) \leq 1+ \frac{1-\alpha}{\alpha} \frac{\objue(\x(1)) - \objue(\x(0))}{\objso(\x(1))}$.

\ifarxiv
We first prove (i) by establishing that $\objso(\x(\alpha)) \leq \objso(\x(0))$ for all $\alpha \in [0, 1]$. Since $\mathbf{x}(\alpha)$ is the optimal solution to I-TAP$_{\alpha}$
\ifarxiv
\begin{align*}
    \objitap(\x(\alpha)) & = \alpha\objso(\x(\alpha)) + (1-\alpha)\objue(\x(\alpha))\\
    & \leq \objitap(\x(0)) \\
    & =  \alpha\objso(\x(0)) + (1-\alpha)\objue(\x(0)).
\end{align*}
\else
\begin{align*}
    \objitap(\x(\alpha)) & = \alpha\objso(\x(\alpha)) + (1-\alpha)\objue(\x(\alpha))\\
    & \leq \objitap(\x(0)) =  \alpha\objso(\x(0)) + (1-\alpha)\objue(\x(0)),
\end{align*}
\fi
\else
We first prove (i) by establishing that $\objso(\x(\alpha)) \leq \objso(\x(0))$ for all $\alpha \in [0, 1]$. Since $\objitap(\x(\alpha)) \leq \objitap(\x(0))$ it follows that $\alpha\objso(\x(\alpha)) + (1-\alpha)\objue(\x(\alpha)) \leq  \alpha\objso(\x(0)) + (1-\alpha)\objue(\x(0))$.
\fi
Next from the UE-TAP objective, it is the case that 
\ifarxiv
\[\objue(\x(0))\leq \objue(\x(\alpha)).\] 
\else
$\objue(\x(0))\leq \objue(\x(\alpha)).$
\fi
From these two inequalities, it follows that $\objso(\mathbf{x}(\alpha)) \leq \objso(\mathbf{x}(0))$ for all $\alpha \in [0, 1]$. Dividing both sides of the inequality by the minimum possible system travel time $\objso(\x(1))$ establishes that the inefficiency ratio $\rho(\x(\alpha)) \leq \Bar{\rho}$ proving claim (i).

\ifarxiv
Now, to prove claim (ii), we note that $\objitap(\x(\alpha)) \leq \objitap(\x(1))$ since $\mathbf{x}(\alpha)$ is an optimal solution to I-TAP$_{\alpha}$. It thus follows that
\else
To prove claim (ii), from $\objitap(\x(\alpha)) \leq \objitap(\x(1))$ it follows that
\fi
\ifarxiv
\begin{align*}
    &\alpha \objso(\x(\alpha)) + (1-\alpha) \objue(\x(\alpha)) \leq \alpha \objso(\x(1)) + (1-\alpha) \objue(\x(1)).
\end{align*}
\else
$\alpha \objso(\x(\alpha)) + (1-\alpha) \objue(\x(\alpha)) \leq \alpha \objso(\x(1)) + (1-\alpha) \objue(\x(1))$.
\fi
Dividing this expression by $\alpha \objso(\x(1))$ and rearranging gives \ifarxiv us that \fi
\ifarxiv
\begin{align*}
    \frac{\objso(\x(\alpha))}{\objso(\x(1))} &\leq 1+ \frac{1-\alpha}{\alpha}\cdot \frac{\objue(\x(1)) - \objue(\x(\alpha))}{\objso(\x(1))} \\
    &\leq 1+ \frac{1-\alpha}{\alpha}\cdot \frac{\objue(\x(1)) - \objue(\x(0))}{\objso(\x(1))},
\end{align*}
where the second inequality follows from the fact that $\objue(\x(0)) \leq \objue(\x(\alpha))$ for any $\alpha \in [0, 1]$. This proves claim (ii).
\else
$\frac{\objso(\x(\alpha))}{\objso(\x(1))} \leq 1+ \frac{1-\alpha}{\alpha}\cdot \frac{\objue(\x(1)) - \objue(\x(0))}{\objso(\x(1))}$,
where we used that $\objue(\x(0)) \leq \objue(\x(\alpha))$ for any $\alpha$. This proves claim (ii).
\fi
\end{proof}
Theorem~\ref{thm:eff-upper-bound} establishes that, even in the worst case, the ratio between the total travel time of the edge flow $\x(\alpha)$ \ifarxiv of I-TAP$_{\alpha}$ \fi and that of the system optimal solution is at most the PoA. This result is not guaranteed to hold for other state-of-the-art CSO algorithms, \ifarxiv \else e.g., the algorithm in~\cite{so-routing-seminal} (see Section~\ref{sec:numerical-experiments}), \fi which may output traffic assignments with much larger total travel times than the UE solution. \ifarxiv For instance, in Section~\ref{sec:numerical-experiments} we show through numerical experiments that the algorithm proposed by Jahn et al. \cite{so-routing-seminal} achieves total travel times much higher than that of the user equilibrium solution for certain ranges of unfairness. \fi Further, Theorem~\ref{thm:eff-upper-bound} shows that the upper bound on the inefficiency ratio becomes closer to one as the objective $\objitap$ gets closer to $\objso$.

\ifarxiv 
The bound on the inefficiency ratio obtained from Theorem~\ref{thm:eff-upper-bound} is depicted in Figure~\ref{fig:ineff_ratio}. Note  that in order to obtain the value of the convex combination parameter $\alpha^*$ at which the two upper bounds on the inefficiency ratio are equal we need to impose the constraint $\Bar{\rho} = \frac{1-\alpha^*}{\alpha^*}\cdot \frac{\objue(\x(1)) - \objue(\x(0))}{\objso(\x(1))}$, which yields that $\alpha^* = \frac{\objue(\x(1)) - \objue(\x(0))}{\objso(\x(0)) + \objue(\x(1)) - \objue(\x(0))}$.

\begin{figure}[tbh!]
      \centering
      \includegraphics[width=80mm]{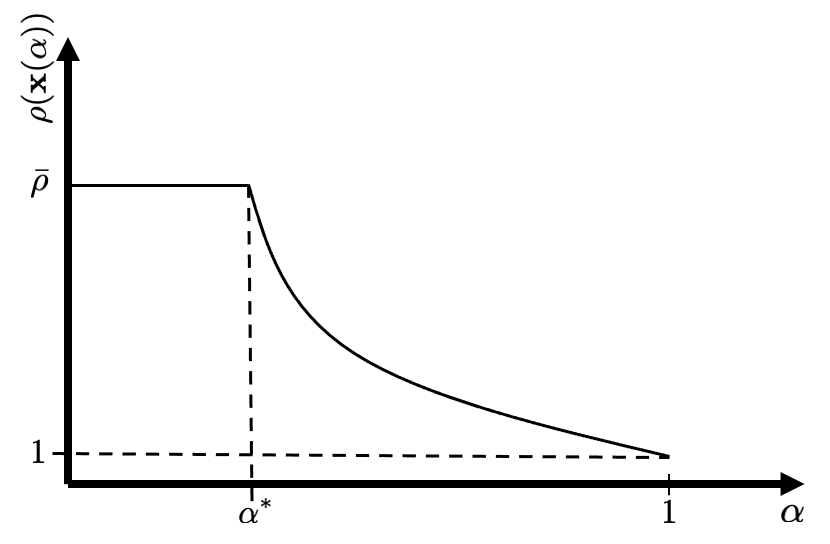}
      \vspace{-5pt}
      \caption{\small \sf Upper bound on the inefficiency ratio of the optimal solution $\x(\alpha)$ of the problem I-TAP$_{\alpha}$ for $\alpha \in [0, 1]$}
      \label{fig:ineff_ratio} 
   \end{figure}
\fi

\ifarxiv Having established a worst case performance guarantee for I-TAP in terms of the inefficiency ratio, we \else We \fi now establish a range of values of $\alpha$ at which any optimal solution $\f(\alpha)$ of I-TAP$_{\alpha}$ is guaranteed to attain a $\beta$ bound on \ifarxiv unfairness. In particular, we specialize the following result to polynomial travel time functions, e.g., the commonly used BPR function \cite{Sheffi1985}. \else unfairness for polynomial travel time functions, e.g., the commonly used BPR function \cite{Sheffi1985}. \fi 

\begin{theorem} [Feasibility of I-TAP for \fso] \label{thm:poly-ttf-result}
Suppose that the largest degree of the polynomial travel time functions $t_e(x_e)$ is $m$ for some $e \in E$. Then, the unfairness of any optimal solution $\f(\alpha)$ of I-TAP$_{\alpha}$ is upper bounded by $\beta$, i.e., $U(\f(\alpha)) \leq \beta$, for any $\alpha \leq \frac{\beta - 1}{m}$.
\end{theorem}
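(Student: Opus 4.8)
The plan is to exploit the first-order optimality (KKT) conditions of I-TAP$_\alpha$, which take the form of a Wardrop-type equilibrium for a \emph{modified} edge cost, and then to control the gap between this modified cost and the true travel time using the polynomial structure. Differentiating the separable objective $\objitap(\x) = \alpha \objso(\x) + (1-\alpha)\objue(\x)$ with respect to the flow on edge $e$ gives the effective edge cost
\[
\tilde{t}_e(x_e) := t_e(x_e) + \alpha \, x_e \, t_e'(x_e),
\]
using $\frac{\partial}{\partial x_e}[x_e t_e(x_e)] = t_e(x_e) + x_e t_e'(x_e)$ and $\frac{\partial}{\partial x_e}\int_0^{x_e} t_e(y)\,\dd{y} = t_e(x_e)$. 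Working in the equivalent node-arc (edge-flow) formulation, I would write the KKT conditions in terms of commodity-specific node potentials $\pi^k_v$: for every edge $e=(u,v)$ one has $\tilde{t}_e(x_e) \geq \pi^k_v - \pi^k_u$, with equality whenever the commodity flow $x^k_e > 0$. Summing this equality along any positive path $P \in \P^+_k(\f(\alpha))$ (whose edges all carry strictly positive commodity-$k$ flow) makes the potentials telescope, so that $\sum_{e \in P} \tilde{t}_e(x_e)$ equals the potential difference between the destination and origin of commodity $k$, a quantity independent of the chosen positive path. Hence for any two positive paths $Q,R \in \P^+_k(\f(\alpha))$ of the same commodity,
\[
\sum_{e \in Q} \tilde{t}_e(x_e) = \sum_{e \in R} \tilde{t}_e(x_e).
\]

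The second step is a pointwise comparison between $\tilde{t}_e$ and $t_e$ using the polynomial assumption. Writing $t_e(x) = \sum_{j=0}^m a_{e,j} x^j$ with non-negative coefficients (as for the BPR function), I would bound $x\,t_e'(x) = \sum_{j=1}^m j\,a_{e,j}\,x^j \leq m \sum_{j=0}^m a_{e,j}\,x^j = m\,t_e(x)$, where the inequality uses $j \leq m$ together with $a_{e,j}, x \geq 0$. Since $\alpha, x_e, t_e' \geq 0$, the definition of $\tilde{t}_e$ then yields the two-sided bound $t_e(x_e) \leq \tilde{t}_e(x_e) \leq (1+\alpha m)\,t_e(x_e)$ on each edge, and summing over the edges of any path $P$ gives
\[
t_P(\x) \leq \sum_{e \in P}\tilde{t}_e(x_e) \leq (1+\alpha m)\,t_P(\x).
\]

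Finally I would combine the two ingredients. For positive paths $Q,R \in \P^+_k(\f(\alpha))$, the equalization of effective costs together with the two-sided bound yields $t_Q(\x) \leq \sum_{e\in Q}\tilde{t}_e(x_e) = \sum_{e\in R}\tilde{t}_e(x_e) \leq (1+\alpha m)\,t_R(\x)$, so that $t_Q(\x)/t_R(\x) \leq 1 + \alpha m$ (using $t_R(\x) > 0$). Taking the maximum over all commodities and all pairs of positive paths gives $U(\f(\alpha)) \leq 1 + \alpha m$, which is at most $\beta$ precisely when $\alpha \leq \frac{\beta-1}{m}$, as claimed. The main obstacle is the first step: justifying that the effective travel times are equalized across all \emph{positive} paths, which are defined through the commodity edge flows rather than through a particular path decomposition. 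This is exactly what makes the bound apply to the edge-flow-based unfairness measure $U$, and it requires the potential/telescoping argument above rather than the more familiar statement that only decomposition-used paths carry equal effective cost; the remaining polynomial estimate is routine.
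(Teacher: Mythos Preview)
Your proposal is correct and follows the same two-step skeleton as the paper: first equalize the \emph{effective} edge costs $\tilde t_e(x_e)=t_e(x_e)+\alpha x_e t_e'(x_e)$ across all positive paths of each commodity, and then sandwich $\tilde t_e$ between $t_e$ and $(1+\alpha m)t_e$ using the polynomial structure to conclude $U(\f(\alpha))\le 1+\alpha m\le\beta$. The second step and the final combination match the paper essentially verbatim (the paper also assumes non-negative polynomial coefficients, as you do).

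Where you differ is the justification of the first step. The paper invokes a separate lemma (Lemma~\ref{lem:equalityCosts}, due to \cite{basu2017reconciling}) which, by a combinatorial path-rearrangement/contradiction argument, upgrades the standard ``equal cost on used paths'' property to ``equal cost on all \emph{positive} paths.'' You instead work in the node--arc formulation and read off commodity node potentials $\pi^k$ from the KKT conditions, so that $\tilde t_e(x_e)\ge \pi^k_v-\pi^k_u$ with equality whenever $x^k_e>0$; telescoping along any positive path then forces all positive paths of commodity $k$ to share the same effective cost $\pi^k_{t_k}-\pi^k_{s_k}$. This dual-potential route is arguably more direct and self-contained (it avoids quoting an external lemma and makes transparent why the edge-based definition of positive paths is exactly what is needed), whereas the paper's lemma is stated more broadly for any tolled equilibrium and so is reusable elsewhere (e.g., in the pricing section). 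Both arguments rely on convexity plus linear constraints to guarantee KKT/equilibrium optimality, so neither is more general in scope for the present theorem.
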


The proof of Theorem~\ref{thm:poly-ttf-result} leverages the fact that at an equilibrium flow, under a given vector of tolls, the travel cost on all positive paths is equal for all users in any commodity $k$, as was established in~\cite[Lemma 2]{basu2017reconciling}. We provide a proof of this claim here for completeness.

\begin{lemma} [Equality of Travel Costs on Positive Paths~\cite{basu2017reconciling}] \label{lem:equalityCosts}
Let the cost on an edge be given by $c_e(x_e, \tau_e) = t_e(x_e) + \tau_e$, where the value-of-time $v$ of users is normalized to one. Then, under any equilibrium flow induced by the edge tolls $\boldsymbol{\tau}$, the total travel cost of any two positive paths $P, Q$ are equal, i.e., $\sum_{e \in P} c_e(x_e, \alpha) = \sum_{e \in Q} c_e(x_e, \alpha)$. 
\end{lemma}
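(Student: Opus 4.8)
The plan is to characterize the tolled equilibrium flow through the optimality (KKT) conditions of the convex program it solves, and then to exploit the telescoping structure of path costs along positive paths. Since the value of time is normalized to one, the edge cost $c_e(x_e,\tau_e)=t_e(x_e)+\tau_e$ is just a shifted travel-time function, so the equilibrium flow induced by $\boldsymbol{\tau}$ is exactly the minimizer of the tolled Beckmann-type objective $\sum_{e\in E}\big(\int_0^{x_e} t_e(y)\dd{y}+\tau_e x_e\big)$ subject to the flow-conservation Constraints~\eqref{eq:edge-constraint}--\eqref{eq:nonnegativity-constraints}. This is a convex program with linear constraints, so its KKT conditions are both necessary and sufficient to describe the equilibrium, and I would work directly from them.

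First I would attach a dual multiplier (node potential) $\pi_v^k$ to the flow-conservation constraint at each node $v$ for each commodity $k$, together with a nonnegativity multiplier for each $x_e^k\ge 0$. Writing the stationarity condition in terms of the commodity-disaggregated edge flows, complementary slackness yields the standard Wardrop characterization: for every commodity $k$ and every edge $e=(u,v)$ carrying strictly positive commodity flow, i.e.\ $x_e^k>0$, the cost is tight, $c_e(x_e,\tau_e)=\pi_v^k-\pi_u^k$, whereas an unused edge need only satisfy the inequality $c_e(x_e,\tau_e)\ge\pi_v^k-\pi_u^k$. The crucial observation is that the positive-path definition is precisely the statement that every edge of the path meets this tight equality.

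Next I would telescope. Let $P$ be a positive path for commodity $k$ from origin $o_k$ to destination $d_k$; by definition every edge $e\in P$ has $x_e^k>0$ and hence satisfies the tight condition above. Summing these equalities along $P$ gives $\sum_{e\in P}c_e(x_e,\tau_e)=\pi_{d_k}^k-\pi_{o_k}^k$, since all intermediate potentials cancel. The right-hand side depends only on $k$, not on the particular positive path, so applying the same argument to a second positive path $Q$ of the same commodity yields $\sum_{e\in Q}c_e(x_e,\tau_e)=\pi_{d_k}^k-\pi_{o_k}^k$ as well, and the two path costs coincide, which is the claim.

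The main obstacle is care in setting up the optimality conditions at the level of the commodity-specific edge flows $\x^k$ rather than the aggregate $\x$, and reconciling the positive-path definition (positivity of every $x_e^k$ on the path) with the edge-by-edge complementary-slackness condition---this is what makes the telescoping legitimate, and is exactly why the positive-path notion, rather than a decomposition-dependent ``path carries flow'' notion, is the right object. The non-uniqueness of path decompositions is sidestepped entirely, because the node potentials are defined directly from the (unique) edge flows. A purely combinatorial exchange argument, rerouting an infinitesimal amount of flow from a costlier to a cheaper positive path, is also conceivable, but it must contend with the fact that a positive path need not carry positive flow in a given decomposition, so I would prefer the potential-based argument.
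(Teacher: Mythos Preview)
Your proposal is correct and takes a genuinely different route from the paper. The paper works in the \emph{path-based} formulation: the KKT conditions of UE-TAP directly yield that any two paths carrying strictly positive flow in a given decomposition have equal cost $\mu_k$, but this does not immediately cover a positive path $R$ with $\x_R=0$. To close this gap, the paper runs a combinatorial rearrangement: for each edge $e_i$ of $R$ (which has $x_{e_i}^k>0$ by positivity) it picks a used path $P_i$ through $e_i$, sums the costs of the $P_i$'s to get $r\mu_k$, and then re-splices the prefixes and suffixes of the $P_i$'s into $r-1$ new paths $P_i'$ plus $R$ itself; since each $P_i'$ has cost at least $\mu_k$, the cost of $R$ is forced down to at most $\mu_k$, contradicting the assumed strict inequality.

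By contrast, you pass to the \emph{edge-based} formulation with commodity-specific flows $x_e^k$ and node potentials $\pi_v^k$, so that complementary slackness gives $c_e=\pi_v^k-\pi_u^k$ precisely when $x_e^k>0$. Since the positive-path definition is exactly ``$x_e^k>0$ on every edge,'' the telescoping sum $\sum_{e\in P}c_e=\pi_{d_k}^k-\pi_{o_k}^k$ is immediate, with no need for the splicing argument or any reference to a path decomposition. Your approach is shorter and makes the role of the positive-path definition transparent; the paper's argument has the virtue of staying entirely within the path-flow framework used elsewhere in the text and of avoiding the introduction of node potentials, at the cost of the extra combinatorial step.
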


\begin{proof}
To prove this claim, we consider a network with travel time functions $c_e(x_e, \tau_e) = t_e(x_e) + \tau_e$ for all $e \in E$, which is valid since the travel time function is polynomial and so $c_e(x_e, \tau_e)$ is still differentiable, convex and monotonically increasing in $x_e$. Next, the equilibrium solution under a vector of tolls $\boldsymbol{\tau}$ is given by the UE-TAP with objective $\sum_{e \in E}\int_{0}^{x_e} c_e(y, \alpha) \dd{y}$, where $c_e(x_e, \tau_e) = t_e(x_e) + \tau_e$. By the first order necessary and sufficient conditions of this UE-TAP, for any two positive flow paths $P, Q$ between the same O-D pair $k$, it holds that $\sum_{e \in P} c_e(x_e, \alpha) = \sum_{e \in Q} c_e(x_e, \alpha) = \mu_k$ for some $\mu_k$. We now show that any positive path $R$ between the same O-D pair that is not a path with positive flow, i.e., $\x_R = 0$, also has the same travel cost as any path $P$ with positive flow, i.e., $\sum_{e \in R} c_e(x_e, \alpha) = \sum_{e \in P} c_e(x_e, \alpha) = \mu_k$.

Suppose by contradiction that there is a path $R$ for a commodity $k$ whose travel cost is not equal to $\mu_k$. From the equilibrium condition of UE-TAP it must hold that $\sum_{e \in R} c_e(x_e, \tau_e)>\mu_k$. Next, let $e_1, \ldots, e_r$ denote the edges in path $R$ in the order of traversal, and for each edge $e_i \in R$, consider a path used by commodity $k$ using that edge, i.e., a path $P_i = R_i^s-e_i-R_i^t$ where $R_i^s$ is the component of that path preceding edge $e_i$ and $R_i^t$ is the component of that path following edge $e_i$. Observe that the sum of the costs of these used paths $P_i$ is given by $\sum_{i = 1}^r \sum_{e \in P_i} c_e(x_e, \tau_e) = r \mu_k$. Next, considering the paths $P_1', \ldots, P_{r-1}'$, where $P_1' = R_{i+1}^s-R_i^t$, we obtain that
\begin{align*}
    r \mu_k = \sum_{i = 1}^r \sum_{e \in P_i} c_e(x_e, \tau_e) = \sum_{i = 1}^{r-1} \sum_{e \in P_i'} c_e(x_e, \tau_e) + \sum_{e \in R} c_e(x_e, \tau_e).
\end{align*}
Since it holds from the KKT conditions of UE-TAP that $\sum_{e \in P_i'} c_e(x_e, \tau_e) \geq \mu_k$ for each $i \in \{1, \ldots, r-1 \}$, the above equality implies that $\sum_{e \in R} c_e(x_e, \tau_e) \leq \mu_k$, a contradiction. Thus, we must have that the total cost on all positive paths is exactly equal to $\mu_k$, thereby proving our claim.
\end{proof}
We now leverage Lemma~\ref{lem:equalityCosts} to complete the proof of Theorem~\ref{thm:poly-ttf-result}.

\begin{proof}
Without loss of generality, we normalize the value of time $v$ to $1$ and for notational simplicity denote $x_e = x_e(\alpha)$ and $\f = \f(\alpha)$. We now establish this result in two steps. 
First, we show that if $c_e(x_e, \alpha) = t_e(x_e) + \alpha x_e t_e'(x_e) \in [t_e(x_e), \beta t_e(x_e)]$ for all edges $e$ then the unfairness satisfies $U(\f) \leq \beta$. Next, we show for $\alpha \leq \frac{\beta - 1}{m}$ that the cost satisfies $c_e(x_e, \alpha) \in [t_e(x_e), \beta t_e(x_e)]$, which, together with the first claim, implies the result.

To prove the first claim, we consider a network with travel time functions $c_e(x_e, \alpha)$ for all $e \in E$, which is valid since the travel time function is polynomial and so $c_e(x_e, \alpha)$ is still differentiable, convex and monotonically increasing in $x_e$. 
Next, from Lemma~\ref{lem:equalityCosts} it holds that the total cost on any two positive paths $P$ and $Q$ are also equal, i.e., $\sum_{e \in P} c_e(x_e, \alpha) = \sum_{e \in Q} c_e(x_e, \alpha)$ since at an equilibrium flow the cost of any two used paths are equal. This result implies for any commodity $k$ and any two positive paths $P, Q$ that
\begin{align*}
    \sum_{e \in P} t_e(x_e) &\leq \sum_{e \in P} c_e(x_e, \alpha) 
    = \sum_{e \in Q} c_e(x_e, \alpha) \\
    &\leq \beta \sum_{e \in Q} t_e(x_e),
\end{align*}
i.e., the ratio of the travel times on positive paths can never exceed $\beta$. Thus, $U(\f) \leq \beta$.

\ifarxiv
Next, to prove the second claim, we note that since the travel time function is a polynomial function of degree $m$, we can let $t_{e}\left(x_{e}\right)=\sum_{i=0}^{m} \gamma_{i} x_{e}^{i}$.
Then, we have:
\begin{align*}
x_{e} t_{e}^{\prime}\left(x_{e}\right) &=x_{e} \sum_{i=1}^{m} i \gamma_{i} x_{e}^{i-1} 
=\sum_{i=1}^{m} i \gamma_{i} x_{e}^{i} \\
& \leq m \sum_{i=1}^{m} \gamma_{i} x_{e}^{i}  \leq m \sum_{i=0}^{m} \gamma_{i} x_{e}^{i}  \\
&=m t_{e}\left(x_{e}\right).
\end{align*}
\else
Next, to prove the second claim, we first observe that $x_{e} t_{e}^{\prime}(x_e) \leq m t_{e}\left(x_{e}\right)$ for all $e$ since the travel time functions are non-negative polynomials of degree at most $m$. Thus, it follows that $\alpha x_e t_e^{\prime}(x_e) \leq \alpha m t_{e}\left(x_{e}\right)$. Now, if we set $\alpha m t_{e}\left(x_{e}\right) \leq (\beta-1)t_e(x_e)$, then we have for any $\alpha \leq \frac{\beta - 1}{m}$ that the cost $c_e(x_e, \alpha) \in [t_e(x_e), \beta t_e(x_e)]$ and thus the resulting flow $\x$ has an unfairness of at most $\beta$.
\fi
\ifarxiv
Note that we used the fact that $\gamma_0 \geq 0$ in the second inequality, which follows since $t_e(x_e) \geq 0$ for all $x_e \in \mathbb{R}_{\geq 0}$. From the above inequalities, we note that $\alpha x_e t_e^{\prime}(x_e) \leq \alpha m t_{e}\left(x_{e}\right)$. Now, if we set $\alpha m t_{e}\left(x_{e}\right) \leq (\beta-1)t_e(x_e)$, then we have for any $\alpha \leq \frac{\beta - 1}{m}$ that the cost $c_e(x_e, \alpha) \in [t_e(x_e), \beta t_e(x_e)]$ and thus the resulting flow $\f$ has an unfairness of at most $\beta$.
\fi
\end{proof}
\ifarxiv Theorem~\ref{thm:poly-ttf-result} establishes a relation between the convex combination parameter $\alpha$ and the level of unfairness of any optimal path flow $\f(\alpha)$ when the edge travel-time functions are polynomial.
\fi



We can further show that the bound \ifarxiv obtained \fi in Theorem~\ref{thm:poly-ttf-result} is in fact tight by demonstrating an instance such that for any $\alpha > \frac{\beta-1}{m}$ the unfairness of the solution $\f(\alpha)$ of I-TAP$_{\alpha}$ is strictly greater than $\beta$. 

\begin{lemma} [Tightness of Unfairness Bound] \label{lem:unfairness-tight}
Suppose $\f(\alpha)$ is an optimal solution to I-TAP$_{\alpha}$ for any $\alpha \in [0, 1]$. Then, there exists a two-edge parallel network with polynomial travel time functions of degree at most $m$ such that for any $\alpha > \frac{\beta-1}{m}$, the unfairness $U(\f(\alpha)) > \beta$.
\end{lemma}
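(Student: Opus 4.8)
The plan is to exhibit a single explicit two-edge parallel instance and compute the I-TAP$_{\alpha}$ solution in closed form. Consider an origin $v_1$ and destination $v_2$ joined by two parallel edges carrying a single commodity of demand $d = 1$, and set the (degree-$\le m$) travel-time functions to be $t_1(x_1) = \beta$, a positive constant, and $t_2(x_2) = x_2^m$. The monomial $t_2$ is exactly the function that makes the inequality $x_e t_e'(x_e) \le m\, t_e(x_e)$ from Theorem~\ref{thm:poly-ttf-result} tight, while the flat edge $t_1$ carries no congestion externality; this decoupling is what should pin the unfairness to the factor $1 + \alpha m$.

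First I would recall that, as used in the proof of Theorem~\ref{thm:poly-ttf-result}, I-TAP$_{\alpha}$ is equivalent to a user-equilibrium problem on the same network whose edge costs are $c_e(x_e,\alpha) = t_e(x_e) + \alpha x_e t_e'(x_e)$. For this instance these evaluate to $c_1(x_1,\alpha) = \beta$ and $c_2(x_2,\alpha) = x_2^m(1 + \alpha m)$. Because the feasible set is $\{x_1 + x_2 = 1,\ x_1,x_2 \ge 0\}$, substituting $x_1 = 1 - x_2$ turns $\objitap$ into a strictly convex function of the scalar $x_2$ on $[0,1]$, so the optimal edge flow $\x(\alpha)$, and hence $U(\f(\alpha))$, is unique and well defined.

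The key step is to show that both edges carry strictly positive flow whenever $\alpha > \frac{\beta-1}{m}$, so that both single-edge paths are positive paths and $U$ is a genuine ratio of path travel times rather than the trivial value one. If all demand were placed on edge $2$, its marginal cost would be $c_2(1,\alpha) = 1 + \alpha m$, which for $\alpha > \frac{\beta-1}{m}$ strictly exceeds $\beta = c_1$; hence it is strictly beneficial to divert flow onto edge $1$, forcing $x_1, x_2 > 0$ at the optimum. Invoking Lemma~\ref{lem:equalityCosts} (equality of costs on positive paths) then gives $c_1 = c_2$, i.e. $\beta = x_2^m(1+\alpha m)$, so that $t_2(x_2) = x_2^m = \frac{\beta}{1+\alpha m}$ while $t_1 = \beta$. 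Since $1 + \alpha m > 1$, edge $1$ is the slower path and the unfairness equals
\[
U(\f(\alpha)) = \frac{t_1}{t_2(x_2)} = \frac{\beta}{\beta/(1+\alpha m)} = 1 + \alpha m > \beta,
\]
where the final inequality is exactly the hypothesis $\alpha > \frac{\beta-1}{m}$; note that $x_2^m = \beta/(1+\alpha m) < 1$ also confirms $x_2 < 1 = d$, consistent with $x_1 > 0$.

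I expect the main obstacle to be the ``both edges used'' argument of the third paragraph: the tightness claim is only meaningful on positive paths, so the construction must be calibrated (here via $t_1 \equiv \beta$, $t_2 = x_2^m$, $d = 1$) so that the I-TAP solution splits flow across both edges precisely on the regime $\alpha > \frac{\beta-1}{m}$ and collapses to a single edge otherwise. A secondary point worth a remark is that $t_1 \equiv \beta$ is only weakly monotone; if strict monotonicity and strict convexity of every $t_e$ are insisted upon, one replaces $t_1$ by $\beta + \varepsilon x_1$ for arbitrarily small $\varepsilon > 0$ and recovers the same conclusion by continuity, restricting attention throughout to the non-vacuous regime $\beta < m + 1$.
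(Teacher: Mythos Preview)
Your proof is correct and follows essentially the same construction as the paper: a two-edge Pigou network with demand $1$, the monomial $t_2(x_2)=x_2^m$ on one edge, and a (near-)constant cost on the other, followed by equalizing the I-TAP marginal costs $c_e(x_e,\alpha)=t_e(x_e)+\alpha x_e t_e'(x_e)$ to obtain $U(\f(\alpha))=1+\alpha m>\beta$. The only cosmetic difference is that the paper sets $t_1(x_1)=1+\varepsilon x_1$ from the outset and lets $\varepsilon\to 0$, whereas you take $t_1\equiv\beta$ directly and relegate the $\varepsilon$-perturbation to a closing remark; both yield the same limiting ratio $1+\alpha m$, and your explicit verification that both edges are used precisely when $\alpha>\frac{\beta-1}{m}$ is a nice addition.
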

\begin{proof}
Consider a demand of one in a two edge Pigou network, with two nodes---an origin and a destination node---connected by two parallel edges $e_1$ and $e_2$. For edge $e_1$, let $t_1(x_1) = 1 + \epsilon x$ for some small $\epsilon>0$, and let $t_2(x_2) = x_2^m$ for some $m \in \mathbb{N}$. Then, the first order necessary and sufficient KKT conditions of I-TAP$_{\alpha}$ for any $\alpha \in [0, 1]$ imply that
\begin{align*}
    t_2(x_2) + \alpha x_2 t_2'(x_2) = t_1(x_1) + \alpha x_1 t_1'(x_1).
\end{align*}
Substituting $t_2(x_2) = x_2^m$ and $t_1(x_1) = 1+ \epsilon x_1$ gives the following ratio between the travel times on the two links:
\begin{align*}
    \frac{t_1(x_1)}{t_2(x_2)} = \frac{1 + \epsilon x_1}{x_2^{m}} - \frac{\alpha \epsilon x_1}{x_2^m} = 1+m \alpha - \frac{\alpha \epsilon x_1}{x_2^m}.
\end{align*}
By the unfairness constraint that $\frac{t_1(x_1)}{t_2(x_2)} \leq \beta$, it follows that if $\alpha>\frac{\beta - 1}{m}$, then
\begin{align*}
    \frac{t_1(x_1)}{t_2(x_2)} = 1+m \alpha - \frac{\alpha \epsilon x_1}{x_2^m} > 1+m\frac{\beta - 1}{m} - \frac{\alpha \epsilon x_1}{x_2^m} = \beta - \frac{\alpha \epsilon x_1}{x_2^m}.
\end{align*}
Taking $\epsilon \rightarrow 0$, we have that $\frac{t_1(x_1)}{t_2(x_2)} \geq \beta$ for any $\alpha > \frac{\beta-1}{m}$. Thus, the unfairness of the optimal flow $\f(\alpha)$, which is identical to the edge flow $\x(\alpha)$ in a parallel network setting, is at least $\beta$.
\end{proof}

\ifarxiv \else For a counterexample to prove Lemma~\ref{lem:unfairness-tight}, see Appendix~\ref{apdx:pf-unfairness-tight}. \fi
Together, Theorem~\ref{thm:poly-ttf-result} and Lemma~\ref{lem:unfairness-tight} imply that a $\beta$ level of unfairness can be guaranteed using I-TAP on all traffic networks only when $\alpha \leq \frac{\beta-1}{m}$, where $m$ is the maximum degree of the polynomial corresponding to the travel time functions for each edge $e \in E$.

\subsection{Optimality of I-TAP} \label{sec:optimality}

In this section, we show that I-TAP exactly computes the minimum total travel time solution for any desired level of unfairness $\beta$ in any two edge Pigou network. That is, there is some convex combination parameter $\alpha^*$ for which the solution of I-TAP$_{\alpha^*}$ is also a solution to the \fso problem for any two edge Pigou network. 

\begin{lemma} [Optimality of I-TAP] \label{lem:itap-optimal-pigou}
Consider a two edge Pigou network where the optimal solution of the \fso problem is $\x_{\beta}^*$ for any $\beta \in [1, \infty)$. Then, there exists a convex combination parameter $\alpha^*$ such that $\x(\alpha^*) = \x_{\beta}^*$, i.e., the solution of I-TAP$_{\alpha^*}$ and the optimal solution of the \fso problem coincide.
\end{lemma}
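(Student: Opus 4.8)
The plan is to exploit that a two-edge Pigou network has an essentially one-dimensional feasible set, so that the whole I-TAP curve $\alpha \mapsto \x(\alpha)$ sweeps out a connected arc of flows running from $\mathbf{x}^{UE}=\x(0)$ to $\mathbf{x}^{SO}=\x(1)$, and then to show that $\x_{\beta}^*$ sits on this arc. Writing $d$ for the (single) commodity demand and parameterizing every feasible flow by the edge-1 flow $x_1 \in [0,d]$ with $x_2 = d - x_1$, the total travel time $\objso = x_1 t_1(x_1) + x_2 t_2(x_2)$ becomes a scalar function $C(x_1)$. Each summand $y\,t_e(y)$ is convex (since $(y t_e)'' = 2 t_e' + y t_e'' \ge 0$), so $C$ is convex with unconstrained minimizer the SO flow $\x(1)$. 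Relabeling edges if necessary, assume $x_1^{SO} \ge x_1^{UE}$. On the interior $(0,d)$ both edges are positive and the unfairness equals $\max\{g, 1/g\}$ with $g(x_1) := t_1(x_1)/t_2(d-x_1)$ strictly increasing; hence $\{x_1 \in (0,d): U(\x) \le \beta\}$ is an interval $[a,b]$ with $g(b) = \beta$, and $x_1^{UE}\le b$ because $g(x_1^{UE}) = 1 \le \beta = g(b)$.

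Next I would split into two cases. If $U(\x(1)) \le \beta$, then the SO flow is feasible and globally minimizes $C$, so it solves \fso and $\alpha^* = 1$ suffices. Otherwise $U(\x(1)) > \beta$ forces $g(x_1^{SO}) > \beta$, i.e. $b < x_1^{SO}$; since $C$ is convex and decreasing on $[a,b]$, its minimum over $[a,b]$ is attained at $b$. It then remains to rule out the two degenerate single-edge flows $x_1 \in \{0,d\}$, which are feasible (a single positive path has $U=1$) yet need not lie between UE and SO. The vertex $x_1 = 0$ is dominated because $C$ is decreasing on $[0,x_1^{SO}] \supseteq [0,b]$, and the vertex $x_1 = d$ is dominated by a short estimate using monotonicity of $t_1$ together with $t_1(b) = \beta\, t_2(d-b)$, namely $d\,t_1(d) - b\,t_1(b) \ge (d-b)t_1(b) = (d-b)\beta\, t_2(d-b)$, whence $C(d) - C(b) \ge (d-b)(\beta-1)t_2(d-b) \ge 0$. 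Consequently $\x_{\beta}^*$ is the flow at $x_1 = b$, which satisfies $x_1^{UE} \le b < x_1^{SO}$.

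Finally I would invoke the continuity of $\x(\alpha)$ in $\alpha$ established by the sensitivity analysis of Section~\ref{sec:solution-properties}, together with $\x(0) = \mathbf{x}^{UE}$ and $\x(1) = \mathbf{x}^{SO}$: the image of $[0,1]$ under the continuous scalar map $\alpha \mapsto x_1(\alpha)$ is a connected subset of $\mathbb{R}$, hence an interval, containing both $x_1^{UE}$ and $x_1^{SO}$ and therefore every point in between. Since $b \in [x_1^{UE}, x_1^{SO}]$, there exists $\alpha^*$ with $x_1(\alpha^*) = b$; as a scalar $x_1$ determines the entire edge flow in a two-edge network, this gives $\x(\alpha^*) = \x_{\beta}^*$.

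The step I expect to be the main obstacle is establishing that the I-TAP arc covers the \emph{whole} segment between UE and SO rather than merely visiting its endpoints: this is exactly where one-dimensionality does the heavy lifting, since connectedness of a subset of $\mathbb{R}$ upgrades to ``is an interval,'' and it is precisely this upgrade that fails on general networks, which is why the optimality guarantee is special to Pigou networks. A secondary care point is the degenerate single-edge vertices, which the positive-path definition of unfairness renders feasible and which must be explicitly dominated as above so that the constrained optimum cannot escape the segment $[x_1^{UE}, x_1^{SO}]$.
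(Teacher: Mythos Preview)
Your overall architecture---reduce to the scalar parameter $x_1$, locate $\x_\beta^*$ on the segment $[x_1^{UE},x_1^{SO}]$, then hit it with the intermediate value theorem via continuity of $\alpha\mapsto x_1(\alpha)$---is exactly the paper's approach. The paper, however, pins down $\x_\beta^*\in[x_1(0),x_1(1)]$ by two short contradiction arguments (one using strict convexity of $\objso$, one comparing travel-time ratios against $x_1(1)$), whereas you characterize the constrained optimum explicitly via the monotone ratio $g$ and convexity of $C$, with a direct domination estimate for the degenerate vertex $x_1=d$. Your route is a bit more constructive and gives slightly more information (you actually identify the binding point $g(b)=\beta$), at the cost of needing the extra case analysis you flag.

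There is one genuine gap. When you ``invoke the continuity of $\x(\alpha)$ in $\alpha$ established by the sensitivity analysis of Section~\ref{sec:solution-properties},'' note that Corollary~\ref{cor:lipschitz-cont-x} only asserts Lipschitz continuity on the \emph{open} interval $(0,1)$; it says nothing about the endpoints. Without continuity at $\alpha=0$ and $\alpha=1$, the image of $[0,1]$ under $\alpha\mapsto x_1(\alpha)$ is not known to be connected, and your IVT step does not go through. The paper closes this gap explicitly: it bounds $\objitap(x_1(0))-\objitap(x_1(\alpha))$ by $\alpha$ times an instance-dependent constant (using $\objue(x_1(0))\le \objue(x_1(\alpha))$ and boundedness of $\objso$), then uses strict convexity of $\objitap$ to force $x_1(\alpha)\to x_1(0)$ as $\alpha\to 0$, and symmetrically at $\alpha=1$. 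You should add this endpoint argument (or an equivalent one) before applying IVT.
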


For a proof of Lemma~\ref{lem:itap-optimal-pigou}, see \ifarxiv Appendix~\ref{apdx:pf-pigou-optimal}\else Appendix A.2\fi. We mention that Lemma~\ref{lem:itap-optimal-pigou} compares only the edge flows of I-TAP and \fso since the path and edge flows coincide for a two edge Pigou network. We also note that while the optimality for a Pigou network may appear restrictive, such networks are of both theoretical~\cite{pigou,how-bad-is-selfish} and practical significance~\cite{caltrans}. 



\subsection{Computational Tractability of I-TAP} \label{sec:compTractability}

Having established that we can solve I-TAP to obtain an approximate solution to \fso, we now establish that I-TAP can be computed efficiently due to its equivalence to a parametric UE-TAP program.


\begin{observation} [UE Equivalency of I-TAP] \label{obs:UE-equivalency}
For any $\alpha \in [0, 1]$, I-TAP$_{\alpha}$ reduces to UE-TAP with objective function $\sum_{e \in E}\int_{0}^{x_e} c_e(y, \alpha) \dd{y}$, where $c_e(y, \alpha) = t_e(y) + \alpha y t_e'(y)$.
\end{observation}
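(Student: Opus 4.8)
The plan is to show that the two convex programs are \emph{literally identical}, not merely that their optima agree. Since I-TAP$_{\alpha}$ and the claimed UE-TAP share the same feasible region---Constraints~\eqref{eq:edge-constraint}--\eqref{eq:nonnegativity-constraints} depend only on the flow and not on the objective---it suffices to verify that the two objective functions coincide as functions of the edge flow $\x$. I would therefore start from the proposed modified UE-TAP objective $\sum_{e\in E}\int_{0}^{x_e} c_e(y,\alpha)\dd{y}$ with $c_e(y,\alpha)=t_e(y)+\alpha\, y\, t_e'(y)$, and split each summand into $\int_{0}^{x_e} t_e(y)\dd{y} + \alpha\int_{0}^{x_e} y\, t_e'(y)\dd{y}$.

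The one computational step is to evaluate $\int_{0}^{x_e} y\, t_e'(y)\dd{y}$ by integration by parts, which is legitimate because each $t_e$ is differentiable under the standing assumptions of Section~\ref{sec:model}. This yields $x_e t_e(x_e) - \int_{0}^{x_e} t_e(y)\dd{y}$, and substituting it back gives, for each edge, the contribution $(1-\alpha)\int_{0}^{x_e} t_e(y)\dd{y} + \alpha\, x_e t_e(x_e)$.

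Summing over all edges $e\in E$ then recovers exactly $(1-\alpha)\objue(\x) + \alpha\objso(\x) = \objitap(\x)$, so the modified UE-TAP objective equals the I-TAP$_{\alpha}$ objective identically. Because the objectives agree pointwise and the constraint sets coincide, the two programs are one and the same, which establishes the claimed reduction. I do not anticipate a genuine obstacle: the only subtlety is justifying the integration by parts, which is immediate from the differentiability of $t_e$. The real payoff of the observation is what it enables downstream---I-TAP can be solved with any standard UE-TAP solver simply by replacing the travel-time function $t_e$ with the marginal-type cost $c_e(\cdot,\alpha)$. To close the loop, I would also remark (as the paper implicitly relies on elsewhere) that $c_e(\cdot,\alpha)$ inherits differentiability, convexity, and monotonicity from $t_e$, so the resulting program remains a well-posed convex traffic assignment to which the efficient UE-TAP machinery of~\cite{Sheffi1985} applies without path enumeration.
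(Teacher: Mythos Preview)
Your argument is correct and essentially identical to the paper's: the paper applies the fundamental theorem of calculus to the product $y\,t_e(y)$ to write $x_e t_e(x_e)=\int_0^{x_e}\bigl(t_e(y)+y\,t_e'(y)\bigr)\,\dd{y}$ and then takes the convex combination, whereas you traverse the same identity in the other direction via integration by parts. The two computations are the same up to direction, and your remark on the inherited regularity of $c_e(\cdot,\alpha)$ matches the paper's comment following the observation.
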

\ifarxiv 
To see Observation~\ref{obs:UE-equivalency}, note that by the fundamental theorem of calculus, $\objso(\x)$ can be written as
$$\objso(\x) = \sum_{e \in E} x_e t_e(x_e) = \sum_{e \in E} \int_{0}^{x_e} t_e(y) + y t_e'(y) \dd{y}. $$
Then, taking a convex combination of the SO-TAP and UE-TAP objectives, it is clear that 
$$\objitap(\x) = \sum_{e \in E}\int_{0}^{x_e} t_e(y) + \alpha y t_e'(y) \dd{y}.$$
\else
Observation~\ref{obs:UE-equivalency} follows 
from the fundamental theorem of calculus.
\fi 
Note that for each $\alpha \in [0, 1]$, the differentiability, monotonicity, and convexity of many typical travel time functions $t_e$, e.g., any polynomial function such as the BPR function~\cite{Sheffi1985}, imply that the corresponding properties hold for the cost functions $c_e(x_e, \alpha)$ in $x_e$. For numerical implementation, 
the equivalency of I-TAP$_\alpha$ and UE-TAP implies that I-TAP$_\alpha$ inherits the useful property that the linearization step of the Frank-Wolfe algorithm~\cite{Sheffi1985}, when applied to I-TAP$_\alpha$, corresponds to solving multiple 
unconstrained shortest path queries. The latter motivates the highly efficient approach which we employ in Section~\ref{sec:numerical-experiments} to solve I-TAP$_\alpha$.

\ifarxiv
\subsection{Sensitivity Analysis of I-TAP} \label{sec:solution-properties}

To obtain a solution that is simultaneously of low cost while keeping within a $\beta$ bound of unfairness, it is instructive to study the sensitivity of the optimal edge flow solution $\x(\alpha)$ of I-TAP$_{\alpha}$ in the convex combination parameter $\alpha$. Specifically, performing such a sensitivity analysis is important since it allows us to characterize the continuity of the SO-TAP objective $\objso(\x(\alpha))$, which we are looking to minimize in the \fso problem. In this section, we leverage the equivalency of I-TAP and UE-TAP to establish continuity properties of the optimal edge flow solution $\x(\alpha)$ of I-TAP$_{\alpha}$ and the SO-TAP objective $\objso(\x(\alpha))$.

We first establish the Lipschitz continuity of the optimal edge flow solution $\x(\alpha)$ of I-TAP$_{\alpha}$ in the convex combination parameter
$\alpha$. This result follows from the UE-TAP reformulation of I-TAP as in Observation~\ref{obs:UE-equivalency} and a direct application of the Lipschitz continuity of the optimal solution of a parametric UE-TAP \cite[Theorem 8.6a]{parametric-still}.

\begin{corollary} (Lipschitz Continuity of $\x(\alpha)$) \label{cor:lipschitz-cont-x}
For any $\alpha \in (0, 1)$, the edge flow solution $\mathbf{x}(\alpha)$ of the problem I-TAP$_{\alpha}$ is Lipschitz continuous in $\alpha$.
\end{corollary}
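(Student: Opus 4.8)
The plan is to use the equivalence of Observation~\ref{obs:UE-equivalency} to recast I-TAP$_{\alpha}$ as a \emph{parametric} user-equilibrium program, and then appeal directly to the parametric Lipschitz-stability result \cite[Theorem 8.6a]{parametric-still}. Working in edge-flow space, the optimal edge flow $\x(\alpha)$ is the unique minimizer of the I-TAP$_\alpha$ objective, which by Observation~\ref{obs:UE-equivalency} can be written as
\[
\Phi(\x,\alpha) := \sum_{e \in E}\int_{0}^{x_e} c_e(y,\alpha)\,\dd{y}, \qquad c_e(y,\alpha)=t_e(y)+\alpha\,y\,t_e'(y),
\]
over the feasible edge-flow polytope $\mathcal{X}$ obtained by projecting the path flows satisfying Constraints~\eqref{eq:edge-constraint}--\eqref{eq:nonnegativity-constraints} onto edge-flow space. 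The decisive structural observation is that $\mathcal{X}$ is a fixed polytope \emph{independent} of $\alpha$: the parameter enters only through the objective, and it does so \emph{linearly} through the integrand $c_e$.

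First I would verify the two regularity hypotheses of the parametric theorem. For the dependence on the parameter, note that $\nabla_{\x}\Phi(\x,\alpha)=\{c_e(x_e,\alpha)\}_{e\in E}$ with $\partial_\alpha c_e(x_e,\alpha)=x_e t_e'(x_e)$, which is bounded on the compact set $\mathcal{X}$; hence the objective gradient is Lipschitz in $\alpha$, uniformly over $\mathcal{X}$. For convexity, the Hessian $\nabla^2_{\x}\Phi$ is diagonal with entries $(1+\alpha)t_e'(x_e)+\alpha x_e t_e''(x_e)\geq 0$, and strict convexity of each $t_e$ guarantees that $\x(\alpha)$ is well defined and unique. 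Upgrading this to \emph{strong} convexity, with a modulus that can be taken uniform in $\alpha\in(0,1)$ as the cited theorem requires to promote continuity to Lipschitz continuity, is the delicate step; here I would use strict monotonicity of $t_e$ together with the compactness of $\mathcal{X}$ to bound the reduced Hessian (restricted to the null space of the active constraints at the minimizer) away from zero.

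With a uniform strong-convexity modulus and the uniform Lipschitz-in-$\alpha$ bound on the gradient in hand, the claim follows by a direct application of \cite[Theorem 8.6a]{parametric-still}: the unique minimizer of such a strongly convex program over a fixed convex set is Lipschitz continuous in the parameter, so taking the parameter to be $\alpha$ yields Lipschitz continuity of $\x(\alpha)$ on $(0,1)$. The main obstacle is not the reduction itself but the verification of the strong-convexity hypothesis uniformly over $\alpha$: the travel-time functions are assumed only strictly convex, so the diagonal Hessian entries can degenerate at flows with $x_e=0$, and care is needed to argue, via compactness and the active-set structure at the solution, that the effective modulus of strong convexity governing the minimizer stays bounded away from zero.
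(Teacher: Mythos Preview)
Your proposal is correct and takes essentially the same approach as the paper: the paper simply states that the corollary follows from the UE-TAP reformulation of I-TAP in Observation~\ref{obs:UE-equivalency} together with a direct application of \cite[Theorem~8.6a]{parametric-still}, without spelling out the hypothesis checks. Your write-up is in fact more detailed than the paper's, and your flagged concern about verifying a uniform strong-convexity modulus is a legitimate technical point that the paper does not address explicitly.
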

Corollary~\ref{cor:lipschitz-cont-x} implies a desirable property that small changes in $\alpha$ will only result in small changes in the optimal edge flows of I-TAP$_{\alpha}$. We now use Corollary~\ref{cor:lipschitz-cont-T} to establish the Lipschitz continuity of the SO-TAP objective.

\begin{corollary} (Lipschitz Continuity of $\objso$) \label{cor:lipschitz-cont-T}
For any $\alpha \in (0, 1)$, let $\x(\alpha)$ be the optimal edge flow solution to I-TAP$_{\alpha}$. Then, 
$\objso(\mathbf{x}(\alpha))$ is Lipschitz continuous in $\alpha$.
\end{corollary}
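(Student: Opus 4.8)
The plan is to obtain the result by composing two Lipschitz maps: the map $\alpha \mapsto \x(\alpha)$, whose Lipschitz continuity is already established in Corollary~\ref{cor:lipschitz-cont-x}, and the map $\x \mapsto \objso(\x)$, which I will show is Lipschitz continuous on the (bounded) set of feasible edge flows. Since a composition of Lipschitz functions is Lipschitz, combining these two facts yields the claim directly.

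First I would observe that every feasible edge flow is bounded: the demand Constraints~\eqref{eq:demand-constraint} together with the non-negativity Constraints~\eqref{eq:nonnegativity-constraints} imply $0 \leq x_e \leq \sum_{k \in K} d_k =: D$ for all $e \in E$. Hence, for every $\alpha \in (0,1)$, the optimal edge flow $\x(\alpha)$ lies in the compact box $\mathcal{K} := [0, D]^{|E|}$. Restricting attention to $\mathcal{K}$ is what makes the argument work, since Lipschitz continuity of $\objso$ need only be verified on this compact set rather than on all of $\mathbb{R}_{\geq 0}^{|E|}$.

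Next I would show that $\objso(\x) = \sum_{e \in E} x_e t_e(x_e)$ is Lipschitz continuous on $\mathcal{K}$. Each summand $g_e(x_e) := x_e t_e(x_e)$ is continuously differentiable with $g_e'(x_e) = t_e(x_e) + x_e t_e'(x_e)$, where $t_e'$ is continuous because $t_e$ is differentiable and convex by assumption. As $\mathcal{K}$ is compact, $|g_e'|$ is bounded on $[0,D]$, so the gradient $\nabla \objso$ is bounded on $\mathcal{K}$, and thus $\objso$ is Lipschitz there with some constant $L_2$. Letting $L_1$ be the Lipschitz constant of $\alpha \mapsto \x(\alpha)$ from Corollary~\ref{cor:lipschitz-cont-x}, for any $\alpha, \alpha' \in (0,1)$ we then obtain
\[
\left| \objso(\x(\alpha)) - \objso(\x(\alpha')) \right| \leq L_2 \, \| \x(\alpha) - \x(\alpha') \| \leq L_1 L_2 \, | \alpha - \alpha' |,
\]
which establishes Lipschitz continuity of $\objso(\x(\alpha))$ in $\alpha$ with constant $L_1 L_2$.

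The only real obstacle I anticipate is verifying the boundedness of $\nabla \objso$ over the feasible region, which reduces to two routine facts: the continuity of $t_e'$ (a consequence of differentiability and convexity of $t_e$) and the boundedness of feasible flows (a consequence of the demand constraints). Once these are in place, the composition bound above is immediate, so the bulk of the work is this preliminary verification rather than the Lipschitz estimate itself.
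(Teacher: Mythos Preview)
Your proposal is correct and follows essentially the same approach as the paper: both arguments obtain the result by composing the Lipschitz map $\alpha \mapsto \x(\alpha)$ from Corollary~\ref{cor:lipschitz-cont-x} with the map $\x \mapsto \objso(\x)$, the latter being Lipschitz on the bounded feasible region because its derivative is bounded there. Your version is simply more explicit about the supporting details (the compact box $[0,D]^{|E|}$, continuity of $t_e'$ from convexity), which the paper leaves implicit.
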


\begin{proof}
We prove the Lipschitz continuity of $\objso(\mathbf{x}(\alpha))$ in $\alpha$ through the observation that the composition of Lipschitz continuous functions is Lipschitz continuous. First, observe that $\mathbf{x}(\alpha)$ is Lipschitz continuous by Corollary~\ref{cor:lipschitz-cont-x}. Next, the SO-TAP objective $\objso(\mathbf{x}) = \sum_{e \in E} x_e t_e(x_e)$ is 
continuous in its argument $\mathbf{x}$. Furthermore, $\objso(\mathbf{x})$ is locally Lipschitz over a bounded set since the SO-TAP has a finite derivative as long as $\mathbf{x}$ is bounded. Since both $\objso(\mathbf{x})$ and $\mathbf{x}(\alpha)$ are Lipschitz continuous, it follows that $\objso(\mathbf{x}(\alpha))$ is Lipschitz continuous.
\end{proof}

While Corollary~\ref{cor:lipschitz-cont-T} establishes the continuity of $\objso(\mathbf{x}(\alpha))$ in $\alpha$, this relation is not necessarily monotone, which we show through numerical experiments in Section~\ref{sec:numerical-experiments}. We also note that, unlike the SO-TAP objective, the unfairness function $U(\f(\alpha))$ is discontinuous in $\alpha$. Its discontinuity stems from the fact that the optimal solutions $\f(\alpha_1)$ and $\f(\alpha_2)$ for convex combination parameters $\alpha_1$ and $\alpha_2$ that are arbitrarily close may not have the same set of positive paths. In particular, it may happen for some convex combination parameter $\alpha_1$ that path $P$ is a positive path, but for some $\epsilon>0$ and for any $\alpha_2$ such that $||\alpha_1 - \alpha_2 ||_2 \leq \epsilon$ we have that $P$ is not a positive path. Since the travel time on path $P$ could be very different from the travel times on other paths, the unfairness function is in general discontinuous in $\alpha$, which we validate through experiments in Section~\ref{sec:numerical-experiments}.
\fi  

\section{Pricing to Implement Flows} \label{sec:main-pricing}

\ifarxiv  In the previous section we presented a method for computing a solution to \fso that keeps within a $\beta$ bound of unfairness and strives to minimize the total travel time. \fi 
In this section, we leverage the structure of
I-TAP to develop pricing mechanisms to collectively enforce
the I-TAP solution in the presence of selfish users that independently choose routes to minimize their own travel costs. 
\ifarxiv  That is, the prices are set such that the travel cost for users in the same commodity is equivalent, ensuring the formation of an equilibrium. \fi 
We first consider the case of homogeneous users and show that I-TAP results in a natural marginal-cost pricing scheme. \ifarxiv Then, we leverage a linear programming methodology of \cite{multicommodity-extension} to set road prices to enforce the flows computed through the I-TAP method for the setting of heterogeneous users. \else Then, we characterize conditions under which tolls can be used to enforce the I-TAP flows for heterogeneous users. \fi

In this section, for the ease of exposition, we focus our discussion on inducing the optimal edge flow $\x(\alpha)$ of I-TAP$_{\alpha}$. We mention that our approach can naturally be extended to enforcing optimal path flows $\f(\alpha)$ that satisfy a given level of unfairness. In particular, we can consider a setting wherein users are recommended to use a specified path set, e.g., by traffic navigational applications, as given by $\f(\alpha)$ and the tolls set are such that no user will have an incentive to deviate from their recommended paths. Finally, we also mention by the result of Theorem~\ref{thm:poly-ttf-result} that focusing on the edge flow solution $\x(\alpha)$ is without loss of generality for certain ranges of $\alpha$ since the unfairness bound for any optimal path flow solution $\f(\alpha)$ is guaranteed to be satisfied.

\subsection{Homogeneous Pricing via Marginal Cost} \label{sec:pricing-homogeneous}

In the setting where all users have the same value of time $v$, the structure of I-TAP$_{\alpha}$ yields an interpolated variant of marginal-cost pricing to induce selfish users to collectively form the optimal edge flow $\x(\alpha)$ of I-TAP$_{\alpha}$. This result is a direct consequence of the equivalence between I-TAP and UE-TAP from Observation~\ref{obs:UE-equivalency}.

\begin{lemma} [Prices to Implement Flows] \label{lem:pricing-homogeneous}
Suppose that the edge flow $\x(\alpha)$ is a solution to I-TAP$_\alpha$ for some $\alpha\in [0,1]$. Then $\x(\alpha)$ can be enforced as a UE by setting the prices as $\tau_e = \alpha x_e(\alpha) t_e'(x_e(\alpha))$ for each $e \in E$.
\end{lemma}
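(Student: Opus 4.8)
The plan is to reduce the claim to the variational-inequality (equivalently, Wardrop) characterization of equilibrium and then to lean on Observation~\ref{obs:UE-equivalency}. Normalizing the value of time to $v=1$, as in Lemma~\ref{lem:equalityCosts}, recall that a feasible edge flow $\x^*$ is a user equilibrium under a collection of edge cost functions $\{g_e\}_{e\in E}$ precisely when it solves the variational inequality $\sum_{e\in E} g_e(x_e^*)\,(x_e-x_e^*)\ge 0$ for every feasible edge flow $\x$; equivalently, that for each commodity every used path carries the minimum perceived path cost. This is the criterion I would verify for $\x(\alpha)$ under the proposed tolls.

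First I would invoke Observation~\ref{obs:UE-equivalency}: the program I-TAP$_{\alpha}$ is exactly UE-TAP for the edge cost functions $c_e(y,\alpha)=t_e(y)+\alpha y\,t_e'(y)$. Hence its optimizer $\x(\alpha)$ is the user equilibrium for these costs, and the first-order optimality conditions of that convex program yield $\sum_{e\in E} c_e(x_e(\alpha),\alpha)\,(x_e-x_e(\alpha))\ge 0$ for all feasible $\x$. Next I would write down the equilibrium condition for the tolled game in which each user minimizes $C_P(\x,\boldsymbol{\tau})=t_P(\x)+\sum_{e\in P}\tau_e$ with the \emph{constant} tolls $\tau_e=\alpha x_e(\alpha)\,t_e'(x_e(\alpha))$.

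The crux of the argument—and the point I expect to be the main obstacle to state cleanly—is that equilibrium is a \emph{pointwise} condition: it constrains the edge costs only through their values \emph{at the candidate equilibrium flow}. Evaluating the tolled edge cost at $\x(\alpha)$ gives $t_e(x_e(\alpha))+\tau_e = t_e(x_e(\alpha))+\alpha x_e(\alpha)\,t_e'(x_e(\alpha)) = c_e(x_e(\alpha),\alpha)$. Thus, although the constant toll $\tau_e$ disagrees with the flow-dependent marginal term $\alpha y\,t_e'(y)$ away from $x_e(\alpha)$, the tolled cost coincides with $c_e(\cdot,\alpha)$ exactly at $\x(\alpha)$, which is all the equilibrium condition ever sees. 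One must therefore argue carefully that matching the marginal-cost term only at the fixed point is enough, precisely because the tolls are computed from the already-known flow $\x(\alpha)$ rather than as a function of the running flow.

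Finally I would conclude by chaining the two displays: the variational inequality for the tolled game at $\x(\alpha)$ reads $\sum_{e\in E}\big(t_e(x_e(\alpha))+\tau_e\big)(x_e-x_e(\alpha)) = \sum_{e\in E} c_e(x_e(\alpha),\alpha)(x_e-x_e(\alpha))\ge 0$ for every feasible $\x$, where the inequality is exactly the one obtained from Observation~\ref{obs:UE-equivalency}. Hence $\x(\alpha)$ is a user equilibrium under the prices $\boldsymbol{\tau}$, as claimed. For a general value of time $v\neq 1$ the same argument goes through after scaling the tolls by $v$, since the equilibrium variational inequality is invariant under multiplying all edge costs by a positive constant.
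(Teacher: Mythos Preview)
Your proof is correct and follows essentially the same approach as the paper: both invoke Observation~\ref{obs:UE-equivalency} to identify I-TAP$_\alpha$ as a UE-TAP with modified edge costs $c_e(\cdot,\alpha)$, and then observe that the tolled cost evaluated at $\x(\alpha)$ coincides with $c_e(x_e(\alpha),\alpha)$, so the equilibrium condition carries over. The only cosmetic difference is that you state the equilibrium condition via the variational inequality, whereas the paper uses the equivalent Wardrop/KKT statement that all used paths within a commodity share the same cost.
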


\ifarxiv
\begin{proof}
Without loss of generality, normalize $v$ to 1 and for notational convenience, denote $x_e = x_e(\alpha)$. To prove this result, note from Observation~\ref{obs:UE-equivalency} that I-TAP$_{\alpha}$ is equivalent to UE-TAP with objective $\sum_{e\in E}\int_{0}^{x_e} t_e(y) + \alpha y t_e'(y) \dd{y}$. From the first-order necessary and sufficient KKT conditions~\cite{Sheffi1985} of this UE-TAP, for any two paths $P, Q \in \mathcal{P}_k$ with positive flow for a commodity $k \in K$, it must be that $\sum_{e \in P} \left( t_e(x_e) + \alpha x_e t_e'(x_e) \right) = \sum_{e \in Q} \left( t_e(x_e) + \alpha x_e t_e'(x_e) \right)$. Thus, if the prices on each edge are set as $\tau_e = \alpha x_e t_e'(x_e)$, then all users in each commodity incur the same travel cost when using any two paths $P, Q \in \mathcal{P}_k$, establishing that $\x$ is a UE.
\end{proof}
\fi

\ifarxiv  \else For a proof of Lemma~\ref{lem:pricing-homogeneous}, see Appendix A.3. \fi Note from Lemma~\ref{lem:pricing-homogeneous} that the edge prices are equal to $\alpha$ multiplied by the marginal cost of \ifarxiv users on the edges. \else users. \fi \ifarxiv For this pricing scheme, Lemma~\ref{lem:pricing-homogeneous} guarantees that selfish users will be induced to collectively form the solutions satisfying a specified bound on unfairness obtained through the I-TAP method. \fi \ifarxiv \else We also note that tradable-credit schemes~\cite{YANG2011580} can be used to implement the flows computed through I-TAP (Appendix C). \fi

\ifarxiv
\begin{remark}
We note that road tolling is not the only mechanism that can be used to induce selfish users to collectively form the flows computed through I-TAP. One of the notable mechanisms beyond road tolling to enforce the SO solution as a UE is that of tradable credit schemes, wherein users can trade initially issued credits freely in a competitive market and spend these credits to use roads with predetermined credit charges \cite{YANG2011580}. As with the marginal cost pricing scheme in Lemma~\ref{lem:pricing-homogeneous}, the equivalence between I-TAP and UE-TAP for any $\alpha \in [0, 1]$ enables us to derive a tradable credit scheme. In particular, following a similar line of reasoning to that used by Yang and Wang \cite[Proposition 5]{YANG2011580}, it can be shown that the optimal edge flow $\mathbf{x}(\alpha)$ of I-TAP$_{\alpha}$ for any $\alpha \in [0, 1]$ can be enforced as a user equilibrium through a tradable credit scheme. 
\end{remark}
\fi

\subsection{Heterogeneous Pricing via Dual Multipliers} \label{sec:pricing-heterogeneous}

\ifarxiv
The pricing mechanism in Section~\ref{sec:pricing-homogeneous} is inapplicable to the heterogeneous user setting as it would require unrealistically imposing different prices for users with different values of time for the same edges. In this section, we consider heterogeneous users and leverage a linear-programming method \cite{multicommodity-extension} to provide a necessary and sufficient condition to induce selfish users to collectively form the edge flow $\x(\alpha)$. We further establish that $\x(\alpha)$ satisfies this condition for each $\alpha \in [0, 1]$. That is, appropriate tolls can be placed on the roads of the network to induce heterogeneous selfish users to collectively form the equilibrium edge flow $\x(\alpha)$.
\else
The pricing mechanism in Section~\ref{sec:pricing-homogeneous} is inapplicable to the heterogeneous user setting as it would require unrealistically imposing different prices for users with different values of time for the same edges. In this section, we consider heterogeneous users and leverage a linear-programming method \cite{multicommodity-extension} to establish that appropriate tolls can be placed on the roads to induce heterogeneous selfish users to collectively form the equilibrium flow $\x(\alpha)$.
\fi

Before presenting the pricing scheme, we first extend the notion of a commodity to a heterogeneous user setting. In particular, each user belongs to a commodity $k \in K$ when making a trip on a set of paths $\mathcal{P}_k$ between the same O-D pair and has the value of time $v_k>0$. Then, under a vector of edge prices $\boldsymbol{\tau} = \{\tau_e\}_{e \in E}$ the travel cost that users in commodity $k$ incur on a given path $P \in \mathcal{P}_k$ under the traffic assignment $\x$ is given by $C_{P}(\x, \boldsymbol{\tau}) =  \sum_{e \in P} \left( v_k t_e(x_e) + \tau_e \right)$. Note that 
more than one commodity 
may make trips between the same O-D pair, and a user equilibrium forms when the travel cost for all users in a particular commodity is equal. We further note that we maintain the unfairness notion presented in the work \ifarxiv thus far \fi even for heterogeneous users. That is, irrespective of the value of time of two users travelling between the same O-D pair, the maximum possible ratio between their travel times can be no more than $\beta$.

We now leverage the following result 
to provide a necessary and sufficient condition that the optimal edge flow $\x(\alpha)$ of I-TAP$_{\alpha}$ \ifarxiv for any $\alpha\in [0,1]$ \fi must satisfy for it to be enforceable as a UE through road pricing. 




\ifarxiv
\begin{lemma} (Condition for Flow Enforceability
\cite[Theorem 3.1]{multicommodity-extension})
\label{lem:pricing-flow-implement}
Suppose that the flow $\x$ satisfies the feasibility Constraints~\eqref{eq:edge-constraint}-\eqref{eq:nonnegativity-constraints}. Further, consider the linear program with the variables $d_{P}^{k}$, which represents the flow of commodity $k$ routed on path $P \in \mathcal{P}_k$, where $\mathcal{P}_k$ denotes the set of all possible paths for commodity $k$:
\begin{mini!}|s|[2]                   
    {\substack{d_{P}^{k}, \forall P \in \mathcal{P}_k, \\ \forall k \in K}}                              
    {\sum_{k \in K} v_k \sum_{P \in \mathcal{P}_k}  t_P(\x) d_{P}^{k}, \label{eq:OPT-Obj-Roughgarden}}   
    {\label{eq:Eg001}}             
    {}                                
    \addConstraint{\sum_{P \in \mathcal{P}_{k}} d_{P}^{k}}{= d_{k}, \quad \forall k \in K, \label{eq:OPTcon1-Roughgarden}}    
    \addConstraint{d_{P}^{k}}{  \geq 0, \quad \forall P \in \mathcal{P}_k, k \in K, \label{eq:OPTcon2-Roughgarden}}
    \addConstraint{\sum_{k \in K} \sum_{P \in \mathcal{P}_{k}: e \in P} d_{P}^{k}}{  \leq x_e, \quad \forall e \in E, \label{eq:OPTcon3-Roughgarden}}
\end{mini!}
with demand Constraints~\eqref{eq:OPTcon1-Roughgarden}, non-negativity Constraints~\eqref{eq:OPTcon2-Roughgarden} and capacity Constraints~\eqref{eq:OPTcon3-Roughgarden}. Then $\x$ can be enforced as a user equilibrium if and only if Constraint~\eqref{eq:OPTcon3-Roughgarden} is met with equality for each edge $e \in E$ at the optimal solution of the linear Program~\eqref{eq:OPT-Obj-Roughgarden}-\eqref{eq:OPTcon3-Roughgarden}.
\end{lemma}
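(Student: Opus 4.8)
The plan is to exploit that, once the target edge flow $\x$ is \emph{fixed}, the edge travel times $t_e(x_e)$ are constants, so that both the equilibrium condition under tolls and the Program~\eqref{eq:OPT-Obj-Roughgarden}--\eqref{eq:OPTcon3-Roughgarden} become \emph{linear} in the path-flow variables $d_P^k$. First I would record the precise meaning of enforceability: $\x$ can be implemented as a UE exactly when there exist nonnegative edge tolls $\{\tau_e\}_{e \in E}$ together with a path decomposition $\{d_P^k\}$ of $\x$ (so that $\sum_{k \in K}\sum_{P \in \mathcal{P}_k : e \in P} d_P^k = x_e$ and $\sum_{P \in \mathcal{P}_k} d_P^k = d_k$) such that, for every commodity $k$, each path carrying positive flow minimizes the tolled cost $\sum_{e \in P}\left(v_k t_e(x_e) + \tau_e\right)$ over $\mathcal{P}_k$. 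The key structural point to emphasize is that the tolls $\tau_e$ are shared across commodities while the time term is scaled by the commodity-specific $v_k$; this is exactly the weighting appearing in objective~\eqref{eq:OPT-Obj-Roughgarden}, and it is what allows a single commodity-independent toll vector to serve heterogeneous users.

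Next I would write the KKT conditions of the linear Program~\eqref{eq:OPT-Obj-Roughgarden}--\eqref{eq:OPTcon3-Roughgarden}. Assigning a free multiplier $\mu_k$ to each demand Constraint~\eqref{eq:OPTcon1-Roughgarden} and a \emph{nonnegative} multiplier $\tau_e$ to each capacity Constraint~\eqref{eq:OPTcon3-Roughgarden}, stationarity in $d_P^k$ reads
\[ v_k\, t_P(\x) + \sum_{e \in P}\tau_e \;\geq\; \mu_k, \qquad \text{with equality whenever } d_P^k > 0, \]
and complementary slackness for the capacity constraints gives $\tau_e(\sum_{k \in K}\sum_{P \in \mathcal{P}_k : e \in P} d_P^k - x_e) = 0$. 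The crucial observation is that this stationarity condition is \emph{identical} to the equilibrium characterization of the first paragraph, with the capacity multipliers $\tau_e \geq 0$ playing the role of tolls and $\mu_k$ the common minimum tolled cost of commodity $k$.

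With this correspondence in hand, both directions follow from strong duality for linear programs. For the ``if'' direction, suppose Constraint~\eqref{eq:OPTcon3-Roughgarden} is tight for every edge at an optimal solution $\{d_P^k\}$. Then its edge flows equal $x_e$, so $\{d_P^k\}$ is a decomposition of $\x$, and the associated multipliers $\tau_e \geq 0$ form an admissible toll vector under which, by stationarity, every positive-flow path is cost-minimizing; hence $\x$ is a UE. For the ``only if'' direction, suppose $\x$ is enforced by tolls $\{\tau_e\}$ with equilibrium decomposition $\{d_P^k\}$ of $\x$. This decomposition is LP-feasible with every capacity constraint tight, and, setting $\mu_k$ to the equilibrium cost of commodity $k$, the pair $(\{d_P^k\}, \{\mu_k, \tau_e\})$ satisfies the KKT system above; by sufficiency of KKT for this LP, $\{d_P^k\}$ is optimal, exhibiting an optimal solution at which all capacity constraints hold with equality.

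I expect the main obstacle to be the \emph{non-uniqueness} of LP optima and the attendant care needed to interpret ``met with equality at the optimal solution.'' I would phrase the equivalence in terms of the existence of an optimal solution saturating every edge: if some optimum leaves slack on an edge, then the cheapest way to route the demand within the capacities $\x$ strictly underuses that edge, so no equilibrium flow can coincide with $\x$ and enforceability fails. A secondary but genuine technicality is the sign bookkeeping of the multipliers, ensuring the capacity duals emerge nonnegative and hence constitute admissible tolls; this is precisely where the $\leq$ form of Constraint~\eqref{eq:OPTcon3-Roughgarden} is used.
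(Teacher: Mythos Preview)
The paper does not prove this lemma; it is quoted verbatim as \cite[Theorem 3.1]{multicommodity-extension} and used as a black box to establish Lemma~\ref{lem:single-od-result}. Your LP-duality/KKT argument is exactly the standard proof of that cited result: fixing $\x$ makes the travel times constants, so the equilibrium conditions under a common toll vector $\{\tau_e\}$ coincide with the stationarity and complementary-slackness conditions of Program~\eqref{eq:OPT-Obj-Roughgarden}--\eqref{eq:OPTcon3-Roughgarden}, with the capacity multipliers furnishing nonnegative tolls. Both directions you sketch are correct, and your remark that the $\leq$ form of~\eqref{eq:OPTcon3-Roughgarden} is what forces $\tau_e\geq 0$ is precisely the point.

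Your caveat about non-uniqueness is well placed: the clean reading of the lemma is ``there exists an optimal solution at which every capacity constraint is tight,'' and that is what your ``only if'' direction delivers. The contrapositive you offer at the end (``if some optimum leaves slack \ldots enforceability fails'') is not quite right as stated, since different optima could saturate different edges; but this is harmless because the existential reading is the operative one, and it is all the paper needs to invoke the lemma for $\x(\alpha)$.
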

\else
\begin{lemma} [Condition for Flow Enforceability]
\cite[Theorem 3.1]{multicommodity-extension})
\label{lem:pricing-flow-implement}
Suppose that the non-negative edge flow $\x$ satisfies edge flow and demand constraints as in Definition~\ref{def:so-tap}. Further, consider the linear program: $\min_{d_{P}^{k} \in \Tilde{\Omega}} \sum_{k \in K} v_k \sum_{P \in \mathcal{P}_k}  t_P(\x) d_{P}^{k}$, where the non-negative variables $d_{P}^{k}$ represent the flow of commodity $k$ on path $P \in \mathcal{P}_k$, and $\mathcal{P}_k$ denotes the set of all possible paths for commodity $k$.
Here $\Tilde{\Omega}$ is the set described by non-negative flows satisfying capacity
constraints, i.e., $\sum_{k \in K} \sum_{P \in \mathcal{P}_{k}: e \in P} d_{P}^{k} \leq x_e$ for all edges $e \in E$, and demand constraints, i.e., $\sum_{P \in \mathcal{P}_{k}} d_{P}^{k} = d_{k}$ for all $k \in K$. Then $\x$ can be enforced as a UE if and only if the capacity constraints are met with equality for each edge \ifarxiv $e \in E$ \fi at the optimal solution of the linear program.
\end{lemma}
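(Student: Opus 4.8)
Since the statement is a known duality characterization due to \cite{multicommodity-extension}, the plan is to reprove it through linear-programming duality and complementary slackness, treating the dual variables of the capacity constraints as the enforcing edge tolls. First I would form the dual of the stated linear program. Because the travel times $t_P(\x) = \sum_{e \in P} t_e(x_e)$ are constants once $\x$ is fixed, the program is an ordinary min-cost multicommodity flow problem: associate a free multiplier $\lambda_k$ with each demand constraint $\sum_{P \in \mathcal{P}_k} d_P^k = d_k$ and a nonnegative multiplier $\tau_e \geq 0$ with each capacity constraint $\sum_{k \in K}\sum_{P \in \mathcal{P}_k : e \in P} d_P^k \leq x_e$. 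The dual constraint dual to the primal variable $d_P^k$ then reads $\lambda_k \leq v_k t_P(\x) + \sum_{e \in P} \tau_e = \sum_{e \in P}\left( v_k t_e(x_e) + \tau_e \right)$, whose right-hand side is exactly the toll-augmented travel cost $C_P(\x, \boldsymbol{\tau})$ that a commodity-$k$ user perceives on path $P$ under the toll vector $\boldsymbol{\tau} = \{\tau_e\}_{e \in E}$.

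The crux of the argument is to identify the optimal dual multipliers $\tau_e$ with edge tolls and to read the user-equilibrium condition off of complementary slackness. At any pair of primal-dual optima, slackness gives two facts: (i) whenever $d_P^k > 0$ the corresponding dual constraint is tight, i.e.\ $\lambda_k = \sum_{e \in P}(v_k t_e(x_e) + \tau_e)$, so every path carrying positive flow attains the minimum perceived cost $\lambda_k$ for its commodity --- this is precisely the equilibrium condition under the tolls $\tau_e$; and (ii) whenever an edge has slack capacity, $\sum_{k \in K} \sum_{P \in \mathcal{P}_k : e \in P} d_P^k < x_e$, the associated toll vanishes, $\tau_e = 0$.

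With this dictionary in place, both directions of the equivalence follow. For the ``if'' direction, suppose the capacity constraints hold with equality at an optimal solution $d^*$. Then $d^*$ routes exactly $x_e$ units through each edge $e$, so its edge decomposition is $\x$ itself, and by (i) this routing sends every commodity along minimum-cost paths under the dual-optimal tolls $\tau^*$; hence $\x$ is enforceable as a UE. For the ``only if'' direction, suppose $\x$ is enforceable under some nonnegative tolls $\boldsymbol{\tau}^0$, and let $\hat{d}$ be a UE path decomposition of $\x$, which by definition saturates every edge, i.e.\ $\sum_{k \in K} \sum_{P \in \mathcal{P}_k : e \in P} \hat{d}_P^k = x_e$, and sends each commodity along minimum-cost paths. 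Taking $\tau_e = \tau_e^0$ and $\lambda_k = \min_{P \in \mathcal{P}_k} C_P(\x, \boldsymbol{\tau}^0)$ yields a dual-feasible point, and one checks directly that $\hat{d}$ together with $(\lambda, \tau)$ satisfies complementary slackness; hence $\hat{d}$ is primal-optimal and meets every capacity constraint with equality.

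The main obstacle I anticipate is not any single computation but the careful bookkeeping of the primal-dual correspondence --- in particular verifying that the dual constraint's right-hand side coincides with the perceived path cost $C_P(\x, \boldsymbol{\tau})$, so that complementary-slackness tightness translates verbatim into the equilibrium condition. A secondary subtlety is the phrase ``at the optimal solution'': when the linear program has multiple optima, enforceability corresponds to the existence of a capacity-saturating optimum (equivalently, to the optimal value equalling the value-weighted travel time $\sum_{k \in K} v_k \sum_{P \in \mathcal{P}_k} t_P(\x)\hat{d}_P^k$ of the full flow $\x$), and a degenerate optimum may leave slack on an edge whose toll is zero; since the statement is quoted from \cite[Theorem 3.1]{multicommodity-extension}, I would defer this edge case to that reference rather than re-derive it.
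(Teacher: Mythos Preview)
The paper does not prove this lemma at all --- it is quoted verbatim as Theorem~3.1 of \cite{multicommodity-extension} and used as a black box to establish Lemma~\ref{lem:single-od-result}. Your LP-duality/complementary-slackness argument is the standard proof of this type of enforceability characterization and is essentially correct; in particular, identifying the dual capacity multipliers with edge tolls and reading the Wardrop condition off the tightness of the path constraints is exactly the right dictionary.

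One small caveat in your ``only if'' direction: you assume the enforcing tolls $\boldsymbol{\tau}^0$ are nonnegative so that they are dual-feasible. Whether this is without loss of generality depends on the precise definition of ``enforceable as a UE'' in \cite{multicommodity-extension}; in the present paper tolls are implicitly nonnegative prices, so the assumption is harmless here, but if you were writing this out in full you would want to either state this hypothesis explicitly or argue that enforceability with arbitrary tolls implies enforceability with nonnegative tolls (which is not obvious in general). Your remark about degenerate optima is well taken and correctly deferred.
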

\fi
\ifarxiv
In particular, if $\x(\alpha)$ satisfies the above necessary and sufficient condition then it can be enforced as a user equilibrium 
through edge prices set based on the dual variables of the capacity constraints of the above linear program. \fi 
We now show that $\x(\alpha)$ satisfies the \ifarxiv necessary and sufficient \fi condition in Lemma~\ref{lem:pricing-flow-implement}.

\begin{lemma} \label{lem:single-od-result} [Heterogeneous User Flow Enforceability]
Suppose that the edge flow $\x(\alpha)$ is a solution for I-TAP$_\alpha$ for some $\alpha\in [0,1]$. Then for the heterogeneous user setting, $\x(\alpha)$ can be enforced as a user equilibrium.
\end{lemma}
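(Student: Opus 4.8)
The plan is to invoke the external characterization in Lemma~\ref{lem:pricing-flow-implement}: the flow $\x(\alpha)$ is enforceable as a (heterogeneous) user equilibrium if and only if every capacity constraint of the associated linear program is tight at its optimum, where the capacities are taken to be the edge values $x_e(\alpha)$ themselves. The entire task therefore reduces to showing that at any optimal solution $\{d_P^k\}$ of that program one has $\sum_{k \in K}\sum_{P \in \mathcal{P}_k: e \in P} d_P^k = x_e(\alpha)$ for every edge $e$. In fact I would prove the stronger statement that \emph{every} feasible solution of the LP, not merely an optimal one, saturates all the capacity constraints; this makes the argument independent of the LP objective and hence of the heterogeneous values of time $v_k$.

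To this end, fix any feasible $\{d_P^k\}$ and let $\hat{\x}$ denote its induced aggregate edge flow, $\hat{x}_e = \sum_{k \in K}\sum_{P \in \mathcal{P}_k: e \in P} d_P^k$. By construction $\hat{\x}$ routes every demand $d_k$ and hence lies in the feasible region of I-TAP$_\alpha$, while the capacity constraints give $\hat{\x} \le \x(\alpha)$ componentwise. The key observation is that $\x(\alpha)$ minimizes the I-TAP potential, which by Observation~\ref{obs:UE-equivalency} equals $\objitap(\x) = \sum_{e \in E}\int_0^{x_e} c_e(y,\alpha)\,\dd{y}$ with $c_e(y,\alpha) = t_e(y) + \alpha y t_e'(y)$. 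Since $t_e > 0$ and $t_e' \ge 0$, the integrand is strictly positive, so $\objitap$ is strictly increasing in each edge flow. Thus if $\hat{\x} \ne \x(\alpha)$, then from $\hat{\x} \le \x(\alpha)$ with strict inequality on some edge we would obtain $\objitap(\hat{\x}) < \objitap(\x(\alpha))$, contradicting the optimality of $\x(\alpha)$ for I-TAP$_\alpha$. Hence $\hat{\x} = \x(\alpha)$, every capacity constraint is tight, and Lemma~\ref{lem:pricing-flow-implement} yields enforceability, with the enforcing tolls read off as the optimal dual multipliers of the capacity constraints.

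The step I expect to be the main obstacle is exactly the passage from $\hat{\x} \le \x(\alpha)$ to $\hat{\x} = \x(\alpha)$, and it is worth flagging why the tempting shortcut fails. One might hope to argue that $\x(\alpha) - \hat{\x}$ is a nonnegative circulation supported on $\mathrm{supp}(\x(\alpha))$ and conclude that it vanishes by acyclicity of the support. However, in the multicommodity setting the aggregate support of an optimal flow may genuinely contain directed cycles, formed by distinct commodities routed in opposing directions, so acyclicity need not hold. The strict-monotonicity argument above sidesteps this entirely, relying only on positivity of the cost functions $c_e$ and on $\x(\alpha)$ being the global minimizer of $\objitap$ over the feasible region, a property that holds verbatim for all $\alpha \in [0,1]$ and for arbitrary networks, which is precisely the generality the lemma asserts.
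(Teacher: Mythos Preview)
Your proof is correct and follows essentially the same approach as the paper: use feasibility of the LP solution to obtain a feasible I-TAP flow that is componentwise at most $\x(\alpha)$, then invoke strict monotonicity of $\objitap$ in each edge flow to derive a contradiction with the optimality of $\x(\alpha)$. Your observation that the argument works for \emph{every} feasible LP solution (not just an optimal one) is a mild strengthening the paper also notes in a remark following its proof, and your discussion of why an acyclicity argument would fail is a helpful aside but not needed for the argument.
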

\ifarxiv 
\begin{proof}
To prove this result, from Lemma~\ref{lem:pricing-flow-implement} it suffices to show that Constraints~\eqref{eq:OPTcon3-Roughgarden} are met with equality at the optimal solution of the linear Program~\eqref{eq:OPT-Obj-Roughgarden}-\eqref{eq:OPTcon3-Roughgarden} for the flow $\x(\alpha)$. We now suppose by contradiction that the optimal solution $\Tilde{\x}$ to the linear Program~\eqref{eq:OPT-Obj-Roughgarden}-\eqref{eq:OPTcon3-Roughgarden} for the edge flow $\x(\alpha)$ is such that for at least one edge $e^*\in E$ the Constraint~\eqref{eq:OPTcon3-Roughgarden} is met with a strict inequality, i.e., $\Tilde{x}_{e^*}<x_{e^*}(\alpha)$. Since the flow $\Tilde{\x}$ is optimal for this linear program it follows that $\Tilde{\x}_{P} \geq 0$ for all paths $P \in \mathcal{P}_k$ for all commodities $k \in K$, and that $\sum_{P \in \mathcal{P}_k} \Tilde{\x}_{P} = d_k$ for each commodity $k \in K$ by the constraints of the linear program. Note here that the demand constraints satisfy $\sum_{P \in \mathcal{P}_k} \Tilde{\x}_{P} = d_k$ and the non-negativity constraints $\Tilde{\x}_{P} \geq 0$ imply that the corresponding demand and non-negativity constraints for I-TAP$_{\alpha}$ are also satisfied for the flow $\Tilde{\x}$. By the edge decomposition of path flows it must further hold that $\sum_{k \in K} \sum_{P \in \mathcal{P}_k: e \in P} \Tilde{\x}_{P} = \Tilde{x}_e$. Thus, we observe that the flow $\Tilde{\x}$ is a feasible solution to I-TAP$_{\alpha}$. 

Next, from the capacity constraint of the linear program it follows that for each $e \in E$ that $\Tilde{x}_e \leq x_e(\alpha)$ and that for at least one edge $e^*$ that $\Tilde{x}_{e^*}<x_{e^*}(\alpha)$ by assumption. Then, by the monotonicity of the I-TAP$_{\alpha}$ objective in the edge flows, we observe that $\objitap(\Tilde{\x}) < \objitap(\x(\alpha))$, implying that $\x(\alpha)$ is not an optimal solution to I-TAP$_{\alpha}$, a contradiction. Thus, the flow $\Tilde{\x}$ cannot exist, proving our claim that any optimal solution to the Program~\eqref{eq:OPT-Obj-Roughgarden}-\eqref{eq:OPTcon3-Roughgarden} for the edge flow $\x(\alpha)$ must meet the capacity constraint with equality.
\end{proof}
\else
\begin{proof}
To prove this result, from Lemma~\ref{lem:pricing-flow-implement} it suffices to show that the capacity constraints are met with equality at the optimal solution of the linear program in the statement of the lemma for the flow $\x(\alpha)$. We now suppose by contradiction that the optimal solution $\Tilde{\x}$ to this linear program for the flow $\x(\alpha)$ is such that for at least one edge $e^*\in E$ the capacity constraint is met with a strict inequality, i.e., $\Tilde{x}_{e^*}<x_{e^*}(\alpha)$. Since the flow $\Tilde{\x}$ is optimal for this linear program it follows that $\Tilde{\x}_{P} \geq 0$ for all paths $P \in \mathcal{P}_k$ for all commodities $k \in K$, and that $\sum_{P \in \mathcal{P}_k} \Tilde{\x}_{P} = d_k$ for each commodity $k \in K$ by the constraints of the linear program. \ifarxiv Note here that the demand constraints satisfy $\sum_{P \in \mathcal{P}_k} \Tilde{\x}_{P} = d_k$ and the non-negativity constraints $\Tilde{\x}_{P} \geq 0$ imply that the corresponding demand and non-negativity constraints for I-TAP$_{\alpha}$ are also satisfied for the flow $\Tilde{\x}$. \fi 
By the edge decomposition of path flows it must further hold that $\sum_{k \in K} \sum_{P \in \mathcal{P}_k: e \in P} \Tilde{\x}_{P} = \Tilde{x}_e$. Thus, \ifarxiv we observe that the flow \fi $\Tilde{\x}$ is a feasible solution to I-TAP$_{\alpha}$. 

Next, from the capacity constraint of the linear program it follows for each $e \in E$ that $\Tilde{x}_e \leq x_e(\alpha)$ and for \ifarxiv at least one edge \else some \fi $e^*$ that $\Tilde{x}_{e^*}<x_{e^*}(\alpha)$ by assumption. Then, by the monotonicity of the I-TAP$_{\alpha}$ objective in the edge flows, we observe that $\objitap(\Tilde{\x}) < \objitap(\x(\alpha))$, implying that $\x(\alpha)$ is not an optimal solution to I-TAP$_{\alpha}$, a contradiction. \ifarxiv Thus, the flow $\Tilde{\x}$ cannot exist, proving our claim that any optimal solution to the linear program for the flow $\x(\alpha)$ must meet the capacity constraint with equality. \else Thus, $\x(\alpha)$ must meet the capacity constraints with equality. \fi
\end{proof}
\fi

Lemma~\ref{lem:single-od-result} implies that even when users are heterogeneous the edge flow $\x(\alpha)$ 
can be enforced as an equilibrium flow using tolls set through the dual variables of \ifarxiv the capacity constraints of \fi a linear program. \ifarxiv \else The proof of Lemma~\ref{lem:single-od-result} points to a more general condition for flow enforceability with heterogeneous users, which is highlighted in Appendix C.\fi
\ifarxiv
\begin{remark}
We note that in the proof of Lemma~\ref{lem:single-od-result}, all that we required was that the objective function of I-TAP$_{\alpha}$ is monotonically increasing in the flow on each edge of the network. This suggests a more general sufficient condition for enforcing flows as a user equilibrium. In particular, any flow that satisfies the flow conservation constraints and is the solution of a convex program with some convex objective $f(\x)$ that is monotonically increasing in $x_e$ for each $e \in E$ can be induced as a user equilibrium. Note that the I-TAP$_{\alpha}$ objective is a special case of such a function $f(\x)$.
\end{remark}
\fi

\section{Numerical Experiments} \label{sec:numerical-experiments}

We now evaluate the performance of our I-TAP method for \fso on several real-world transportation networks. The results of our experiments not only characterize the behavior of I-TAP but also highlight that, compared to the algorithm in \cite{so-routing-seminal}, our approach has much smaller runtimes while 
achieving lower total travel times for most levels $\beta$ of unfairness. \ifarxiv In the following, we describe the implementation details of the I-TAP method and the unfairness metric as well as the data-sets we use. We further present the corresponding results to evaluate the performance of our approach. \else We present the implementation details of the I-TAP method and the unfairness metric in Appendix D. In the following, we describe the data-sets we use and present the corresponding results to evaluate the performance of our approach. \fi


\ifarxiv
\subsection{Implementation Details and Data Sets} \label{sec:impl-details}
\else
\subsection{Data Sets} \label{sec:impl-details}
\fi
\ifarxiv
We
tested our I-TAP method using a single-thread C++ implementation of
the Conjugate Frank-Wolfe algorithm~\cite{MitradjievaLindberg13},
which we made publicly available along with our data sets and results ({\small\texttt{github.com\slash StanfordASL\slash \{\href{https://github.com/StanfordASL/frank-wolfe-traffic}{frank-wolfe-traffic}, \href{https://github.com/StanfordASL/fair-routing}{fair-routing}\}}}). Our implementation is based on a previous repository which was developed for~\cite{BuchETAL18}. While there is a rich literature on algorithm design to solve the traffic assignment problem~\cite{BARGERA2002,BARGERA20101022}, we decided to use the Frank-Wolfe algorithm which was shown recently to be superior in terms of running time~\cite{BuchETAL18}. For shortest-path computation in the all-or-nothing routine, we used the LEMON Graph Library~\cite{Lemon}. Within the same framework we implemented the solution method for CSO that was presented in~\cite{so-routing-seminal}, where we used 
\texttt{r\_c\_shortest\_paths} within
the Boost C++ Libraries~\cite{Boost}
for constrained shortest-path search.

To compute the resulting unfairness level for a given path flow $\f$ from those approaches we first recover a path-based solution by recording the paths computed for each commodity $k$ in every Frank-Wolfe iteration, and discarding paths whose relative weight in the final solution is negligible. Denote by $\P^+_k$ the resulting collection of paths for a given commodity $k$ with strictly positive flow. We also maintain for each path $P\in \P^+_k$ the volume of flow used by the path for this commodity. Then we recover for each commodity $k$ the edge-based solution by computing for each edge $e\in E$ the total flow resulting from the paths~$\P^+_k$. Then we discard edges from the graph whose flow is $0$ for the commodity $k$, which yields the DAG $G_k = (V_k,E_k)$, with vertices $V_k$ and edges $E_k$. 

Finally, to compute the unfairness level for this commodity, we compute the shortest and longest paths from origin to destination over the graph $G_k=(V_k,E_k)$, where a weight for a given edge $e\in E_k$ is set to be $t_e(x_e)$, i.e., the travel time on the edge given the flows of all the commodities combined with respect to the edge flow solution $\x$ corresponding to the path flow $\f$. To compute the shortest path over $G_k$ we simply run a Dijkstra search, whose running time is $\Theta(|E_k|+|V_k|\log |V_k|)$. Although for general graphs computing the longest path is NP-hard, for the case of a DAG, we can compute it for the same running time as Dijkstra by negating the edge weights, i.e., using the weights $-t_e(x_e)$ and then finding the shortest path~\cite{dag-book-algos}.

All results 
were obtained using a commodity laptop equipped with 2.80GHz 4-core i7-7600U CPU, and 16GB of RAM, running 64bit Ubuntu 20.04 OS. We ran the Frank-Wolfe algorithm for $100$ iterations on each data-set, both for I-TAP and the method in~\cite{so-routing-seminal}. We mention that 
this number of iterations generally allows the Frank-Wolfe algorithm to achieve a relative error of at most $10^{-5}$ when searching for UE and SO solutions over larger scenarios of the traffic assignment problem~\cite{MitradjievaLindberg13}.
\fi


Table~\ref{tab:problem-instances} shows the six instances we use for our study, which were obtained from \cite{tntp}. 
We use 
the BPR travel time function~\cite{Sheffi1985}, 
defined as
\ifarxiv 
\begin{align} \label{eq:BPR}
    t_e(x_e) = \xi_e \left(1+ a \left(\tfrac{x_e}{\kappa_e} \right)^b \right),
\end{align}
where $a,b$ are constants, $\xi_e$ is the free-flow travel time on edge $e$, and $\kappa_e$ is the capacity of edge $e$, which is the number of users beyond which the travel time on the edge rapidly increases. Since the constants $a = 0.15$ and $b = 4$ are typically chosen, we use these constants for the numerical experiments. 
\else
$t_e(x_e) = \xi_e \big(1+ 0.15 \big(\tfrac{x_e}{\kappa_e} \big)^4 \big)$,
where $\xi_e$ is the free-flow travel time on edge $e$, and $\kappa_e$ is the capacity of edge $e$, which is the number of users beyond which the travel time on the edge rapidly increases.
\fi 

\begin{table}[t] 
\centering
\caption{{\small \sf Problem instance attributes and computation time. For each instance we report the number of vertices $|V|$, edges $|E|$, and 
OD pairs $|K|$. In addition, we report the computation time of each instance for the previous method of Jahn et al.~\cite{so-routing-seminal} and our I-TAP method using 100 iterations of the Frank-Wolfe algorithm. }  }
\footnotesize
\begin{tabular}{l|ccc|cc}
\toprule

 &      \multicolumn{3}{|c|}{attributes} & \multicolumn{2}{c}{runtime (sec.)}    \\ 
Region Name  & $|V|$ & $|E|$ & $|K|$ & Jahn et al. & I-TAP \\
\midrule 
Sioux Falls (SF) & 24 & 76 & 528 & 20.0 & 0.03 \\
Anaheim (A) & 416 & 914  & 1406 & 74.0 & 0.33 \\
Massachusetts (M) & 74 & 258 & 1113  & 24.3 & 0.09 \\
Tiergarten (T)  & 361 & 766 & 644  & 18.2 & 0.20 \\
Friedrichshain (F)  & 224 & 523 & 506  & 19.8 & 0.12 \\
Prenzlauerberg (P)\  & 352 & 749 & 1406  & 74.4 & 0.32 \\
\bottomrule
\end{tabular} \label{tab:problem-instances}
\end{table}

\subsection{Results} \label{sec:numerical-results}


\fakeparagraph{Assessment of Theoretical Upper Bounds.} 
\ifarxiv
We first assess the theoretical upper bounds on the inefficiency ratio and level of unfairness that were obtained in Section~\ref{sec:eff-properties} with respect to the convex combination parameter~$\alpha$. The latter is obtained using a dense sampling method with increments of $0.01$. We present the results for the Prenzlauerberg data-set and note that the results and the following discussion extend to other problem instances in Table~\ref{tab:problem-instances} as well.
\else 
We now assess the theoretical upper bounds on the inefficiency ratio and unfairness that were obtained in Section~\ref{sec:eff-properties}. We present the results for the Prenzlauerberg data-set and note that the results and the following discussion extend to other problem instances in Table~\ref{tab:problem-instances} as well.
\fi


\ifarxiv
Figure~\ref{fig:theory_bounds} (left) depicts both (i) the change in the inefficiency ratio of the solution $\x(\alpha)$ using dense sampling
and (ii) the theoretical upper bound of the inefficiency ratio (Theorem~\ref{thm:eff-upper-bound}). As expected, the dense sampling procedure results in an inefficiency ratio that is below the theoretical upper bound for every value of $\alpha$.  

The comparison between the theoretically guaranteed level of unfairness for every value of $\alpha$ as obtained in Theorem~\ref{thm:poly-ttf-result} and the actual unfairness level of the I-TAP method is depicted in Figure~\ref{fig:theory_bounds} (right). Since the BPR travel time function we used in this work has a degree of four, we have that for any value of $\alpha$, we can guarantee a level of unfairness of $4 \alpha + 1$ by Theorem~\ref{thm:poly-ttf-result}. 
Figure~\ref{fig:theory_bounds} (right) suggests that the theoretical upper bound is even more conservative for the case of unfairness as there is an even larger gap between the actual solution and the theoretical bound. 

These findings further highlight the efficacy of the I-TAP approach for practical applications. 
\else
Figure~\ref{fig:theory_bounds} depicts both (i) the change in the inefficiency ratio (left) and unfairness (right) of the solution $\x(\alpha)$ using dense sampling, and (ii) the theoretical upper bound of the inefficiency ratio (Theorem~\ref{thm:eff-upper-bound}) and unfairness (Theorem~\ref{thm:poly-ttf-result}). Since the BPR travel time function we used in this work has a degree of four, Theorem~\ref{thm:poly-ttf-result} implies that for any $\alpha$ we can guarantee an unfairness of $4 \alpha + 1$, as depicted in Figure~\ref{fig:theory_bounds} (right). As expected, the dense sampling procedure results in both an inefficiency ratio and unfairness that is below the theoretical upper bound for every value of $\alpha$, which highlights the efficacy of the I-TAP approach for practical applications.



\fi

\newcommand{\sfwidth}{0.5\columnwidth}
\newcommand{\siwidth}{0.99\linewidth}
\ifarxiv
\begin{figure*}
    \centering
    \begin{subfigure}[t]{0.48\columnwidth}
        \centering

\begin{tikzpicture}

\definecolor{color0}{rgb}{0.12156862745098,0.466666666666667,0.705882352941177}
\definecolor{color1}{rgb}{1,0.498039215686275,0.0549019607843137}

\begin{axis}[
width=\siwidth,
height=2.1in,
legend cell align={left},
legend style={fill opacity=0.8, draw opacity=1, text opacity=1, draw=white!80!black},
tick align=outside,
tick pos=left,
x grid style={white!69.0196078431373!black},
xlabel={Convex Combination Parameter \(\displaystyle \alpha\)},
xmin=-0.05, xmax=1.05,
xticklabels={0.0, 0.0, 0.2, 0.4, 0.6, 0.8, 1.0},
xtick style={color=black},
y grid style={white!69.0196078431373!black},
ylabel style={align=center},
ylabel={Inefficiency Ratio \\ $\rho(\mathbf{x}(\alpha))$},
ymin=0.995, ymax=1.05,
yticklabels={0.00, 0.00, 1.00, 1.02, 1.04},
ytick style={color=black},
style={font=\footnotesize}
]
\addplot [dashed, line width=1.7pt, color=color0]
table {%
0 1.03651678860029
0.01 1.03609910591055
0.02 1.03324009505072
0.03 1.03262191137097
0.04 1.03070469010758
0.05 1.02974824992738
0.06 1.02846964152058
0.07 1.02815256911501
0.08 1.0271488476397
0.09 1.02616145688888
0.1 1.02519135247545
0.11 1.02402686088963
0.12 1.02253820801686
0.13 1.02197753879526
0.14 1.02106097911479
0.15 1.02011538008418
0.16 1.01925951758568
0.17 1.01857902037866
0.18 1.01751183926772
0.19 1.01650405429168
0.2 1.01551141297642
0.21 1.01459799542885
0.22 1.01373128304158
0.23 1.01290304548316
0.24 1.01212760365757
0.25 1.01155432680775
0.26 1.01109739429685
0.27 1.01043627542453
0.28 1.0100535853425
0.29 1.00970523818567
0.3 1.00938790295026
0.31 1.00905742102085
0.32 1.00870836477815
0.33 1.00834031240536
0.34 1.00814606041963
0.35 1.00785953855725
0.36 1.00744243346307
0.37 1.00665476361315
0.38 1.00674059104893
0.39 1.00624079350153
0.4 1.00605671284306
0.41 1.00573026949065
0.42 1.00540102682032
0.43 1.00516554968778
0.44 1.00489950778057
0.45 1.00459840672705
0.46 1.00437759739715
0.47 1.00400041239584
0.48 1.00389769493426
0.49 1.00367829773927
0.5 1.00346939992974
0.51 1.00328433343489
0.52 1.00309205670638
0.53 1.00288039754718
0.54 1.00275519248242
0.55 1.00252030332708
0.56 1.00242025608268
0.57 1.00223409913522
0.58 1.00213584581836
0.59 1.00199737351715
0.6 1.00170385232456
0.61 1.00176107270509
0.62 1.00167352892064
0.63 1.00154169952747
0.64 1.00143690974845
0.65 1.00134640343252
0.66 1.00125057169335
0.67 1.00113991898295
0.68 1.00108426545717
0.69 1.00100518035166
0.7 1.00094283123062
0.71 1.0008038293171
0.72 1.000814934642
0.73 1.0007357139548
0.74 1.00072900055113
0.75 1.00062683866968
0.76 1.00058125377237
0.77 1.00054202948627
0.78 1.00059355959156
0.79 1.0004363768183
0.8 1.00039700683399
0.81 1.00036457131249
0.82 1.00033337407348
0.83 1.00028743723977
0.84 1.00024293465843
0.85 1.00021761890909
0.86 1.00018517052544
0.87 1.00015872572251
0.88 1.0001245120497
0.89 1.00011404126433
0.9 1.00009203078653
0.91 1.00006288667275
0.92 1.0000523346388
0.93 1.00004196260368
0.94 1.00004259544057
0.95 1.00001302923219
0.96 1.00000592511515
0.97 1.00000545430511
0.98 0.999992809467093
0.99 1.00000772372461
1 1
};
\addlegendentry{Sampling}
\addplot [dashed, line width=1.7pt, color=color1]
table {%
0 1.03651678860029
0.01 1.03651678860029
0.02 1.03651678860029
0.03 1.03651678860029
0.04 1.03651678860029
0.05 1.03651678860029
0.06 1.03651678860029
0.07 1.03651678860029
0.08 1.03651678860029
0.09 1.03651678860029
0.1 1.03651678860029
0.11 1.03651678860029
0.12 1.03651678860029
0.13 1.03651678860029
0.14 1.03651678860029
0.15 1.03651678860029
0.16 1.03651678860029
0.17 1.03651678860029
0.18 1.03651678860029
0.19 1.03651678860029
0.2 1.03651678860029
0.21 1.03651678860029
0.22 1.03651678860029
0.23 1.03651678860029
0.24 1.03651678860029
0.25 1.03651678860029
0.26 1.03651678860029
0.27 1.03651678860029
0.28 1.03651678860029
0.29 1.03651678860029
0.3 1.03651678860029
0.31 1.03651678860029
0.32 1.03562115772471
0.33 1.03403376210418
0.34 1.03253974269662
0.35 1.03113109582664
0.36 1.0298007071161
0.37 1.02854223130883
0.38 1.02734999107037
0.39 1.02621889135695
0.4 1.02514434662921
0.41 1.02412221871745
0.42 1.0231487635634
0.43 1.02222058539325
0.44 1.02133459713993
0.45 1.02048798614232
0.46 1.01967818431851
0.47 1.01890284214678
0.48 1.01815980589887
0.49 1.01744709766108
0.5 1.01676289775281
0.51 1.01610552921348
0.52 1.01547344407951
0.53 1.01486521121475
0.54 1.01427950549313
0.55 1.01371509816139
0.56 1.01317084823435
0.57 1.01264569479598
0.58 1.01213865009686
0.59 1.01164879335364
0.6 1.01117526516854
0.61 1.0107172624977
0.62 1.01027403410656
0.63 1.009844876458
0.64 1.00942912998595
0.65 1.00902617571305
0.66 1.00863543217569
0.67 1.00825635262452
0.68 1.00788842247191
0.69 1.00753115696141
0.7 1.00718409903692
0.71 1.00684681739199
0.72 1.00651890468165
0.73 1.00619997588117
0.74 1.00588966677801
0.75 1.00558763258427
0.76 1.00529354665878
0.77 1.00500709932876
0.78 1.00472799680207
0.79 1.00445596016214
0.8 1.0041907244382
0.81 1.00393203774449
0.82 1.00367966048232
0.83 1.00343336459997
0.84 1.0031929329053
0.85 1.00295815842697
0.86 1.00272884382022
0.87 1.00250480081364
0.88 1.00228584969356
0.89 1.00207181882338
0.9 1.00186254419476
0.91 1.00165786900852
0.92 1.00145764328285
0.93 1.00126172348677
0.94 1.00106997219699
0.95 1.00088225777646
0.96 1.00069845407303
0.97 1.00051844013668
0.98 1.00034209995414
0.99 1.00016932219952
1 1
};
\addlegendentry{Theory Bound}
\end{axis}

\end{tikzpicture}
    \end{subfigure}
    \begin{subfigure}[t]{0.48\columnwidth}
        \centering
        \vspace{2.5pt}

\begin{tikzpicture}

\definecolor{color0}{rgb}{0.12156862745098,0.466666666666667,0.705882352941177}
\definecolor{color1}{rgb}{1,0.498039215686275,0.0549019607843137}

\begin{axis}[
width=\siwidth,
height=2.1in,
legend cell align={left},
legend style={
  fill opacity=0.8,
  draw opacity=1,
  text opacity=1,
  at={(0.03,0.97)},
  anchor=north west,
  draw=white!80!black
},
tick align=outside,
tick pos=left,
x grid style={white!69.0196078431373!black},
xlabel={Convex Combination Parameter \(\displaystyle \alpha\)},
xmin=-0.05, xmax=1.05,
xticklabels={0.0, 0.0, 0.2, 0.4, 0.6, 0.8, 1.0},
xtick style={color=black},
y grid style={white!69.0196078431373!black},
ylabel style={align=center},
ylabel={Unfairness \\ $U(\x(\alpha))$},
ymin=0.8, ymax=5.2,
ytick style={color=black},
style={font=\footnotesize}
]
\addplot [dashed, line width=1.7pt, color=color0]
table {%
0 1.03672245291724
0.01 1.04445746744004
0.02 1.0226344847168
0.03 1.02550179776974
0.04 1.04287080855659
0.05 1.05076227038471
0.06 1.05786024074513
0.07 1.05388727071992
0.08 1.06119700607259
0.09 1.18628935642275
0.1 1.1967912039846
0.11 1.21395587870278
0.12 1.23796790885949
0.13 1.24537947260722
0.14 1.26033373518824
0.15 1.2767270448256
0.16 1.29039888398771
0.17 1.30500702749838
0.18 1.31737457556133
0.19 1.33057673564651
0.2 1.34368011810848
0.21 1.35663634143437
0.22 1.3681592320483
0.23 1.38070454904429
0.24 1.39230905047826
0.25 1.40438803836272
0.26 1.41533613168528
0.27 1.425975934206
0.28 1.43663419499817
0.29 1.44682964565156
0.3 1.45762548231361
0.31 1.46741657602615
0.32 1.47694577100812
0.33 1.49076795080293
0.34 1.4956413097021
0.35 1.50461904869886
0.36 1.51386749588823
0.37 1.53805306958992
0.38 1.53111300536628
0.39 1.53922884253374
0.4 1.54757725236952
0.41 1.55435289610883
0.42 1.56605753078536
0.43 1.57157095627726
0.44 1.57916776911221
0.45 1.58572771075022
0.46 1.59375831823923
0.47 1.6065131196664
0.48 1.6081559088087
0.49 1.61516803979347
0.5 1.62210347703546
0.51 1.62870805659832
0.52 1.63543666623415
0.53 1.64245871468799
0.54 1.64757384713466
0.55 1.6585006939462
0.56 1.66111949639228
0.57 1.67076106669155
0.58 1.67191069770475
0.59 1.67745164444213
0.6 1.69040935023902
0.61 1.68872682281297
0.62 1.69522925412295
0.63 1.69990499073641
0.64 1.70503389001784
0.65 1.71044075716925
0.66 1.71547390222687
0.67 1.72127914579171
0.68 1.72532145913196
0.69 1.73053553677932
0.7 1.73459477204077
0.71 1.76698958420615
0.72 1.73931791357628
0.73 1.74943615581081
0.74 1.75189420260262
0.75 1.75815161279011
0.76 1.76270878449534
0.77 1.76711023325426
0.78 1.76210366503677
0.79 1.77589082131155
0.8 1.77959337551325
0.81 1.7840882476894
0.82 1.78647076475064
0.83 1.79396614573188
0.84 1.79564229911913
0.85 1.80052617131768
0.86 1.80340139370984
0.87 1.80511885674786
0.88 1.81108400933928
0.89 1.8137377847152
0.9 1.81775971141073
0.91 1.82209282795821
0.92 1.82877417629396
0.93 1.82835601887225
0.94 1.83723506461362
0.95 1.83525901128958
0.96 1.8391186241206
0.97 1.84214251019417
0.98 1.84544253125013
0.99 1.84849554216512
1 1.85219020369235
};
\addplot [dashed, line width=1.7pt, color=color1]
table {%
0 1
0.01 1.04
0.02 1.08
0.03 1.12
0.04 1.16
0.05 1.2
0.06 1.24
0.07 1.28
0.08 1.32
0.09 1.36
0.1 1.4
0.11 1.44
0.12 1.48
0.13 1.52
0.14 1.56
0.15 1.6
0.16 1.64
0.17 1.68
0.18 1.72
0.19 1.76
0.2 1.8
0.21 1.84
0.22 1.88
0.23 1.92
0.24 1.96
0.25 2
0.26 2.04
0.27 2.08
0.28 2.12
0.29 2.16
0.3 2.2
0.31 2.24
0.32 2.28
0.33 2.32
0.34 2.36
0.35 2.4
0.36 2.44
0.37 2.48
0.38 2.52
0.39 2.56
0.4 2.6
0.41 2.64
0.42 2.68
0.43 2.72
0.44 2.76
0.45 2.8
0.46 2.84
0.47 2.88
0.48 2.92
0.49 2.96
0.5 3
0.51 3.04
0.52 3.08
0.53 3.12
0.54 3.16
0.55 3.2
0.56 3.24
0.57 3.28
0.58 3.32
0.59 3.36
0.6 3.4
0.61 3.44
0.62 3.48
0.63 3.52
0.64 3.56
0.65 3.6
0.66 3.64
0.67 3.68
0.68 3.72
0.69 3.76
0.7 3.8
0.71 3.84
0.72 3.88
0.73 3.92
0.74 3.96
0.75 4
0.76 4.04
0.77 4.08
0.78 4.12
0.79 4.16
0.8 4.2
0.81 4.24
0.82 4.28
0.83 4.32
0.84 4.36
0.85 4.4
0.86 4.44
0.87 4.48
0.88 4.52
0.89 4.56
0.9 4.6
0.91 4.64
0.92 4.68
0.93 4.72
0.94 4.76
0.95 4.8
0.96 4.84
0.97 4.88
0.98 4.92
0.99 4.96
1 5
};
\end{axis}

\end{tikzpicture}
    \end{subfigure}
    \vspace{-15pt}
    \caption{{\small \sf Comparison between the inefficiency ratio of the solution $\x(\alpha)$ of I-TAP on the Prenzlauerberg data-set and the theoretical bound on the inefficiency ratio obtained in Theorem~\ref{thm:eff-upper-bound} (left). The comparison between the theoretical upper bound of the level of unfairness for any convex combination parameter $\alpha$ and the unfairness level of the solution $\x(\alpha)$ of I-TAP on the Prenzlauerberg data-set is shown on the right. The values of the convex combination parameter were chosen at increments of $0.01$.}
    } 
    \label{fig:theory_bounds}
\end{figure*}
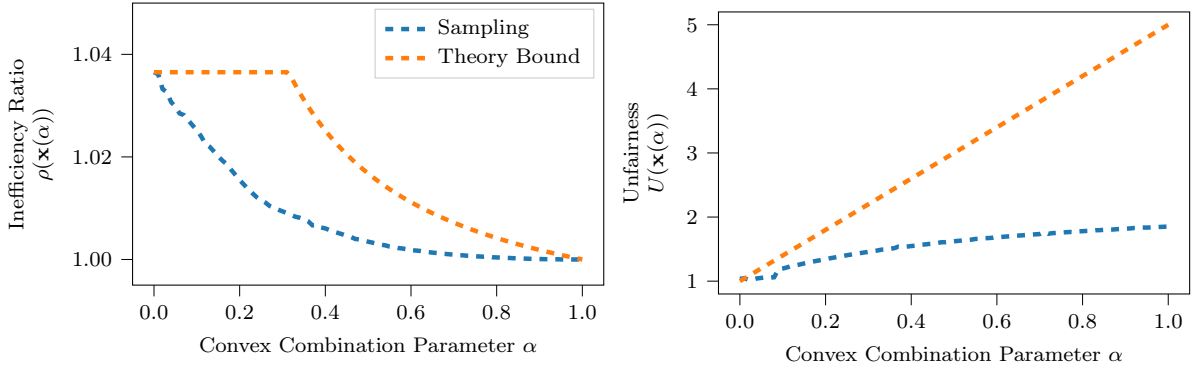
\else
\begin{figure}
    \centering
    \begin{subfigure}[t]{0.495\columnwidth}
        \centering

\begin{tikzpicture}

\definecolor{color0}{rgb}{0.12156862745098,0.466666666666667,0.705882352941177}
\definecolor{color1}{rgb}{1,0.498039215686275,0.0549019607843137}

\begin{axis}[
width=\siwidth,
height=1.4in,
legend cell align={left},
legend style={fill opacity=0.8, draw opacity=1, text opacity=1, draw=white!80!black},
tick align=outside,
tick pos=left,
x grid style={white!69.0196078431373!black},
xlabel={Convex Combination Parameter \(\displaystyle \alpha\)},
xmin=-0.05, xmax=1.05,
xticklabels={0.0, 0.0, 0.5, 1.0},
xtick style={color=black},
y grid style={white!69.0196078431373!black},
ylabel style={align=center},
ylabel={Inefficiency Ratio \\ $\rho(\mathbf{x}(\alpha))$},
ymin=0.995, ymax=1.08,
yticklabels={0.00, 0.00, 1.00, 1.02, 1.04, 1.06, 1.08},
ytick style={color=black},
style={font=\footnotesize}
]
\addplot [dashed, line width=1.7pt, color=color0]
table {%
0 1.03651678860029
0.01 1.03609910591055
0.02 1.03324009505072
0.03 1.03262191137097
0.04 1.03070469010758
0.05 1.02974824992738
0.06 1.02846964152058
0.07 1.02815256911501
0.08 1.0271488476397
0.09 1.02616145688888
0.1 1.02519135247545
0.11 1.02402686088963
0.12 1.02253820801686
0.13 1.02197753879526
0.14 1.02106097911479
0.15 1.02011538008418
0.16 1.01925951758568
0.17 1.01857902037866
0.18 1.01751183926772
0.19 1.01650405429168
0.2 1.01551141297642
0.21 1.01459799542885
0.22 1.01373128304158
0.23 1.01290304548316
0.24 1.01212760365757
0.25 1.01155432680775
0.26 1.01109739429685
0.27 1.01043627542453
0.28 1.0100535853425
0.29 1.00970523818567
0.3 1.00938790295026
0.31 1.00905742102085
0.32 1.00870836477815
0.33 1.00834031240536
0.34 1.00814606041963
0.35 1.00785953855725
0.36 1.00744243346307
0.37 1.00665476361315
0.38 1.00674059104893
0.39 1.00624079350153
0.4 1.00605671284306
0.41 1.00573026949065
0.42 1.00540102682032
0.43 1.00516554968778
0.44 1.00489950778057
0.45 1.00459840672705
0.46 1.00437759739715
0.47 1.00400041239584
0.48 1.00389769493426
0.49 1.00367829773927
0.5 1.00346939992974
0.51 1.00328433343489
0.52 1.00309205670638
0.53 1.00288039754718
0.54 1.00275519248242
0.55 1.00252030332708
0.56 1.00242025608268
0.57 1.00223409913522
0.58 1.00213584581836
0.59 1.00199737351715
0.6 1.00170385232456
0.61 1.00176107270509
0.62 1.00167352892064
0.63 1.00154169952747
0.64 1.00143690974845
0.65 1.00134640343252
0.66 1.00125057169335
0.67 1.00113991898295
0.68 1.00108426545717
0.69 1.00100518035166
0.7 1.00094283123062
0.71 1.0008038293171
0.72 1.000814934642
0.73 1.0007357139548
0.74 1.00072900055113
0.75 1.00062683866968
0.76 1.00058125377237
0.77 1.00054202948627
0.78 1.00059355959156
0.79 1.0004363768183
0.8 1.00039700683399
0.81 1.00036457131249
0.82 1.00033337407348
0.83 1.00028743723977
0.84 1.00024293465843
0.85 1.00021761890909
0.86 1.00018517052544
0.87 1.00015872572251
0.88 1.0001245120497
0.89 1.00011404126433
0.9 1.00009203078653
0.91 1.00006288667275
0.92 1.0000523346388
0.93 1.00004196260368
0.94 1.00004259544057
0.95 1.00001302923219
0.96 1.00000592511515
0.97 1.00000545430511
0.98 0.999992809467093
0.99 1.00000772372461
1 1
};
\addlegendentry{Sampling}
\addplot [dashed, line width=1.7pt, color=color1]
table {%
0 1.03651678860029
0.01 1.03651678860029
0.02 1.03651678860029
0.03 1.03651678860029
0.04 1.03651678860029
0.05 1.03651678860029
0.06 1.03651678860029
0.07 1.03651678860029
0.08 1.03651678860029
0.09 1.03651678860029
0.1 1.03651678860029
0.11 1.03651678860029
0.12 1.03651678860029
0.13 1.03651678860029
0.14 1.03651678860029
0.15 1.03651678860029
0.16 1.03651678860029
0.17 1.03651678860029
0.18 1.03651678860029
0.19 1.03651678860029
0.2 1.03651678860029
0.21 1.03651678860029
0.22 1.03651678860029
0.23 1.03651678860029
0.24 1.03651678860029
0.25 1.03651678860029
0.26 1.03651678860029
0.27 1.03651678860029
0.28 1.03651678860029
0.29 1.03651678860029
0.3 1.03651678860029
0.31 1.03651678860029
0.32 1.03562115772471
0.33 1.03403376210418
0.34 1.03253974269662
0.35 1.03113109582664
0.36 1.0298007071161
0.37 1.02854223130883
0.38 1.02734999107037
0.39 1.02621889135695
0.4 1.02514434662921
0.41 1.02412221871745
0.42 1.0231487635634
0.43 1.02222058539325
0.44 1.02133459713993
0.45 1.02048798614232
0.46 1.01967818431851
0.47 1.01890284214678
0.48 1.01815980589887
0.49 1.01744709766108
0.5 1.01676289775281
0.51 1.01610552921348
0.52 1.01547344407951
0.53 1.01486521121475
0.54 1.01427950549313
0.55 1.01371509816139
0.56 1.01317084823435
0.57 1.01264569479598
0.58 1.01213865009686
0.59 1.01164879335364
0.6 1.01117526516854
0.61 1.0107172624977
0.62 1.01027403410656
0.63 1.009844876458
0.64 1.00942912998595
0.65 1.00902617571305
0.66 1.00863543217569
0.67 1.00825635262452
0.68 1.00788842247191
0.69 1.00753115696141
0.7 1.00718409903692
0.71 1.00684681739199
0.72 1.00651890468165
0.73 1.00619997588117
0.74 1.00588966677801
0.75 1.00558763258427
0.76 1.00529354665878
0.77 1.00500709932876
0.78 1.00472799680207
0.79 1.00445596016214
0.8 1.0041907244382
0.81 1.00393203774449
0.82 1.00367966048232
0.83 1.00343336459997
0.84 1.0031929329053
0.85 1.00295815842697
0.86 1.00272884382022
0.87 1.00250480081364
0.88 1.00228584969356
0.89 1.00207181882338
0.9 1.00186254419476
0.91 1.00165786900852
0.92 1.00145764328285
0.93 1.00126172348677
0.94 1.00106997219699
0.95 1.00088225777646
0.96 1.00069845407303
0.97 1.00051844013668
0.98 1.00034209995414
0.99 1.00016932219952
1 1
};
\addlegendentry{Theory Bound}
\end{axis}

\end{tikzpicture}
    \end{subfigure}
    \begin{subfigure}[t]{0.495\columnwidth}
        \centering
        \vspace{2.5pt}

\begin{tikzpicture}

\definecolor{color0}{rgb}{0.12156862745098,0.466666666666667,0.705882352941177}
\definecolor{color1}{rgb}{1,0.498039215686275,0.0549019607843137}

\begin{axis}[
width=\siwidth,
height=1.4in,
legend cell align={left},
legend style={
  fill opacity=0.8,
  draw opacity=1,
  text opacity=1,
  at={(0.03,0.97)},
  anchor=north west,
  draw=white!80!black
},
tick align=outside,
tick pos=left,
x grid style={white!69.0196078431373!black},
xlabel={Convex Combination Parameter \(\displaystyle \alpha\)},
xmin=-0.05, xmax=1.05,
xticklabels={0.0, 0.0, 0.5, 1.0},
xtick style={color=black},
y grid style={white!69.0196078431373!black},
ylabel style={align=center},
ylabel={Unfairness \\ $U(\f(\alpha))$},
ymin=0.8, ymax=5.2,
ytick style={color=black},
style={font=\footnotesize}
]
\addplot [dashed, line width=1.7pt, color=color0]
table {%
0 1.03672245291724
0.01 1.04445746744004
0.02 1.0226344847168
0.03 1.02550179776974
0.04 1.04287080855659
0.05 1.05076227038471
0.06 1.05786024074513
0.07 1.05388727071992
0.08 1.06119700607259
0.09 1.18628935642275
0.1 1.1967912039846
0.11 1.21395587870278
0.12 1.23796790885949
0.13 1.24537947260722
0.14 1.26033373518824
0.15 1.2767270448256
0.16 1.29039888398771
0.17 1.30500702749838
0.18 1.31737457556133
0.19 1.33057673564651
0.2 1.34368011810848
0.21 1.35663634143437
0.22 1.3681592320483
0.23 1.38070454904429
0.24 1.39230905047826
0.25 1.40438803836272
0.26 1.41533613168528
0.27 1.425975934206
0.28 1.43663419499817
0.29 1.44682964565156
0.3 1.45762548231361
0.31 1.46741657602615
0.32 1.47694577100812
0.33 1.49076795080293
0.34 1.4956413097021
0.35 1.50461904869886
0.36 1.51386749588823
0.37 1.53805306958992
0.38 1.53111300536628
0.39 1.53922884253374
0.4 1.54757725236952
0.41 1.55435289610883
0.42 1.56605753078536
0.43 1.57157095627726
0.44 1.57916776911221
0.45 1.58572771075022
0.46 1.59375831823923
0.47 1.6065131196664
0.48 1.6081559088087
0.49 1.61516803979347
0.5 1.62210347703546
0.51 1.62870805659832
0.52 1.63543666623415
0.53 1.64245871468799
0.54 1.64757384713466
0.55 1.6585006939462
0.56 1.66111949639228
0.57 1.67076106669155
0.58 1.67191069770475
0.59 1.67745164444213
0.6 1.69040935023902
0.61 1.68872682281297
0.62 1.69522925412295
0.63 1.69990499073641
0.64 1.70503389001784
0.65 1.71044075716925
0.66 1.71547390222687
0.67 1.72127914579171
0.68 1.72532145913196
0.69 1.73053553677932
0.7 1.73459477204077
0.71 1.76698958420615
0.72 1.73931791357628
0.73 1.74943615581081
0.74 1.75189420260262
0.75 1.75815161279011
0.76 1.76270878449534
0.77 1.76711023325426
0.78 1.76210366503677
0.79 1.77589082131155
0.8 1.77959337551325
0.81 1.7840882476894
0.82 1.78647076475064
0.83 1.79396614573188
0.84 1.79564229911913
0.85 1.80052617131768
0.86 1.80340139370984
0.87 1.80511885674786
0.88 1.81108400933928
0.89 1.8137377847152
0.9 1.81775971141073
0.91 1.82209282795821
0.92 1.82877417629396
0.93 1.82835601887225
0.94 1.83723506461362
0.95 1.83525901128958
0.96 1.8391186241206
0.97 1.84214251019417
0.98 1.84544253125013
0.99 1.84849554216512
1 1.85219020369235
};
\addplot [dashed, line width=1.7pt, color=color1]
table {%
0 1
0.01 1.04
0.02 1.08
0.03 1.12
0.04 1.16
0.05 1.2
0.06 1.24
0.07 1.28
0.08 1.32
0.09 1.36
0.1 1.4
0.11 1.44
0.12 1.48
0.13 1.52
0.14 1.56
0.15 1.6
0.16 1.64
0.17 1.68
0.18 1.72
0.19 1.76
0.2 1.8
0.21 1.84
0.22 1.88
0.23 1.92
0.24 1.96
0.25 2
0.26 2.04
0.27 2.08
0.28 2.12
0.29 2.16
0.3 2.2
0.31 2.24
0.32 2.28
0.33 2.32
0.34 2.36
0.35 2.4
0.36 2.44
0.37 2.48
0.38 2.52
0.39 2.56
0.4 2.6
0.41 2.64
0.42 2.68
0.43 2.72
0.44 2.76
0.45 2.8
0.46 2.84
0.47 2.88
0.48 2.92
0.49 2.96
0.5 3
0.51 3.04
0.52 3.08
0.53 3.12
0.54 3.16
0.55 3.2
0.56 3.24
0.57 3.28
0.58 3.32
0.59 3.36
0.6 3.4
0.61 3.44
0.62 3.48
0.63 3.52
0.64 3.56
0.65 3.6
0.66 3.64
0.67 3.68
0.68 3.72
0.69 3.76
0.7 3.8
0.71 3.84
0.72 3.88
0.73 3.92
0.74 3.96
0.75 4
0.76 4.04
0.77 4.08
0.78 4.12
0.79 4.16
0.8 4.2
0.81 4.24
0.82 4.28
0.83 4.32
0.84 4.36
0.85 4.4
0.86 4.44
0.87 4.48
0.88 4.52
0.89 4.56
0.9 4.6
0.91 4.64
0.92 4.68
0.93 4.72
0.94 4.76
0.95 4.8
0.96 4.84
0.97 4.88
0.98 4.92
0.99 4.96
1 5
};
\end{axis}

\end{tikzpicture}
    \end{subfigure}
    \vspace{-15pt}
    \caption{{\small \sf Comparison between the inefficiency ratio (left) and unfairness (right) of the solution $\x(\alpha)$ of I-TAP sampling on the Prenzlauerberg data-set and the theoretical bounds obtained in Theorems~\ref{thm:eff-upper-bound} and~\ref{thm:poly-ttf-result}. 
    The convex combination parameters were chosen at increments of $0.01$.}
    } \vspace{-10pt} 
    \label{fig:theory_bounds}
\end{figure}
\fi

\fakeparagraph{Behavior of I-TAP.} For each of the transportation networks in Table~\ref{tab:problem-instances}, we now study the relationship between the convex combination parameter $\alpha$ and the (i) total travel time, and (ii) unfairness. \ifarxiv To review these relationships, we consider $\alpha$ to lie in the set $\A_s$ with $s = 0.01$ increments. \fi


Figure~\ref{fig:tt_v_alpha_and_beta_v_alpha_finer_discretization} (left) shows the relationship between the inefficiency ratio and 
$\alpha$. Note that when $\alpha = 1$, the inefficiency ratio is one, since the interpolated objective is the SO-TAP objective, and when $\alpha = 0$, the inefficiency ratio is the Price of Anarchy (PoA), since the interpolated objective is the UE-TAP objective. 
\ifarxiv  As shown on the left in Figure~\ref{fig:tt_v_alpha_and_beta_v_alpha_finer_discretization}, the inefficiency ratio is always between one and the PoA for each of the transportation networks, which corroborates the bound on the total travel time for any convex combination parameter, as obtained in Theorem~\ref{thm:eff-upper-bound}. Furthermore, the inefficiency ratio varies continuously in 
$\alpha$, which further aligns with the continuity of the total travel time function $\objso$ in $\alpha$, as is characterized in Section~\ref{sec:solution-properties}. The jumps in the inefficiency ratio that can be observed for certain values of $\alpha$ 
for Sioux Falls aligns with the continuity result since the relative magnitude of the jumps is small. In particular, the change in the total travel time is less than 2\% for a 1\% change in the value of $\alpha$.
\else 
Furthermore, the inefficiency ratio varies continuously in $\alpha$, which aligns with the continuity of the travel time function $\objso$ in $\alpha$ (see Appendix B) and can be observed since the relative magnitude of the jumps in the inefficiency ratio for small changes in $\alpha$ is small.
\fi    

The relationship between unfairness and 
$\alpha$ is depicted on the right in Figure~\ref{fig:tt_v_alpha_and_beta_v_alpha_finer_discretization}. For readability of this figure, we marked outliers as points where large changes in the unfairness occur for small changes in $\alpha$. 
For an explanation of the jumps in the unfairness at certain values of $\alpha$, see the discussion in \ifarxiv Section~\ref{sec:solution-properties}\else Appendix B\fi.

Finally, for each transportation network the general trend of a decrease in the inefficiency ratio and an increase in the unfairness with an increase in $\alpha$ suggests that decreasing the total travel time comes at the cost of an increase in the unfairness and vice versa.



\begin{figure*}
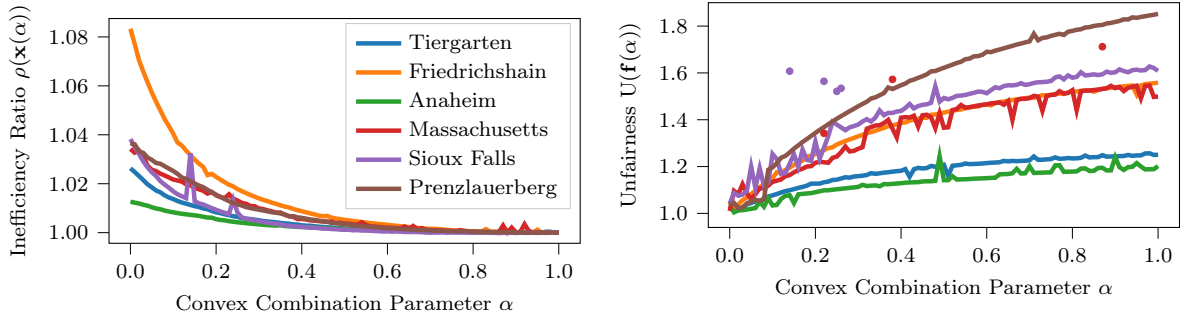

    \centering
    \begin{subfigure}[t]{0.48\columnwidth}
        \centering
            \include{Fig/cost_v_alpha_0.01}
    \end{subfigure}
    \begin{subfigure}[t]{0.48\columnwidth}
        \centering
        \vspace{2.5pt}
        \include{Fig/alpha_v_beta_outliers}
    \end{subfigure}
    \vspace{-15pt}
    \caption{{\small \sf Variation in the inefficiency ratio (left) and the level of unfairness (right) of the optimal solution $\mathbf{x}(\alpha)$ of I-TAP$_{\alpha}$ with the parameter $\alpha \in [0, 1]$ for six different transportation networks from Table~\ref{tab:problem-instances}. The values of the convex combination parameter were chosen at increments of $0.01$.}
    } 
    \label{fig:tt_v_alpha_and_beta_v_alpha_finer_discretization}
\end{figure*}




\fakeparagraph{Solution Quality Comparison.} We now explore the efficiency-fairness tradeoff through a comparison of the Pareto frontier of the I-TAP method to the approach in \cite{so-routing-seminal}, which is a benchmark solution for fair traffic routing, and the I-Solution method described in Section~\ref{sec:algos}. 
To this end, we depict the Pareto frontier of the (i) I-TAP method for $0.01$ and $0.05$
increments of the 
parameter \ifarxiv$\alpha$, \else $\alpha$ (see Appendix E for a comparison between these Pareto frontiers), \fi (ii) I-Solution method for $0.01$ increments of the convex combination parameter $\gamma$, and (iii) Jahn et al.'s approach \cite{so-routing-seminal} for $0.05$ increments of the normal unfairness parameter $\phi$ lying between one and two. 




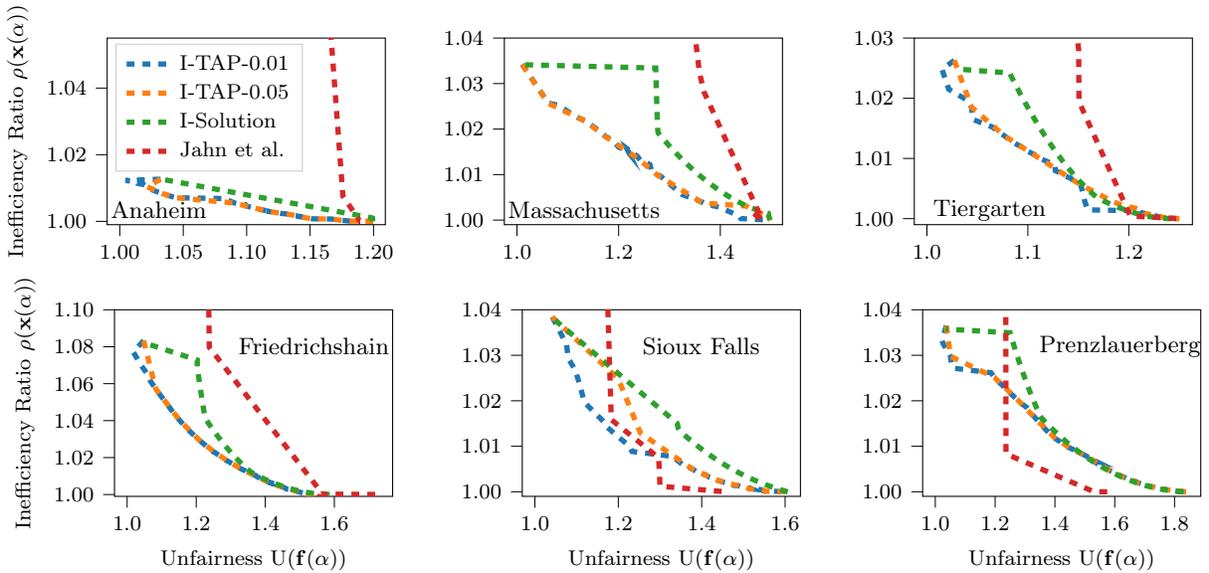
\begin{figure*}
    \centering
    \begin{subfigure}[btp]{0.323\columnwidth}
        \centering
\begin{tikzpicture}

\definecolor{color0}{rgb}{0.12156862745098,0.466666666666667,0.705882352941177}
\definecolor{color1}{rgb}{1,0.498039215686275,0.0549019607843137}
\definecolor{color2}{rgb}{0.172549019607843,0.627450980392157,0.172549019607843}
\definecolor{color3}{rgb}{0.83921568627451,0.152941176470588,0.156862745098039}

\begin{axis}[
width = \siwidth,
height = 1.6in,
legend cell align={left},
legend style={
  fill opacity=0.8,
  draw opacity=1,
  text opacity=1,
  at={(0.03,0.97)},
  anchor=north west,
  draw=white!80!black
},
tick align=outside,
tick pos=left,
x grid style={white!69.0196078431373!black},
xmin=0.990035819317997, xmax=1.20928673089475,
xticklabels={0.00, 0.00, 1.00, 1.05, 1.10, 1.15, 1.20},
xtick style={color=black},
y grid style={white!69.0196078431373!black},
ylabel={Inefficiency Ratio \(\displaystyle \rho(\mathbf{x}(\alpha))\)},
ymin=0.999, ymax=1.055,
yticklabels={0.00, 0.00, 1.00, 1.02, 1.04},
ytick style={color=black},
style={font=\footnotesize}
]
\addplot [dashed, line width=2pt, color=color0]
table {%
1.03216119669047 1.01265315471268
1.0044897566147 1.01225288492082
1.01260734852479 1.01191904507463
1.01379583603029 1.01142381375606
1.01881809614894 1.0110124492958
1.02149820218692 1.01039067771015
1.02276312171845 1.00987786255013
1.02840240378469 1.00902442598697
1.03393768664059 1.00830465071282
1.03607425022471 1.00799248081522
1.04413805909871 1.00708905081794
1.0811721459356 1.00684693861472
1.08340075892904 1.00650449003206
1.08486532206039 1.00617137760173
1.08668665674585 1.00613020702589
1.09042465771605 1.00550156998356
1.09443358247832 1.00519999074778
1.09856213216164 1.00503238003864
1.09868873788211 1.00468393056898
1.1010729555084 1.00446461325952
1.10310868961225 1.00425031887833
1.10488335657065 1.00407368072452
1.10702671395972 1.00387315146655
1.10909450596478 1.0037038361224
1.11088022820249 1.00355003916991
1.11312043762464 1.00337889475555
1.11832150560789 1.00298301190617
1.11957239718466 1.00288441426009
1.11576299843076 1.00305009196332
1.12450214819407 1.00258902194077
1.12612854940984 1.00248056257713
1.12744798632572 1.00236745095086
1.12893258511788 1.002300547913
1.1311955842223 1.0022465208458
1.13216155957712 1.00215474985936
1.13302907757777 1.00208193191694
1.13345357755932 1.00200762116026
1.13627781829183 1.00191806209722
1.13711055209856 1.00185019411029
1.13724681942364 1.00176410377697
1.13932546097928 1.00165964368335
1.14171019107959 1.00148624054797
1.14171292312897 1.00140838904069
1.14442385650706 1.0012506913988
1.14522803298108 1.00117394594353
1.14574102750633 1.00111580041303
1.14686538133259 1.00106149080132
1.1482093785137 1.00099741876134
1.1488659400729 1.00094440473113
1.14965327876036 1.00088451135893
1.15062395280812 1.00083530277679
1.15142091365338 1.00078839081307
1.15226540279908 1.00075312935743
1.16678580372163 1.00071182649804
1.16762903190981 1.00066199777364
1.16828372103866 1.00062028381514
1.1704556182483 1.00049481587456
1.17124394966693 1.00046400788947
1.17270926611697 1.00039228193738
1.17336047256507 1.00036159346821
1.17417311790774 1.00033008696083
1.17479082160598 1.00030876127438
1.17534906643653 1.00028218163202
1.17746072261263 1.00021628392479
1.1781671985861 1.00019634652979
1.17863973074529 1.00018160479142
1.17937033822046 1.00015854319168
1.18097984313844 1.00011095481322
1.18171879975772 1.0000966828893
1.18336391765158 1.00005878547196
1.18387715543773 1.00005023892565
1.18546981957506 1.00002745289667
1.18780607586213 1.00000430615126
1.1885084951836 1.00000205872337
1.18875075894308 1
};
\addlegendentry{I-TAP-0.01}
\addplot [dashed, line width=2pt, color=color1]
table {%
1.03216119669047 1.0126505869617
1.02149820218692 1.01038811569606
1.03393768664059 1.00830209398821
1.04413805909871 1.00708649717568
1.09042465771605 1.00549902036662
1.10310868961225 1.00424777243415
1.11312043762464 1.00337635052101
1.11576299843076 1.00304754856251
1.12893258511788 1.00229800641279
1.13627781829183 1.00191552156686
1.14171019107959 1.00148370111257
1.14574102750633 1.00111326191694
1.15062395280812 1.00083276499196
1.16828372103866 1.00061774657552
1.17534906643653 1.00027964524971
1.17863973074529 1.00017906866414
1.18171879975772 1.00009414697735
1.2019843261943 1
};
\addlegendentry{I-TAP-0.05}
\addplot [dashed, line width=2pt, color=color2]
table {%
1.19932078036853 1
1.19920746641122 1.00098670913635
1.19874859382737 1.00105051714274
1.03495007928098 1.01250767697042
};
\addlegendentry{I-Solution}
\addplot [dashed, line width=2pt, color=color3]
table {%
1.00000176984421 1.90599615279674
1.17585465341427 1.00759134147248
1.18766847879678 1.00095753440826
1.18914090901968 1.00001776131243
1.18948116322059 0.999998787384622
1.18917921401213 1.00000246852559
1.1892481556778 1
};
\addlegendentry{Jahn et al.}
\end{axis}
\node[text width=1.5cm] at (0.8,0.2) {\small Anaheim};
\end{tikzpicture}
    \end{subfigure}
    \begin{subfigure}[btp]{0.323\columnwidth}
        \centering
        \vspace{10pt}
\begin{tikzpicture}

\definecolor{color0}{rgb}{0.12156862745098,0.466666666666667,0.705882352941177}
\definecolor{color1}{rgb}{1,0.498039215686275,0.0549019607843137}
\definecolor{color2}{rgb}{0.172549019607843,0.627450980392157,0.172549019607843}
\definecolor{color3}{rgb}{0.83921568627451,0.152941176470588,0.156862745098039}

\begin{axis}[
width = \siwidth,
height = 1.6in,
legend cell align={left},
legend style={fill opacity=0.8, draw opacity=1, text opacity=1, draw=white!80!black},
tick align=outside,
tick pos=left,
x grid style={white!69.0196078431373!black},
xmin=0.97512811431037, xmax=1.52230959948223,
xtick style={color=black},
xticklabels={0.00, 0.00, 1.0, 1.2, 1.4},
y grid style={white!69.0196078431373!black},
ymin=0.999, ymax=1.04,
restrict y to domain=0.999:1.06,
yticklabels={0.00, 0.00, 1.00, 1.01, 1.02, 1.03, 1.04},
ytick style={color=black},
style={font=\footnotesize}
]
\addplot [dashed, line width=2pt, color=color0]
table {%
1.0102572793435 1.03423045286804
1.06158675604029 1.02571243757666
1.08185840707965 1.02522150240824
1.10655186163974 1.0234053265193
1.11898706998626 1.02260234421599
1.13490304002042 1.02175922017397
1.13965646004481 1.02136926088818
1.14535785953177 1.02092627646149
1.15900691522328 1.01985855439678
1.16381814767336 1.01939585750602
1.1753132627482 1.018307262107
1.18474864835685 1.01777432193977
1.19094137876347 1.01717943306762
1.19992179533637 1.0160216536372
1.21614204695219 1.01527929304595
1.22063352222043 1.0146632044624
1.21614053255694 1.01573468782437
1.23713989762536 1.01398380473779
1.24247064651539 1.01316806472034
1.26155499212072 1.01220940170584
1.26446452107007 1.01093862529464
1.26731561266407 1.01085180694545
1.27600417506479 1.00995504949882
1.28249493982 1.00967637699464
1.32414057605043 1.00563350166797
1.36425652279932 1.00397657779287
1.3658437817358 1.00378626098321
1.36777972164666 1.00366953398749
1.39943436400079 1.00278477876433
1.4377449472494 1.00085309683641
1.44026706225733 1.00033734297481
1.49129560747643 1.00002869037902
1.4974377137926 1
};
\addplot [dashed, line width=2pt, color=color1]
table {%
1.0102572793435 1.03417885641055
1.06158675604029 1.02566126607225
1.13490304002042 1.02170824589062
1.1753132627482 1.01825646003749
1.21614204695219 1.01522864203801
1.24247064651539 1.01311751903893
1.27600417506479 1.00990466411071
1.3658437817358 1.0037361833482
1.45129209814348 1.00309755512462
1.46571727582233 1.00203404206859
1.47984899570926 1.00150456916744
1.49207964647016 1.0010669492154
1.49907694295267 1
};
\addplot [dashed, line width=2pt, color=color2]
table {%
1.49710132639213 1
1.49438075370009 1.00148608467195
1.49022205637303 1.0016170988604
1.48605814178785 1.00175365434957
1.48188906062303 1.00189579082527
1.47771486577138 1.00204354840028
1.47353561231376 1.00219696761446
1.46935135749226 1.00235608943485
1.4651621606831 1.00252095525578
1.46096808336904 1.00269160689898
1.45676918911145 1.00286808661364
1.45276108913496 1.00305043707656
1.44916686191664 1.00323870139225
1.44556634556334 1.00343292309298
1.44195957712449 1.00363314613896
1.43834659555669 1.00383941491837
1.43472744170535 1.00405177424753
1.43110215828607 1.00427026937093
1.42747078986569 1.00449494596139
1.42383338284293 1.00472585012015
1.4201899854288 1.00496302837694
1.41654064762668 1.00520652769014
1.41288542121205 1.00545639544683
1.40922435971202 1.00571267946292
1.4055575183844 1.00597542798324
1.40188495419671 1.00624468968164
1.39820672580467 1.00652051366113
1.39452289353065 1.00680294945392
1.39083351934166 1.00709204702158
1.38713866682721 1.00738785675509
1.38343840117685 1.00769042947501
1.37973278915754 1.00799981643152
1.37602189909069 1.00831606930453
1.37230580082903 1.00863924020382
1.36858456573328 1.00896938166913
1.36485826664855 1.00930654667022
1.36112697788056 1.00965078860704
1.3573907751717 1.01000216130978
1.35364973567682 1.01036071903899
1.34990393793892 1.01072651648568
1.3461534618646 1.01109960877145
1.34239838869942 1.01148005144854
1.33863880100301 1.01186790049997
1.3348747826241 1.01226321233965
1.33110641867539 1.01266604381243
1.32733379550826 1.01307645219428
1.32355700068742 1.01349449519232
1.31977612296536 1.01392023094497
1.31599125225674 1.01435371802202
1.31220247961267 1.01479501542478
1.30840989719491 1.0152441825861
1.30461359824997 1.01570127937058
1.30081367708311 1.01616636607458
1.29701022903234 1.01663950342637
1.29320335044228 1.01712075258621
1.28939313863803 1.01761017514649
1.28557969189894 1.01810783313177
1.28176310943242 1.01861378899896
1.27794349134765 1.01912810563734
1.27699041585867 1.01965084636874
1.27678852911069 1.02018207494758
1.2765886309851 1.02072185556101
1.27639066871863 1.021270252829
1.27619459023647 1.02182733180444
1.27600034415198 1.02239315797325
1.27580787976627 1.02296779725449
1.27561714706792 1.02355131600043
1.27542809673261 1.02414378099669
1.27524068012273 1.02474525946232
1.27505484928712 1.02535581904991
1.27487055696063 1.02597552784569
1.27468775656383 1.02660445436964
1.27450640220264 1.02724266757557
1.27432644866796 1.02789023685127
1.27414785143535 1.02854723201855
1.27397056666466 1.02921372333339
1.27379455119966 1.02988978148603
1.27361976256773 1.03057547760106
1.27344615897945 1.03127088323753
1.2732736993283 1.03197607038906
1.27310234319025 1.03269111148395
1.27293205082345 1.03341607938524
1.01024826703043 1.03415104739086
};
\addplot [dashed, line width=2pt, color=color3]
table {%
1 11.9228264121261
1.34440332355233 1.04903719183988
1.34890817058833 1.04184483668255
1.35876340064822 1.03251666135641
1.36779255775214 1.02819445998195
1.48164207227087 1.00018223266581
1.46066713877938 1.0019872348476
1.48324585096453 1
};
\end{axis}
\node[text width=1.5cm] at (0.8,0.2) {\small Massachusetts};
\end{tikzpicture}
    \end{subfigure}
    \begin{subfigure}[btp]{0.323\columnwidth}
        \centering
        \vspace{10pt}
\begin{tikzpicture}

\definecolor{color0}{rgb}{0.12156862745098,0.466666666666667,0.705882352941177}
\definecolor{color1}{rgb}{1,0.498039215686275,0.0549019607843137}
\definecolor{color2}{rgb}{0.172549019607843,0.627450980392157,0.172549019607843}
\definecolor{color3}{rgb}{0.83921568627451,0.152941176470588,0.156862745098039}

\begin{axis}[
width = \siwidth,
height = 1.6in,
legend cell align={left},
legend style={fill opacity=0.8, draw opacity=1, text opacity=1, draw=white!80!black},
tick align=outside,
tick pos=left,
x grid style={white!69.0196078431373!black},
xmin=0.987473997719711, xmax=1.26304604788608,
xticklabels={0.00, 0.00, 1.0, 1.1, 1.2},
xtick style={color=black},
y grid style={white!69.0196078431373!black},
ymin=0.999, ymax=1.03,
yticklabels={0.00, 0.00, 1.00, 1.01, 1.02, 1.03},
ytick style={color=black},
style={font=\footnotesize}
]
\addplot [dashed, line width=2pt, color=color0]
table {%
1.02709771362984 1.02629964809451
1.01501092579337 1.02476463815947
1.02163661668061 1.02149320788889
1.03920897531093 1.01974910394532
1.04417999183975 1.01851943548177
1.04575892223503 1.01646103437012
1.06472184800697 1.0152377929048
1.0714251538072 1.01435601081908
1.07701418239315 1.01372989751561
1.08280579453186 1.01294362273387
1.08932864167602 1.01230335610018
1.09514445590111 1.01184637913861
1.09869978534192 1.01130984303842
1.10371348536129 1.01085796949852
1.10888525286956 1.01046757068132
1.11296364942916 1.00991163212425
1.11849496246563 1.00954035689957
1.12698322827414 1.00885665604232
1.12712504246939 1.00832659298915
1.1272385195253 1.00796476103077
1.13297986718668 1.00751132107223
1.1358245167617 1.00713048998565
1.1424960512246 1.00671667777246
1.14401849403861 1.00646575147914
1.14900317487466 1.0059153831851
1.15258362964901 1.00555316266743
1.15647819077166 1.00283248031482
1.15927304127252 1.00145522669928
1.20729323916325 1.00131054262375
1.21028393071082 1.00125872990354
1.21113489881462 1.00109541998315
1.21194606747654 1.00103995637995
1.21494065785012 1.00090545852646
1.21626319112678 1.00084393735486
1.2171942626905 1.0007932085936
1.21845816905592 1.00073766823008
1.22212342894047 1.00066178583271
1.22300226427117 1.00059646879999
1.2241248929463 1.00051526056854
1.22486422178514 1.00046559594146
1.22672905729286 1.00044251076002
1.22864732885857 1.00038090498326
1.22916156831712 1.00033789528396
1.23074298329181 1.00030660245646
1.23201259908945 1.00028711389123
1.23297643164075 1.00022828163916
1.23461373237738 1.00020113918006
1.23731958715417 1.00013713530775
1.23890492305139 1.00009958653318
1.24031109726405 1.00008929574864
1.24036834877736 1.00007492934102
1.2422835903351 1.00005442225699
1.24299614449695 1.00004473523393
1.24353628795276 1.00003546167301
1.24448941880417 1.00002885212432
1.24512949778629 1.00002383046016
1.24569516008279 1.00001788281691
1.247634435561 1.00001135450937
1.2475504476628 1.00001145674198
1.24981628419672 1.00000339303689
1.25031639679881 1
};
\addplot [dashed, line width=2pt, color=color1]
table {%
1.02709771362984 1.02629964809451
1.04417999183975 1.01851943548177
1.07701418239315 1.01372989751561
1.10371348536129 1.01085796949852
1.12712504246939 1.00832659298915
1.14401849403861 1.00646575147914
1.15899151182651 1.00504270710378
1.1711754707113 1.00399473494675
1.1818743595044 1.0029951745619
1.19189493083878 1.00232583624466
1.2078978715753 1.00167470958052
1.21310276488022 1.00138311005272
1.21494065785012 1.00090545852646
1.22212342894047 1.00066178583271
1.22672905729286 1.00044251076002
1.23201259908945 1.00028711389123
1.23461373237738 1.00020113918006
1.24031109726405 1.00008929574864
1.24353628795276 1.00003546167301
1.247634435561 1.00001135450937
1.25031639679881 1
};
\addplot [dashed, line width=2pt, color=color2]
table {%
1.24204379538903 1
1.24061881682591 1.00000016832547
1.23917768246415 1.000003929121
1.23772033995062 1.00001131680001
1.23624673874236 1.00002236611434
1.23475683013178 1.00003711215546
1.23325056727155 1.00005559035559
1.23172790519915 1.00007783648889
1.23018880086113 1.00010388667268
1.22863321313691 1.00013377736856
1.22706110286234 1.00016754538359
1.22547243285271 1.00020522787149
1.22386716792553 1.00024686233382
1.22224527492276 1.00029248662111
1.22060672273268 1.00034213893408
1.2189514823113 1.00039585782478
1.21727952670333 1.00045368219779
1.21559083106267 1.0005156513114
1.21388537267238 1.00058180477875
1.21216313096425 1.00065218256902
1.21042408753775 1.00072682500864
1.20866822617857 1.0008057727824
1.20689553287652 1.00088906693467
1.20510599584296 1.00097674887058
1.20329960552762 1.00106886035715
1.20147635463494 1.00116544352452
1.19963623813966 1.00126654086707
1.19777925330201 1.00137219524463
1.19590539968216 1.00148244988367
1.19401467915409 1.00159734837841
1.19210709591884 1.00171693469207
1.19018265651712 1.00184125315798
1.18824136984126 1.00197034848081
1.18628324714652 1.00210426573771
1.18430830206169 1.00224305037949
1.18231655059904 1.00238674823179
1.1803080111636 1.00253540549629
1.17828270456167 1.00268906875182
1.1762406540087 1.0028477849556
1.17418188513643 1.00301160144438
1.17210642599926 1.00318056593562
1.17001430707996 1.00335472652865
1.16833991707083 1.00353413170589
1.16700846651689 1.00371883033397
1.16567052803902 1.00390887166494
1.16432608779389 1.00410430533742
1.16297513250695 1.00430518137782
1.16161764947493 1.00451155020145
1.16025362656821 1.00472346261374
1.15888305223318 1.0049409698114
1.15750591549452 1.00516412338361
1.15612220595736 1.00539297531316
1.15473191380946 1.00562757797766
1.15333502982322 1.0058679841507
1.1519315453577 1.00611424700301
1.15052145236055 1.00636642010366
1.1491047433698 1.00662455742124
1.14768141151567 1.00688871332499
1.14625145052227 1.00715894258603
1.14481485470917 1.00743530037848
1.143371618993 1.00771784228068
1.14192173888886 1.00800662427635
1.14046521051176 1.00830170275575
1.13900203057789 1.00860313451687
1.13753219640585 1.00891097676662
1.13605570591783 1.00922528712194
1.13457255764065 1.00954612361107
1.13308275070674 1.00987354467465
1.13158628485509 1.01020760916691
1.13008316043201 1.01054837635686
1.12857337839193 1.01089590592949
1.12705694029803 1.01125025798686
1.12553384832278 1.01161149304936
1.12400410524851 1.01197967205685
1.12246771446771 1.01235485636983
1.12092467998344 1.01273710777062
1.11937500640951 1.01312648846453
1.11781869897062 1.01352306108107
1.11625576350246 1.01392688867506
1.11468620645163 1.01433803472785
1.11311003487555 1.01475656314849
1.11152725644227 1.0151825382749
1.10993787943014 1.01561602487503
1.10834191272747 1.01605708814806
1.10673936583201 1.01650579372556
1.10513024885044 1.01696220767266
1.10351457249768 1.01742639648925
1.10189234809618 1.01789842711111
1.10026358757511 1.01837836691112
1.09862830346937 1.01886628370044
1.09698650891866 1.01936224572965
1.09533821766636 1.01986632168996
1.0936834440583 1.02037858071435
1.09202220304156 1.02089909237879
1.09035451016304 1.02142792670335
1.08868038156804 1.02196515415346
1.08699983399868 1.02251084564101
1.08531288479231 1.02306507252553
1.08361955187972 1.02362790661544
1.0819198537834 1.02419942016913
1.03606140197166 1.02477968589618
};
\addplot [dashed, line width=2pt, color=color3]
table {%
1 1.24625684755394
1.11785597262373 1.07113824546842
1.15032749803855 1.02809181860033
1.15063965446512 1.01944791904243
1.20148930554737 1.00042647951916
1.25022564856515 1.0000040274766
1.25052004560579 1
1.25052004560579 1
};
\draw (axis cs:1.5,0.9) node[
  scale=0.5,
  text=black,
  rotate=0.0
]{Tiergarten};
\end{axis}
\node[text width=1.5cm] at (1.0,0.2) {\small Tiergarten};
\end{tikzpicture}
    \end{subfigure}\par\medskip
    \vspace{-20pt}
    \hspace{0pt}
    \begin{subfigure}[btp]{0.323\columnwidth}
        \centering
\begin{tikzpicture}

\definecolor{color0}{rgb}{0.12156862745098,0.466666666666667,0.705882352941177}
\definecolor{color1}{rgb}{1,0.498039215686275,0.0549019607843137}
\definecolor{color2}{rgb}{0.172549019607843,0.627450980392157,0.172549019607843}
\definecolor{color3}{rgb}{0.83921568627451,0.152941176470588,0.156862745098039}

\begin{axis}[
width = \siwidth,
height = 1.6in,
legend cell align={left},
legend style={fill opacity=0.8, draw opacity=1, text opacity=1, draw=white!80!black},
tick align=outside,
tick pos=left,
x grid style={white!69.0196078431373!black},
xlabel={Unfairness U($\f(\alpha)$)},
xmin=0.96338951825704, xmax=1.76882011660215,
xticklabels={0.00, 0.00, 1.0, 1.2, 1.4, 1.6},
xtick style={color=black},
y grid style={white!69.0196078431373!black},
ylabel={Inefficiency Ratio \(\displaystyle \rho(\mathbf{x}(\alpha))\)},
ymin=0.999, ymax=1.1,
yticklabels={0.00, 0.00, 1.00, 1.02, 1.04, 1.06, 1.08, 1.10},
ytick style={color=black},
style={font=\footnotesize}
]
\addplot [dashed, line width=2pt, color=color0]
table {%
1.04777619741354 1.08333681641639
1.01805202659178 1.07710131262555
1.03856957576225 1.07046933966396
1.05621547295758 1.06580191568301
1.06739136023125 1.06145508118298
1.08261267704885 1.05723137287348
1.09749662468401 1.05345783131864
1.111158570345 1.0499776990804
1.1257605120788 1.04637820942525
1.13765767631152 1.04348819399355
1.15073367543052 1.04044064006127
1.17409443530752 1.03561785030523
1.18540518922701 1.03351829670561
1.1957438278156 1.03170437811372
1.20654009047451 1.02985388900329
1.21649517029881 1.02821424641006
1.22525243967528 1.02688051215448
1.24408863119237 1.02353305734928
1.25378198558326 1.02280015379761
1.26177654679862 1.02171169256114
1.27061851020754 1.0206113870703
1.27878549059673 1.0195973868961
1.28222964995237 1.01889486334355
1.29422500292896 1.01778353516114
1.30161631721178 1.01695653933158
1.30936485066497 1.01618174040173
1.3157398834171 1.01541563528172
1.32244638216431 1.01470379877342
1.32903083379994 1.01400226329925
1.33548356433071 1.01339094478428
1.34159886710222 1.01289871201323
1.34776369223884 1.01222672934129
1.35256289514385 1.01178669562625
1.35940012648589 1.01119051825643
1.3647525061162 1.01068835281413
1.3723945023135 1.01032792955752
1.37572277012039 1.0096877769556
1.38064604179561 1.00920442853723
1.38592842834554 1.00875717194621
1.38863075654231 1.00831839941908
1.39607844759622 1.00785560538963
1.40056897402795 1.00745992640823
1.40464279861951 1.00696927852201
1.40897113167956 1.00671355064748
1.4081215725295 1.0068390950926
1.42119791005106 1.00610423029478
1.42231346414045 1.00578764161838
1.42639394291261 1.0055067284872
1.43052192470923 1.00524900506612
1.43428496571041 1.00498803927669
1.43813751057205 1.00477510445006
1.44183173030671 1.00457549064643
1.44400058400344 1.00434688224529
1.44901423088906 1.00409613521423
1.45252238397262 1.00390571396136
1.45576110565408 1.0037241267993
1.45718289429052 1.00357014227175
1.46121782868752 1.00297916712246
1.47209345623178 1.00284059630784
1.47490128944075 1.00263125951364
1.47811924918983 1.0024477975019
1.48101690976591 1.00228771143831
1.48432926192216 1.00207101284246
1.48664075358742 1.00196770131731
1.48957064740802 1.00179804116805
1.49198569127231 1.00165366483195
1.49483594230124 1.00153126284474
1.49346541729261 1.00159613416909
1.49789277508697 1.00141491115098
1.50365105099353 1.00109218719726
1.50655370564991 1.00100424987227
1.5094254104892 1.00086276299035
1.51028534459979 1.00078538061107
1.51739787535417 1.00061393612214
1.51896747794997 1.00056171675806
1.52663405852932 1.00036200856258
1.51957659664831 1.0003695032153
1.53230159318027 1.00020363255038
1.53513701275818 1.00019133888759
1.53694358197204 1.00017015667451
1.53878073347556 1.00014885590579
1.54052131136049 1.00013050425699
1.54262431739639 1.00009965487693
1.54432292523624 1.0000844240366
1.54634023237672 1.00005568724939
1.55397178591997 1.00001047563676
1.55799302060645 1
};
\addplot [dashed, line width=2pt, color=color1]
table {%
1.04777619741354 1.08333681641639
1.08261267704885 1.05723137287348
1.15073367543052 1.04044064006127
1.20654009047451 1.02985388900329
1.25378198558326 1.02280015379761
1.29422500292896 1.01778353516114
1.32903083379994 1.01400226329925
1.35940012648589 1.01119051825643
1.38592842834554 1.00875717194621
1.40897113167956 1.00671355064748
1.43052192470923 1.00524900506612
1.44901423088906 1.00409613521423
1.46573510662593 1.00323698463552
1.48101690976591 1.00228771143831
1.49483594230124 1.00153126284474
1.50655370564991 1.00100424987227
1.51896747794997 1.00056171675806
1.51957659664831 1.0003695032153
1.53878073347556 1.00014885590579
1.55799302060645 1
};
\addplot [dashed, line width=2pt, color=color2]
table {%
1.58643463565443 1
1.58113972845346 1.00002686423714
1.57583153022361 1.00006552422703
1.57051015061457 1.00011607951805
1.56517570550822 1.0001786308011
1.55982831695567 1.00025327991242
1.55446811311177 1.00034012983638
1.54909522816709 1.00043928470824
1.54370980227739 1.00055084981701
1.53831198149071 1.00067493160816
1.53290191767192 1.00081163768651
1.52747976842509 1.00096107681892
1.52204569701336 1.00112335893718
1.51659987227668 1.00129859514074
1.51114246854725 1.00148689769953
1.50567366556279 1.00168838005675
1.50019364837779 1.00190315683166
1.49470260727256 1.00213134382238
1.48920073766038 1.00237305800867
1.48368823999266 1.00262841755477
1.47816531966222 1.00289754181212
1.47263218690476 1.00318055132222
1.46708905669857 1.00347756781937
1.46153614866251 1.00378871423352
1.45597368695239 1.00411411469302
1.45040190015579 1.00445389452743
1.44482102118536 1.00480818027033
1.43923128717072 1.00517709966208
1.43363293934894 1.00556078165263
1.42802622295388 1.00595935640432
1.4224113871041 1.00637295529468
1.41678868468984 1.0068017109192
1.41115837225878 1.00724575709414
1.40552070990092 1.00770522885934
1.39987596113242 1.00818026248097
1.3942243927787 1.00867099545437
1.3885662748567 1.0091775665068
1.38290188045653 1.00970011560028
1.37723148562238 1.01023878393436
1.37155536923296 1.0107937139489
1.36587381288152 1.0113650493269
1.36018710075534 1.01195293499726
1.35580179843677 1.0125575171376
1.35192063305436 1.01317894317704
1.34802379742834 1.01381736179897
1.34411146239356 1.01447292294391
1.34018380199988 1.01514577781223
1.33624099346082 1.01583607886702
1.33228321710101 1.01654397983679
1.32831065630284 1.01726963571836
1.32432349745189 1.01801320277958
1.32032192988157 1.01877483856218
1.31630614581672 1.01955470188452
1.31227634031636 1.02035295284441
1.30823271121548 1.0211697528219
1.30417545906613 1.02200526448205
1.3001047870775 1.02285965177778
1.29602090105538 1.02373307995259
1.29192400934077 1.02462571554343
1.28781432274782 1.02553772638342
1.283692054501 1.02646928160472
1.27955742017173 1.02742055164124
1.27541063761429 1.02839170823152
1.27125192690118 1.02938292442145
1.26708151025791 1.03039437456713
1.26289961199727 1.0314262343376
1.25870645845314 1.03247868071768
1.25450227791377 1.03355189201074
1.25028730055473 1.03464604784153
1.24606175837143 1.0357613291589
1.24182588511127 1.03689791823869
1.23757991620553 1.03805599868644
1.23332408870094 1.03923575544023
1.22905864119098 1.04043737477347
1.22542748338825 1.04166104429769
1.22430242059751 1.0429069529653
1.22317016163993 1.04417529107246
1.22203070102403 1.0454662502618
1.22088403365948 1.04678002352524
1.21973015485928 1.04811680520682
1.21856906034203 1.04947679100542
1.21740074623401 1.05086017797762
1.21622520907137 1.05226716454046
1.21504244580213 1.05369795047427
1.21385245378823 1.05515273692538
1.21265523080752 1.05663172640903
1.21145077505566 1.05813512281208
1.21023908514804 1.05966313139582
1.20902016012157 1.0612159587988
1.2077939994365 1.06279381303957
1.20656060297815 1.06439690351952
1.20531997105859 1.06602544102564
1.2040721044183 1.06767963773336
1.20309191088175 1.06935970720927
1.20305869424139 1.071065864414
1.203025389648 1.07279832570495
1.05326367092673 1.08186286408947
};
\addplot [dashed, line width=2pt, color=color3]
table {%
1 1.51355675945567
1.02176255267465 1.18592437282101
1.23181669748365 1.16590656115161
1.23786559835767 1.07971011751014
1.55772539482382 1.00231949993955
1.55780963667575 1.00025161441859
1.73220963485919 1
};
\end{axis}
\node[text width=1.5cm] at (2.4,2.0) {\small Friedrichshain};
\end{tikzpicture}
    \end{subfigure}\hspace{8pt}
    \begin{subfigure}[btp]{0.323\columnwidth}
        \centering
        \vspace{10pt}
\begin{tikzpicture}

\definecolor{color0}{rgb}{0.12156862745098,0.466666666666667,0.705882352941177}
\definecolor{color1}{rgb}{1,0.498039215686275,0.0549019607843137}
\definecolor{color2}{rgb}{0.172549019607843,0.627450980392157,0.172549019607843}
\definecolor{color3}{rgb}{0.83921568627451,0.152941176470588,0.156862745098039}

\begin{axis}[
width = \siwidth,
height = 1.6in,
legend cell align={left},
legend style={fill opacity=0.8, draw opacity=1, text opacity=1, draw=white!80!black},
tick align=outside,
tick pos=left,
x grid style={white!69.0196078431373!black},
xlabel={Unfairness U($\f(\alpha)$)},
xmin=0.96967947835315, xmax=1.63674378150016,
xticklabels={0.00, 0.00, 1.0, 1.2, 1.4, 1.6},
xtick style={color=black},
y grid style={white!69.0196078431373!black},
ymin=0.999, ymax=1.04,
yticklabels={0.00, 0.00, 1.00, 1.01, 1.02, 1.03, 1.04},
ytick style={color=black},
style={font=\footnotesize}
]
\addplot [dashed, line width=2pt, color=color0]
table {%
1.0415798961255 1.03842126315122
1.07576637026551 1.032385835623
1.08007789074929 1.02930208526587
1.09631464870632 1.02683345004132
1.10865263706428 1.02327474195923
1.11795290580365 1.01978487884171
1.13125068929408 1.01849543729173
1.16881357208051 1.01504965589419
1.17239742617107 1.01427346251606
1.19896661950175 1.01214513408235
1.22466422763239 1.0101790973289
1.23276469068346 1.00886189012525
1.32949096403207 1.00778616681752
1.35524879895757 1.00590006741294
1.36737208496817 1.00556887791138
1.37303459904537 1.00517062262808
1.37945605002522 1.00493660427735
1.39242057305898 1.0040357445566
1.40894807070717 1.00376197904945
1.41947797920481 1.00313676400093
1.42898305752734 1.00283657839387
1.43503204874575 1.00264481939525
1.43745778233635 1.00255411075781
1.4482867333244 1.0022000421102
1.4583722094363 1.00201006210967
1.46383799994945 1.0018166843505
1.46582095770959 1.00181070943076
1.47225033209451 1.00165644994654
1.47720310046831 1.00139529387857
1.48741095489382 1.00133933067813
1.49315318439882 1.00113356540961
1.50311876096558 1.00102698289222
1.51065704251881 1.00095245104715
1.51914889718233 1.00065026097793
1.5131317066045 1.00070788620207
1.52638666160683 1.00059863455539
1.53441255139926 1.00056497508452
1.54386108444038 1.0004404439087
1.54868286665003 1.00019581288586
1.55170182456946 1.00014197231009
1.56227253811229 1.00010832409044
1.57942622925095 1.00009872043471
1.59088103480664 1
};
\addplot [dashed, line width=2pt, color=color1]
table {%
1.0415798961255 1.03838574471717
1.19989396774252 1.02482670464615
1.25267080980598 1.01296036416161
1.30686067519953 1.00948171247414
1.37945605002522 1.00490223116138
1.43745778233635 1.00251981913327
1.46383799994945 1.0017824179491
1.49299229784087 1.00131505912102
1.50589476268636 1.00101152767801
1.5131317066045 1.00067365772629
1.54712642254202 1.00041217018664
1.55170182456946 1.00010776319098
1.57942622925095 1.000064512795
1.593892308008 1.00006036910302
1.60274242559089 1
};
\addplot [dashed, line width=2pt, color=color2]
table {%
1.60642267681166 1
1.60098365346969 1.00013812851662
1.59556253582251 1.00028084522237
1.59015928603341 1.00042815858827
1.58477386614935 1.00058007715416
1.57940623810231 1.00073660952877
1.57405636371052 1.00089776438983
1.56872420467985 1.0010635504841
1.56340972260509 1.00123397662746
1.55811287897131 1.00140905170499
1.5528336351552 1.00158878467106
1.54757195242637 1.00177318454934
1.54232779194878 1.00196226043296
1.53710111478205 1.00215602148451
1.53189188188284 1.00235447693617
1.52670005410625 1.00255763608974
1.52152559220716 1.00276550831673
1.51636845684164 1.00297810305845
1.51122860856836 1.00319542982606
1.50610600784995 1.00341749820065
1.50100061505444 1.00364431783333
1.49591239045663 1.00387589844528
1.49084129423956 1.00411224982783
1.48578728649586 1.00435338184255
1.48075032722923 1.0045993044213
1.47573037635583 1.00485002756633
1.47072739370572 1.00510556135032
1.4657413390243 1.00536591591649
1.46077217197374 1.00563110147864
1.45581985213444 1.00590112832126
1.45088433900643 1.00617600679957
1.44596559201084 1.00645574733961
1.44106357049138 1.00674036043831
1.43617823371572 1.00702985666356
1.43130954087701 1.00732424665432
1.42645745109528 1.00762354112061
1.42162192341894 1.00792775084367
1.41680291682621 1.00823688667601
1.41200039022659 1.00855095954145
1.4072143024623 1.00886998043522
1.40244461230976 1.00919396042403
1.39769127848105 1.00952291064617
1.39295425962537 1.00985684231153
1.38823351433048 1.01019576670171
1.38352900112417 1.01053969517008
1.37884067847575 1.01088863914189
1.37416850479747 1.01124261011426
1.369512438446 1.01160161965636
1.36487243772388 1.0119656794094
1.36024846088098 1.01233480108674
1.35564046611597 1.01270899647397
1.35104841157775 1.01308827742895
1.34647225536692 1.01347265588193
1.34416411157723 1.01386214383557
1.34313036330589 1.01425675336508
1.34209909877818 1.01465649661823
1.34107031587739 1.01506138581546
1.04244143020553 1.03816958768089
};
\addplot [dashed, line width=2pt, color=color3]
table {%
1.00000058304165 8.52174813028758
1.00965980530861 8.46919433999255
1.01624640160335 5.34444789301628
1.02995503944954 3.0176163816457
1.1027181808789 1.87069600386083
1.14103855448072 1.14125963093868
1.18312237580067 1.01585642119357
1.29681347394932 1.00746786362138
1.30006096789984 1.00121828594962
1.45769176085875 1
};
\end{axis}
\node[text width=2cm] at (2.6,2.0) {\small Sioux Falls};
\end{tikzpicture}
    \end{subfigure}\hspace{-5pt}
    \begin{subfigure}[btp]{0.323\columnwidth}
        \centering
        \vspace{10pt}
\begin{tikzpicture}

\definecolor{color0}{rgb}{0.12156862745098,0.466666666666667,0.705882352941177}
\definecolor{color1}{rgb}{1,0.498039215686275,0.0549019607843137}
\definecolor{color2}{rgb}{0.172549019607843,0.627450980392157,0.172549019607843}
\definecolor{color3}{rgb}{0.83921568627451,0.152941176470588,0.156862745098039}

\begin{axis}[
width = \siwidth,
height = 1.6in,
legend cell align={left},
legend style={fill opacity=0.8, draw opacity=1, text opacity=1, draw=white!80!black},
tick align=outside,
tick pos=left,
x grid style={white!69.0196078431373!black},
xlabel={Unfairness U($\f(\alpha)$)},
xmin=0.957727873437494, xmax=1.88771465781263,
xticklabels={0.0, 0.0, 1.0, 1.2, 1.4, 1.6, 1.8},
xtick style={color=black},
y grid style={white!69.0196078431373!black},
ymin=0.999, ymax=1.04,
yticklabels={0.00, 0.00, 1.00, 1.01, 1.02, 1.03, 1.04},
ytick style={color=black},
style={font=\footnotesize}
]
\addplot [dashed, line width=2pt, color=color0]
table {%
1.03672245291724 1.03652424176196
1.0226344847168 1.03324752465105
1.02550179776974 1.03262933652619
1.04287080855659 1.03071210147687
1.05076227038471 1.02975565441929
1.05388727071992 1.02815996213305
1.06119700607259 1.02715623344039
1.18628935642275 1.02616883558966
1.1967912039846 1.02519872420061
1.21395587870278 1.02403422424142
1.23796790885949 1.02254556066436
1.24537947260722 1.02198488741122
1.26033373518824 1.02106832114016
1.2767270448256 1.02012271531014
1.29039888398771 1.01926684665749
1.30500702749838 1.01858634455729
1.31737457556133 1.01751915577269
1.33057673564651 1.01651136355009
1.34368011810848 1.01551871509716
1.35663634143437 1.01460529098159
1.3681592320483 1.01373857236214
1.38070454904429 1.01291032884821
1.39230905047826 1.01213488144674
1.40438803836272 1.01156160047472
1.41533613168528 1.01110466467822
1.425975934206 1.01044354105206
1.43663419499817 1.01006084821827
1.44682964565156 1.00971249855662
1.45762548231361 1.00939516103938
1.46741657602615 1.00906467673361
1.47694577100812 1.008715617981
1.49076795080293 1.0083475629617
1.4956413097021 1.00815330957918
1.50461904869886 1.00786678565653
1.51386749588823 1.00744967756313
1.53805306958992 1.00666200204941
1.53111300536628 1.00674783010233
1.53922884253374 1.00624802896109
1.54757725236952 1.00606394697898
1.55435289610883 1.00573750127925
1.56605753078536 1.00540825624147
1.57157095627726 1.00517277741572
1.57916776911221 1.00490673359551
1.58572771075022 1.0046056303769
1.59375831823923 1.00438481945924
1.6065131196664 1.00400763174575
1.6081559088087 1.00390491354557
1.61516803979347 1.00368551477299
1.62210347703546 1.00347661546136
1.62870805659832 1.00329154763578
1.63543666623415 1.00309926952468
1.64245871468799 1.00288760884354
1.64757384713466 1.00276240287848
1.6585006939462 1.00252751203415
1.66111949639228 1.00242746407034
1.67076106669155 1.00224130578431
1.67191069770475 1.00214305176096
1.67745164444213 1.00200457846405
1.69040935023902 1.00171105516087
1.68872682281297 1.00176827595284
1.69522925412295 1.0016807315389
1.69990499073641 1.0015489011978
1.70503389001784 1.00144411066528
1.71044075716925 1.00135360369856
1.71547390222687 1.0012577712703
1.72127914579171 1.00114711776424
1.72532145913196 1.00109146383828
1.73053553677932 1.00101237816411
1.73459477204077 1.00095002859474
1.73931791357628 1.00082213108646
1.74943615581081 1.00074290982962
1.75189420260262 1.00073619637768
1.75815161279011 1.00063403376162
1.76270878449534 1.00058844853654
1.76711023325426 1.00054922396839
1.76210366503677 1.00060075444421
1.77589082131155 1.00044357054071
1.77959337551325 1.00040420027332
1.7840882476894 1.00037176451859
1.78647076475064 1.00034056705524
1.79396614573188 1.00029462989122
1.79564229911913 1.00025012698988
1.80052617131768 1.00022481105851
1.80340139370984 1.00019236244154
1.80511885674786 1.00016591744845
1.81108400933928 1.00013170352962
1.8137377847152 1.00012123266896
1.81775971141073 1.0000992220329
1.82209282795821 1.00007007770955
1.82835601887225 1.00004915349003
1.83525901128958 1.00002021991049
1.8391186241206 1.00001311574237
1.84214251019417 1.00001264492894
1.84544253125013 1
};
\addplot [dashed, line width=2pt, color=color1]
table {%
1.03672245291724 1.03651678860029
1.05076227038471 1.02974824992738
1.1967912039846 1.02519135247545
1.2767270448256 1.02011538008418
1.34368011810848 1.01551141297642
1.40438803836272 1.01155432680775
1.45762548231361 1.00938790295026
1.50461904869886 1.00785953855725
1.54757725236952 1.00605671284306
1.58572771075022 1.00459840672705
1.62210347703546 1.00346939992974
1.6585006939462 1.00252030332708
1.69040935023902 1.00170385232456
1.71044075716925 1.00134640343252
1.73459477204077 1.00094283123062
1.75815161279011 1.00062683866968
1.77959337551325 1.00039700683399
1.80052617131768 1.00021761890909
1.81775971141073 1.00009203078653
1.83525901128958 1.00001302923219
1.85219020369235 1
};
\addplot [dashed, line width=2pt, color=color2]
table {%
1.83068070592078 1
1.82448571300119 1.00002108696109
1.81826080176109 1.00004706613974
1.8120064497772 1.00007798581003
1.80572314063241 1.00011389463114
1.79941136374976 1.00015484164885
1.79307161422382 1.0002008762969
1.78670439264974 1.00025204839851
1.7803102049499 1.00030840816774
1.77388956219848 1.00037000621102
1.76744298044389 1.0004368935285
1.76097098052939 1.00050912151556
1.75447408791177 1.00058674196421
1.74795283247847 1.00066980706455
1.74140774836313 1.00075836940619
1.73483937375968 1.00085248197972
1.72824825073519 1.00095219817811
1.72163492504158 1.0010575717982
1.71499994592627 1.00116865704208
1.70834386594194 1.0012855085186
1.70166724075559 1.00140818124474
1.69497062895688 1.00153673064711
1.68825459186604 1.00167121256334
1.68151969334142 1.00181168324355
1.67476649958674 1.0019581993518
1.66799557895827 1.00211081796749
1.66120750177208 1.00226959658683
1.65440284011134 1.00243459312429
1.64758216763398 1.00260586591399
1.64074605938069 1.00278347371121
1.6338950915835 1.00296747569376
1.62702984147499 1.00315793146349
1.62015088709821 1.00335490104766
1.61325880711761 1.00355844490043
1.60635418063084 1.00376862390428
1.59943758698178 1.00398549937147
1.59250960557475 1.00420913304543
1.58557081569005 1.00443958710226
1.57862179630101 1.00467692415214
1.57166312589258 1.00492120724078
1.56469538228152 1.00517249985082
1.55771914243848 1.00543086590336
1.55073498231186 1.0056963697593
1.54374347665362 1.00596907622084
1.53674519884721 1.0062490505329
1.52974072073763 1.00653635838458
1.52273061246365 1.00683106591056
1.51571544229247 1.00713323969259
1.50869577645668 1.00744294676088
1.50167217899377 1.00776025459558
1.49464521158812 1.00808523112821
1.4876154334157 1.00841794474308
1.48058340099142 1.00875846427876
1.47354966801923 1.00910685902949
1.46651478524509 1.00946319874664
1.45947930031281 1.00982755364016
1.45244375762279 1.01019999437997
1.44540869819382 1.01058059209749
1.43837465952788 1.01096941838697
1.43134217547805 1.01136654530702
1.42431177611954 1.011772045382
1.41728398762393 1.01218599160348
1.41025933213654 1.01260845743169
1.40323832765715 1.01303951679691
1.39622148792394 1.01347924410099
1.38920932230068 1.01392771421871
1.38220233566736 1.01438500249927
1.37520102831407 1.01485118476773
1.36820589583829 1.0153263373264
1.36121742904553 1.01581053695636
1.35423611385341 1.01630386091883
1.34971391211182 1.01680638695663
1.34618242112504 1.01731819329565
1.34264809703743 1.01783935864625
1.33911103649535 1.01836996220471
1.33557133619728 1.0189100836547
1.33202909288198 1.01945980316869
1.32848440331682 1.02001920140937
1.32493736428604 1.02058835953116
1.3213880725791 1.02116735918157
1.31783662497915 1.02175628250269
1.31428311825145 1.02235521213262
1.3107276491319 1.02296423120692
1.30717031431567 1.02358342336
1.30361121044577 1.02421287272664
1.30005043410185 1.02485266394335
1.29648808178893 1.02550288214988
1.29292424992628 1.0261636129906
1.2893590348363 1.02683494261598
1.28579253273358 1.02751695768403
1.28222483971393 1.0282097453617
1.27865605174355 1.02891339332636
1.27508626464825 1.02962798976725
1.2715155741028 1.03035362338686
1.26794407562027 1.03109038340243
1.2643718645416 1.03183835954737
1.26079903602508 1.03259764207268
1.2572256850361 1.03336832174843
1.25365190633685 1.03415048986516
1.25007779447622 1.03494423823536
1.0343028828783 1.03574965919486
};
\addplot [dashed, line width=2pt, color=color3]
table {%
1 5.62867155729629
1.2280164862816 2.82784066350701
1.22834009750098 2.39480344356529
1.23084411732849 1.42967488182058
1.23488796641289 1.07833119177006
1.23525832830698 1.05908882652385
1.23618046730401 1.01461690354873
1.23636106761171 1.00834160341115
1.52883942773326 1.00006198713296
1.53285003376805 1.000004452055
1.57447257397045 0.99999859355778
1.55554948513109 1
};
\end{axis}
\node[text width=1.5cm] at (2.3,2.0) {\small Prenzlauerberg};
\end{tikzpicture}
    \end{subfigure}
    \vspace{-15pt}
    \caption{{\small \sf Pareto frontier depicting the trade-off between efficiency and fairness for the 
    (i) I-TAP method with a step size $s = 0.01$, (ii) I-TAP method with $s = 0.05$,
    (iii) I-Solution method with $s = 0.01$, and (iv) Jahn et al.'s method~\cite{so-routing-seminal} with $s=0.05$.}}  
    \label{fig:vals_updated}
\end{figure*}

Figure~\ref{fig:vals_updated} depicts the Pareto frontiers, i.e., the set of all Pareto efficient combinations of system efficiency and user fairness, 
for the six transportation networks in Table~\ref{tab:problem-instances}. In particular, observe that the Pareto frontiers of the I-TAP method 
are below that of the other two approaches for most values of unfairness. This observation indicates that the I-TAP method outperforms the other two approaches since the inefficiency ratio of the I-TAP solution is the lowest for most desired levels of unfairness. 
Only for the Sioux Falls and Prenzlauerberg data-sets, the algorithm in \cite{so-routing-seminal} achieved lower inefficiency ratios than both the I-TAP and I-Solution methods for higher values of unfairness, which, in practice, would be undesirable. Furthermore, note that, unlike the two convex-combination approaches, the solution of the algorithm in \cite{so-routing-seminal} can result in inefficiency ratios that are much greater than the PoA for low levels of unfairness. 
The I-TAP method outperforms the I-Solution method since the set of paths that users can traverse is not restricted to the union of the routes under the UE and SO solutions as is the case for the I-Solution method. In particular, there may be traffic assignments with lower total travel times that use paths not encapsulated by the restricted set of paths corresponding to the I-Solution method. Furthermore, while the PoA for each of the data-sets is quite low, some real-world transportation networks may have much higher PoA values (even as high as two) \cite{boston-large-poa}, which would make the trade-off between efficiency and fairness even more prominent.


\ifarxiv 
Finally, the Pareto frontiers of the I-TAP method for the $0.01$ and $0.05$ increments of the convex combination parameter almost overlap each other for all the transportation networks other than Sioux Falls. Since Sioux Falls has a highly discontinuous unfairness function (cf. Figure~\ref{fig:tt_v_alpha_and_beta_v_alpha_finer_discretization}), it is likely that the $0.05$ increments of $\alpha$ values may not capture all the low total travel time solutions that keep within a $\beta$ bound of unfairness that the $0.01$ increments of $\alpha$ may be able to capture. For all the other datasets, the near equivalence of the Pareto frontiers for the two $\alpha$ discretizations suggests that a good performance of the I-TAP method can be achieved with coarse discretizations of the convex combination parameter set. Thus, we only need to compute a solution to the convex program I-TAP$_{\alpha}$ for relatively few values of $\alpha$ to characterize the Pareto frontier, implying the computational efficiency of the I-TAP method. 
\else
\fi

\fakeparagraph{Runtime Comparison.} We report in Table~\ref{tab:problem-instances} the runtime of the Jahn et al.\ method~\cite{so-routing-seminal} and our I-TAP method. For each instance we report the average runtime over the parameters $\phi$ and $\alpha$ for the competitor and our method, respectively. We observe that our approach is faster by at least three orders of magnitude. This is unsurprising since our method solves \emph{unconstrained} shortest-path queries, which can be implemented in $O(|E|+|V|\log |V|)$ time, within each Frank-Wolfe iteration, whereas~\cite{so-routing-seminal} solves \emph{constrained} shortest-path queries which are \ifarxiv known to be NP-hard. \fi We do mention that a more efficient implementation of constrained shortest-path query can be achieved by directly implementing a label-correcting algorithm rather than using the \texttt{r\_c\_shortest\_paths} routine from Boost, which is overly general for our setting and hence less efficient. Nevertheless, even with this improvement it would still be much slower than the unconstrained near-linear algorithm. Notice that both approaches can be sped up via parallel computation of shortest-path queries, and our method can be made even faster through modern heuristics for shortest-path queries
\ifarxiv 
in 
transportation networks, such as contraction hierarchies~\cite{GeisbergerETAL12}, as in~\cite{BuchETAL18}.
\else
\cite{BuchETAL18}.
\fi  

\section{Conclusion and Future Work} \label{sec:future-work}

In this paper, we developed (i) a computationally efficient method for traffic routing that trades-off system efficiency and user fairness, and (ii) pricing schemes to enforce fair traffic assignments as a UE. 
\ifarxiv  We introduced the I-TAP method, which involves solving interpolated traffic assignment problems by taking a convex combination between the UE-TAP and SO-TAP objectives, to find an efficient feasible traffic assignment to the \fso problem. We then established various solution properties of I-TAP, including computational tractability and solution continuity, and developed theoretical bounds on the inefficiency ratio and unfairness level in terms of the convex combination parameter of I-TAP. To enforce the traffic assignments outputted by the I-TAP method as a UE, we developed a marginal-cost pricing scheme when users are homogeneous, and a linear programming based pricing scheme when users are heterogeneous. Finally, we presented numerical experiments to evaluate the performance of our approach on real-world transportation networks. The results indicated that our algorithm is not only very computationally efficient but also generally results in traffic assignments with lower total travel times for most levels of unfairness as compared to the algorithm in
\cite{so-routing-seminal}.
\fi

There are various directions for future research. First, it would be worthwhile to develop theoretical bounds for I-TAP to demonstrate its applicability to other notions of unfairness, some of which are studied in \ifarxiv Appendix~\ref{apdx:general-unfairness}\else Appendix F\fi, and extend its optimality beyond two-edge Pigou networks. Next, it would be interesting to investigate fairness notions that compare user travel times across different O-D pairs. Finally, given the significant computational advantages of the I-TAP method, it would be interesting to study the generalizability of our approach when accounting for costs beyond the travel times of users, such as environmental pollution and user discomfort.

\ifarxiv 
\section{Acknowledgements}
We thank Kaidi Yang and anonymous reviewers for their detailed inputs and insightful feedback on an earlier version of this work. This work was supported in part by the National Science Foundation (NSF) CAREER Award CMMI1454737, NSF Award 1830554, Toyota Research Institute, the Israeli Ministry of Science and Technology grants no.\ 3-16079 and 3-17385, and the United States-Israel Binational Science Foundation grants no.\ 2019703 and 2021643.
\fi

\bibliographystyle{ACM-Reference-Format} 
\bibliography{main}


\appendix

\section*{Appendix}

\section{Proof of Lemma~\ref{lem:itap-optimal-pigou}}
\label{apdx:pf-pigou-optimal}

To prove the claim, we first note that in a two edge Pigou network the sum of the traffic flows on the two edges must add up to the traffic demand $D$. Thus, we can reduce the problem of determining a two-dimensional vector of edge flows to a single dimensional problem of determining the flow $x_1$ on the first edge, with the flow on the other edge given by $D-x_1$ for a traffic demand $D$. With slight abuse of notation, we denote $\objso(x_1)$, $\objue(x_1)$ and $\objitap(x_1)$ to denote the system optimum, user equilibrium and I-TAP objectives, respectively.

We now complete the proof in two steps. First we show that the optimal solutions of I-TAP for any $\alpha \in [0, 1]$ and the \fso problem lie between the user equilibrium and system optimum solutions, i.e., $x_1(\alpha), x_1^{\beta} \in [x_1(0), x_1(1)]$ for all $\alpha \in [0, 1]$ and $\beta \in [1, \infty)$, where we assume without loss of generality that $x_1(0) \leq x_1(1)$. Note here that we only compare the edge flows of I-TAP and \fso since both the path and edge flows coincide for a two edge Pigou network. We then extend the result of Corollary~\ref{cor:lipschitz-cont-x} to show that the optimal solution $x_1(\alpha)$ is continuous in the closed interval $[0, 1]$. Note that both the above claims jointly imply by the intermediate value theorem that for any $\beta \in [1, \infty)$ there exists some $\alpha^* \in [0, 1]$ such that $x_1(\alpha^*) = x_1^{\beta}$. We now proceed to prove the two claims.

We begin by establishing that $x_1(\alpha) \in [x_1(0), x_1(1)]$ for any $\alpha \in [0, 1]$. In particular, we will show that for any point not in the interval $[x_1(0), x_1(1)]$ that we can find another feasible point with a lower I-TAP objective value. To see this, we proceed by contradiction. Fix some $\alpha \in [0, 1]$ and suppose that $x_1(\alpha) < x_1(0)$. Then, we observe by the optimality of $x_1(0)$ for the user equilibrium traffic assignment problem that $\objue(x_1(\alpha))>\objue(x_1(0))$ and by the strict convexity of $\objso$ that $\objso(x_1(\alpha))> \objso(x_1(0))$. Together, both the strict inequalities imply that $\objitap(x_1(\alpha))>\objitap(x_1(0))$, a contradiction. Through an almost identical argument, we can show that $x_1(\alpha)>x_1(1)$ is also not possible. Thus, we have shown that $x_1(\alpha) \in [x_1(0), x_1(1)]$ for any $\alpha \in [0, 1]$.

Next, we show that $x_1^{\beta} \in [x_1(0), x_1(1)]$ for any $\beta \in [1, \infty)$. In particular, we will show that for any point not in the interval $[x_1(0), x_1(1)]$ that we can find another feasible point with a lower system optimum objective value. To see this, we again proceed by contradiction. First suppose that $x_1^{\beta} < x_1(0)$. In this case, note that $\objso(x_1^{\beta}) > \objso(x_1(0))$ by the strict convexity of $\objso$. Since $x_1(0)$ is a feasible solution to the \fso problem for any $\beta \in [1, \infty)$ this implies that $x_1^{\beta} < x_1(0)$ cannot be an optimal solution to the \fso problem. This implies that $x_1^{\beta} \geq x_1(0)$. Next, suppose by contradiction that $x_1^{\beta}>x_1(1)$. If $x_1^{\beta} = D$, then $\objso(x_1^{\beta}) = D t_1(D) > D t_1(x_1(0)) = x_1(0) t_1(x_1(0)) + (D-x_1(0)) t_2(x_2(0)) = \objso(x_1(0))$ since $x_1(0) < x_1^{\beta} = D$. Thus, $x_1(0)$ achieves a lower total travel time and is feasible for any $\beta \in [1, \infty)$, and so $x_1^{\beta} = D$ cannot be a solution to the \fso problem. Next, suppose that $x_1(1)<x_1^{\beta} < D$. In this case, we have that $\frac{t_1(x_1^{\beta})}{t_2(D - x_1^{\beta})} > \max \{ \frac{t_1(x_1(1))}{t_2(D - x_1(1))}, 1 \}$, and thus the ratio of travel times is strictly greater than than that under the system optimal traffic assignment. Thus, we have that for any $\beta \in [1, \infty)$ it must be the case that $x_1^{\beta} \leq x_1(1)$, thereby proving our claim that $x_1^{\beta} \in [x_1(0), x_1(1)]$.

We now prove the second claim that $x_1(\alpha)$ is continuous in the closed interval $[0, 1]$. First observe by Corollary~\ref{cor:lipschitz-cont-x} that $x_1(\alpha)$ is continuous in the open interval $(0, 1)$. Thus, we just need to prove the continuity of $x_1(\alpha)$ for $\alpha = 0, 1$. To do so, we introduce a problem instance specific constant $C(G, D, \{t_e\}_{e = 1}^2)$ that denotes the system optimum objective value when $x_1 = x_2 = D$. We note that $C(G, D, \{t_e\}_{e = 1}^2)$ is an upper bound on the system optimal objective for any feasible traffic assignment and it is a constant that depends on the graph $G$, the demand $D$ and the travel time functions. We prove the continuity for $\alpha = 0$ in what follows. First, observe that
\begin{align*}
    \objitap(x_1(0)) - \objitap(x_1(\alpha)) 
    &= \alpha \left( \objso(x_1(0)) - \objso(x_1(\alpha)) \right) \\&+ (1-\alpha) \left( \objue(x_1(0)) - \objue(x_1(\alpha)) \right), \\
    &< \alpha \left( \objso(x_1(0)) - \objso(x_1(\alpha)) \right), \\
    &\leq \alpha C(G, D, \{t_e\}_{e = 1}^2),
\end{align*}
where the strict inequality follows since $\objue(x_1(0)) < \objue(x_1(\alpha))$, and the last inequality follows by the boundedness of $\objso$ due to the finite demand $D$. Finally, by the strict convexity of the objective $\objitap(\cdot)$ it must follow that $x_1(\alpha) \rightarrow x_1(0)$ as $\alpha \rightarrow 0$ since $\objitap(x_1(0)) \rightarrow \objitap(x_1(\alpha))$ as $\alpha \rightarrow 0$ by the above analysis. Thus, we have continuity of $x_1(\alpha)$ for $\alpha = 0$. Through an analogous analysis we can also obtain continuity for $\alpha = 1$.

Finally, we note that both the above claims jointly imply by the intermediate value theorem that for any $\beta \in [1, \infty)$ there exists some $\alpha^* \in [0, 1]$ such that $x_1(\alpha^*) = x_1^{\beta}$, thereby proving our result.

\section{Additional Unfairness Measures} \label{apdx:general-unfairness}

While we presented an unfairness notion in Section~\ref{sec:fair-eff-metric} that compares the ratio of the travel times on positive paths, there have been many other unfairness notions that have been investigated in the fair traffic routing literature as well as in other economic applications to evaluate the distribution of a resource among users. In this section, we present a few other commonly-used unfairness measures used in the traffic routing and economics literature (Section~\ref{sec:apdx-new-measures}) and a numerical study of these unfairness measures (Section~\ref{sec:apdx-numerical}) on the six instances in Table~\ref{tab:problem-instances}.

\subsection{Unfairness Measures for Path Flows} \label{sec:apdx-new-measures}

In this section, we present several measures that can be used to evaluate the level of unfairness of a traffic assignment. While many unfairness measures for traffic routing have been proposed, most notions rely on a quantification of the discrepancy between user travel times. For instance,~\citet{correa-fair} measure unfairness through the maximum ratio of the travel times between two users for a given set of path flows $\f$ of the edge flow $\x$. This definition of unfairness is termed as ``envy-free'' by~\citet{basu2017reconciling} since one user can envy another's path by a factor no more than $\beta$. An analogous notion to the envy-free definition is that of a ``Used Nash Equlibrium'', wherein the level of unfairness is calculated through the maximum ratio between the experienced travel time for any given user to the travel time on any other positive path between the same O-D pair. Note that the ``Used Nash Equlibrium'' serves as an intermediary to the unfairness notion presented Section~\ref{sec:fair-eff-metric} and the envy-free notion. There are also several other definitions of max-min fairness proposed in the traffic routing literature, and we defer the interested reader to a discussion of some of these notions to~\citet{so-routing-seminal}.

Beyond the fairness notions considered in the traffic routing literature, there have also been other measures proposed to evaluate fairness or equity in the distribution of a given resource amongst users. One such popular measure is the Gini coefficient~\cite{gini-index}, which is used to evaluate the level of dispersion of wealth in society. Applying this idea in context of traffic routing, we can use the discrete Gini coefficient measure~\cite{WU20121273,jalota-cprr} to evaluate the spread in the travel times of users travelling between the same O-D pair. We summarize the above mentioned unfairness measures in Table~\ref{tab:unfair-measures}. 

Note that some path decompositions may be more fair according to one metric as compared to another. For instance, consider the situation when a small group of users incur an exceedingly large travel time while the remaining users travelling between the same O-D pair incur a small travel time. In this case, the level of unfairness for the Envy Free and Used Nash measures will typically be high but the level of unfairness as measured according to the Gini measure will be quite low. As a result, each of these metrics has their own merit as well as limitations when applied to measuring the level of unfairness of specific path flow decompositions.

\begin{table}[!] 
\centering
\caption{{\small \sf Path-based unfairness Measures that depend on the realization of a specific path flow decomposition $\f$ of an edge flow $\x$. }  }
\small
\begin{tabular}{cc}
\toprule
Unfairness Measure  & Formula ($\Tilde{U}(\f)$) \\
\midrule 
Envy Free & $\max_{k \in K} \max_{Q, R \in \P_k: \x_Q, \x_R>0} \frac{t_Q(\x)}{t_R(\x)} $ \\
Used Nash & $\max_{k \in K}  \max_{R \in \P_k^+, Q\in \P_k: \x_Q>0} \frac{t_Q(\x)}{t_R(\x)} $ \\
Discrete Gini Coefficient & $\max_{k \in K} \frac{\sum_{P, Q \in \P_k} \x_{Q} \x_{P} |t_Q(\x) - t_P(\x)|}{2 d_k \sum_{P \in \P_k} \x_{P} t_{P}(\x)} $ \\
\bottomrule
\end{tabular} \label{tab:unfair-measures}
\end{table}
Finally, observe that each of the above defined notions of unfairness involve taking a maximum of the unfairness measure for each O-D pair. We note that different methods of aggregating across O-D pairs can also be used, e.g., averaging the unfairness measure. Furthermore, there is a key aspect of the above defined unfairness measures that makes the I-TAP a particularly natural approach for balancing efficiency and fairness. In particular, for each of the above defined unfairness notions, the user equilibrium achieves the highest possible level of fairness, while the system optimum achieves the lowest total travel times. Thus, interpolating between the UE and SO objectives can potentially achieve the best of both worlds, i.e., a low total travel time with a high level of fairness.

\subsection{Numerical Results} \label{sec:apdx-numerical}
We now present some numerical results to demonstrate the performance of I-TAP with respect to the above defined path-based unfairness measures. In particular, we use the Frank Wolfe method to compute a specific path decomposition for I-TAP and use this to evaluate the level of unfairness based on the Envy-Free and discrete Gini coefficient metrics. Figures~\ref{fig:vals_updated_envy_free} and~\ref{fig:vals_updated_gini} depict the trade-off between system efficiency and unfairness for the two metrics, Envy-Freeness and the discrete Gini coefficient, respectively.

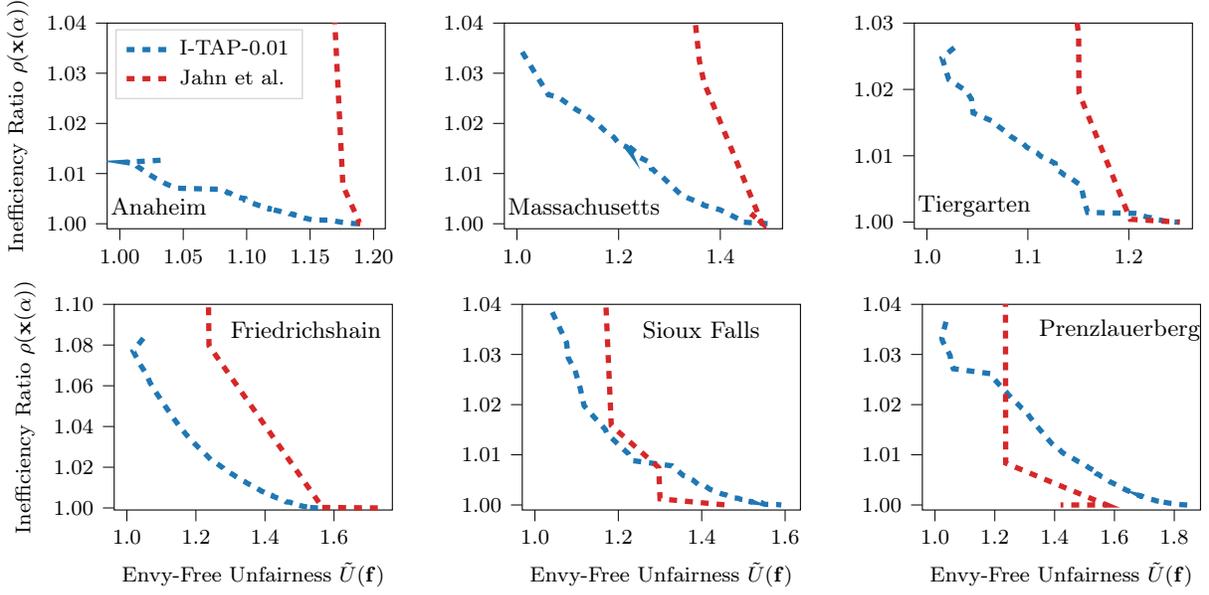
\begin{figure*}
    \centering
    \begin{subfigure}[b]{0.323\columnwidth}
        \centering
\begin{tikzpicture}

\definecolor{color0}{rgb}{0.12156862745098,0.466666666666667,0.705882352941177}
\definecolor{color1}{rgb}{1,0.498039215686275,0.0549019607843137}
\definecolor{color2}{rgb}{0.172549019607843,0.627450980392157,0.172549019607843}
\definecolor{color3}{rgb}{0.83921568627451,0.152941176470588,0.156862745098039}

\begin{axis}[
width = \siwidth,
height = 1.7in,
legend cell align={left},
legend style={
  fill opacity=0.8,
  draw opacity=1,
  text opacity=1,
  at={(0.03,0.97)},
  anchor=north west,
  draw=white!80!black
},
tick align=outside,
tick pos=left,
x grid style={white!69.0196078431373!black},
xmin=0.990035819317997, xmax=1.20928673089475,
xticklabels={0.00, 0.00, 1.00, 1.05, 1.10, 1.15, 1.20},
xtick style={color=black},
y grid style={white!69.0196078431373!black},
ylabel={Inefficiency Ratio \(\displaystyle \rho(\mathbf{x}(\alpha))\)},
ymin=0.999, ymax=1.04,
yticklabels={0.00, 0.00, 1.00, 1.01, 1.02, 1.03, 1.04},
ytick style={color=black},
style={font=\footnotesize}
]
\addplot [dashed, line width=2pt, color=color0]
table {%
1.03216119669047 1.01265315471268
1.0044897566147 1.01225288492082
1.01260734852479 1.01191904507463
1.01379583603029 1.01142381375606
1.01638579106409 1.0110124492958
1.0195643650976 1.01039067771015
1.02276312171845 1.00987786255013
1.02840240378469 1.00902442598697
1.03393768664059 1.00830465071282
1.03607425022471 1.00799248081522
1.04413805909871 1.00708905081794
1.0811721459356 1.00684693861472
1.08340075892904 1.00650449003206
1.08486532206039 1.00617137760173
1.08668665674585 1.00613020702589
1.09042465771605 1.00550156998356
1.09443358247832 1.00519999074778
1.09856213216164 1.00503238003864
1.09868873788211 1.00468393056898
1.1010729555084 1.00446461325952
1.10310868961225 1.00425031887833
1.10488335657065 1.00407368072452
1.10702671395972 1.00387315146655
1.10909450596478 1.0037038361224
1.11088022820249 1.00355003916991
1.11312043762464 1.00337889475555
1.11832150560789 1.00298301190617
1.11957239718466 1.00288441426009
1.11576299843076 1.00305009196332
1.12450214819407 1.00258902194077
1.12612854940984 1.00248056257713
1.12744798632572 1.00236745095086
1.12893258511788 1.002300547913
1.1311955842223 1.0022465208458
1.13216155957712 1.00215474985936
1.13302907757777 1.00208193191694
1.13345357755932 1.00200762116026
1.13627781829183 1.00191806209722
1.13711055209856 1.00185019411029
1.13724681942364 1.00176410377697
1.13932546097928 1.00165964368335
1.14171019107959 1.00148624054797
1.14171292312897 1.00140838904069
1.14442385650706 1.0012506913988
1.14522803298108 1.00117394594353
1.14574102750633 1.00111580041303
1.14686538133259 1.00106149080132
1.1482093785137 1.00099741876134
1.1488659400729 1.00094440473113
1.14965327876036 1.00088451135893
1.15062395280812 1.00083530277679
1.15142091365338 1.00078839081307
1.15226540279908 1.00075312935743
1.16678580372163 1.00071182649804
1.16762903190981 1.00066199777364
1.16828372103866 1.00062028381514
1.1704556182483 1.00049481587456
1.17124394966693 1.00046400788947
1.17270926611697 1.00039228193738
1.17336047256507 1.00036159346821
1.17417311790774 1.00033008696083
1.17479082160598 1.00030876127438
1.17534906643653 1.00028218163202
1.17746072261263 1.00021628392479
1.1781671985861 1.00019634652979
1.17863973074529 1.00018160479142
1.17937033822046 1.00015854319168
1.18097984313844 1.00011095481322
1.18171879975772 1.0000966828893
1.18336391765158 1.00005878547196
1.18387715543773 1.00005023892565
1.18546981957506 1.00002745289667
1.18780607586213 1.00000430615126
1.1885084951836 1.00000205872337
1.18875075894308 1
};
\addlegendentry{I-TAP-0.01}
\addplot [dashed, line width=2pt, color=color3]
table {%
1.00000176984421 1.90599615279674
1.17585465341427 1.00759134147248
1.18766847879678 1.00095753440826
1.18914090901968 1.00001776131243
1.18948116322059 0.999998787384622
1.18917921401213 1.00000246852559
1.1892481556778 1
};
\addlegendentry{Jahn et al.}
\end{axis}
\node[text width=1.5cm] at (0.8,0.3) {\small Anaheim};
\end{tikzpicture}
    \end{subfigure}
    \begin{subfigure}[b]{0.323\columnwidth}
        \centering
\begin{tikzpicture}

\definecolor{color0}{rgb}{0.12156862745098,0.466666666666667,0.705882352941177}
\definecolor{color1}{rgb}{1,0.498039215686275,0.0549019607843137}
\definecolor{color2}{rgb}{0.172549019607843,0.627450980392157,0.172549019607843}
\definecolor{color3}{rgb}{0.83921568627451,0.152941176470588,0.156862745098039}

\begin{axis}[
width = \siwidth,
height = 1.7in,
legend cell align={left},
legend style={fill opacity=0.8, draw opacity=1, text opacity=1, draw=white!80!black},
tick align=outside,
tick pos=left,
x grid style={white!69.0196078431373!black},
xmin=0.97512811431037, xmax=1.52230959948223,
xtick style={color=black},
xticklabels={0.00, 0.00, 1.0, 1.2, 1.4},
y grid style={white!69.0196078431373!black},
ymin=0.999, ymax=1.04,
restrict y to domain=0.999:1.06,
yticklabels={0.00, 0.00, 1.00, 1.01, 1.02, 1.03, 1.04},
ytick style={color=black},
style={font=\footnotesize}
]
\addplot [dashed, line width=2pt, color=color0]
table {%
1.0102572793435 1.03423045286804
1.06158675604029 1.02571243757666
1.08185840707965 1.02522150240824
1.10655186163974 1.0234053265193
1.11898706998626 1.02260234421599
1.13490304002042 1.02175922017397
1.13965646004481 1.02136926088818
1.14535785953177 1.02092627646149
1.15900691522328 1.01985855439678
1.16381814767336 1.01939585750602
1.1753132627482 1.018307262107
1.18474864835685 1.01777432193977
1.19094137876347 1.01717943306762
1.19992179533637 1.0160216536372
1.21614204695219 1.01527929304595
1.22063352222043 1.0146632044624
1.21614053255694 1.01573468782437
1.23713989762536 1.01398380473779
1.24247064651539 1.01316806472034
1.26155499212072 1.01220940170584
1.26446452107007 1.01093862529464
1.26731561266407 1.01085180694545
1.27600417506479 1.00995504949882
1.28249493982 1.00967637699464
1.32414057605043 1.00563350166797
1.36425652279932 1.00397657779287
1.3658437817358 1.00378626098321
1.36777972164666 1.00366953398749
1.39943436400079 1.00278477876433
1.4377449472494 1.00085309683641
1.44026706225733 1.00033734297481
1.49129560747643 1.00002869037902
1.4974377137926 1
};
\addplot [dashed, line width=2pt, color=color3]
table {%
1 11.9228264121261
1.34361081401399 1.04903719183988
1.34890817058833 1.04184483668255
1.35876340064822 1.03251666135641
1.36779255775214 1.02819445998195
1.48164207227087 1.00018223266581
1.46066713877938 1.0019872348476
1.48324585096453 1
};
\end{axis}
\node[text width=1.5cm] at (0.8,0.3) {\small Massachusetts};
\end{tikzpicture}
    \end{subfigure}
    \begin{subfigure}[b]{0.323\columnwidth}
        \centering
\begin{tikzpicture}

\definecolor{color0}{rgb}{0.12156862745098,0.466666666666667,0.705882352941177}
\definecolor{color1}{rgb}{1,0.498039215686275,0.0549019607843137}
\definecolor{color2}{rgb}{0.172549019607843,0.627450980392157,0.172549019607843}
\definecolor{color3}{rgb}{0.83921568627451,0.152941176470588,0.156862745098039}

\begin{axis}[
width = \siwidth,
height = 1.7in,
legend cell align={left},
legend style={fill opacity=0.8, draw opacity=1, text opacity=1, draw=white!80!black},
tick align=outside,
tick pos=left,
x grid style={white!69.0196078431373!black},
xmin=0.987473997719711, xmax=1.26304604788608,
xticklabels={0.00, 0.00, 1.0, 1.1, 1.2},
xtick style={color=black},
y grid style={white!69.0196078431373!black},
ymin=0.999, ymax=1.03,
yticklabels={0.00, 0.00, 1.00, 1.01, 1.02, 1.03},
ytick style={color=black},
style={font=\footnotesize}
]
\addplot [dashed, line width=2pt, color=color0]
table {%
1.02709771362984 1.02629964809451
1.01501092579337 1.02476463815947
1.02163661668061 1.02149320788889
1.03920897531093 1.01974910394532
1.04417999183975 1.01851943548177
1.04575892223503 1.01646103437012
1.06472184800697 1.0152377929048
1.0714251538072 1.01435601081908
1.07701418239315 1.01372989751561
1.08280579453186 1.01294362273387
1.08932864167602 1.01230335610018
1.09514445590111 1.01184637913861
1.09869978534192 1.01130984303842
1.10371348536129 1.01085796949852
1.10888525286956 1.01046757068132
1.11296364942916 1.00991163212425
1.11849496246563 1.00954035689957
1.12698322827414 1.00885665604232
1.12712504246939 1.00832659298915
1.1272385195253 1.00796476103077
1.13297986718668 1.00751132107223
1.1358245167617 1.00713048998565
1.1424960512246 1.00671667777246
1.14401849403861 1.00646575147914
1.14900317487466 1.0059153831851
1.15258362964901 1.00555316266743
1.15647819077166 1.00283248031482
1.15927304127252 1.00145522669928
1.20729323916325 1.00131054262375
1.21028393071082 1.00125872990354
1.21113489881462 1.00109541998315
1.21194606747654 1.00103995637995
1.21494065785012 1.00090545852646
1.21626319112678 1.00084393735486
1.2171942626905 1.0007932085936
1.21845816905592 1.00073766823008
1.22212342894047 1.00066178583271
1.22300226427117 1.00059646879999
1.2241248929463 1.00051526056854
1.22486422178514 1.00046559594146
1.22672905729286 1.00044251076002
1.22864732885857 1.00038090498326
1.22916156831712 1.00033789528396
1.23074298329181 1.00030660245646
1.23201259908945 1.00028711389123
1.23297643164075 1.00022828163916
1.23461373237738 1.00020113918006
1.23731958715417 1.00013713530775
1.23890492305139 1.00009958653318
1.24031109726405 1.00008929574864
1.24036834877736 1.00007492934102
1.2422835903351 1.00005442225699
1.24299614449695 1.00004473523393
1.24353628795276 1.00003546167301
1.24448941880417 1.00002885212432
1.24512949778629 1.00002383046016
1.24569516008279 1.00001788281691
1.247634435561 1.00001135450937
1.2475504476628 1.00001145674198
1.24981628419672 1.00000339303689
1.25031639679881 1
};
\addplot [dashed, line width=2pt, color=color3]
table {%
1 1.24625684755394
1.11785597262373 1.07113824546842
1.15032749803855 1.02809181860033
1.15063965446512 1.01944791904243
1.20148930554737 1.00042647951916
1.25022564856515 1.0000040274766
1.25052004560579 1
1.25052004560579 1
};
\draw (axis cs:1.5,0.9) node[
  scale=0.5,
  text=black,
  rotate=0.0
]{Tiergarten};
\end{axis}
\node[text width=1.5cm] at (0.8,0.3) {\small Tiergarten};
\end{tikzpicture}
    \end{subfigure}\par\medskip
    \vspace{-20pt}
    \hspace{0pt}
    \begin{subfigure}[b]{0.323\columnwidth}
        \centering
\begin{tikzpicture}

\definecolor{color0}{rgb}{0.12156862745098,0.466666666666667,0.705882352941177}
\definecolor{color1}{rgb}{1,0.498039215686275,0.0549019607843137}
\definecolor{color2}{rgb}{0.172549019607843,0.627450980392157,0.172549019607843}
\definecolor{color3}{rgb}{0.83921568627451,0.152941176470588,0.156862745098039}

\begin{axis}[
width = \siwidth,
height = 1.7in,
legend cell align={left},
legend style={fill opacity=0.8, draw opacity=1, text opacity=1, draw=white!80!black},
tick align=outside,
tick pos=left,
x grid style={white!69.0196078431373!black},
xlabel={Envy-Free Unfairness $\Tilde{U}(\f)$},
xmin=0.96338951825704, xmax=1.76882011660215,
xticklabels={0.00, 0.00, 1.0, 1.2, 1.4, 1.6},
xtick style={color=black},
y grid style={white!69.0196078431373!black},
ylabel={Inefficiency Ratio \(\displaystyle \rho(\mathbf{x}(\alpha))\)},
ymin=0.999, ymax=1.1,
yticklabels={0.00, 0.00, 1.00, 1.02, 1.04, 1.06, 1.08, 1.10},
ytick style={color=black},
style={font=\footnotesize}
]
\addplot [dashed, line width=2pt, color=color0]
table {%
1.04777619741354 1.08333681641639
1.01805202659178 1.07710131262555
1.03856957576225 1.07046933966396
1.05621547295758 1.06580191568301
1.06739136023125 1.06145508118298
1.08261267704885 1.05723137287348
1.09749662468401 1.05345783131864
1.111158570345 1.0499776990804
1.1257605120788 1.04637820942525
1.13765767631152 1.04348819399355
1.15073367543052 1.04044064006127
1.17409443530752 1.03561785030523
1.18540518922701 1.03351829670561
1.1957438278156 1.03170437811372
1.20654009047451 1.02985388900329
1.21649517029881 1.02821424641006
1.22525243967528 1.02688051215448
1.24408863119237 1.02353305734928
1.25378198558326 1.02280015379761
1.26177654679862 1.02171169256114
1.27061851020754 1.0206113870703
1.27878549059673 1.0195973868961
1.28222964995237 1.01889486334355
1.29422500292896 1.01778353516114
1.30161631721178 1.01695653933158
1.30936485066497 1.01618174040173
1.3157398834171 1.01541563528172
1.32244638216431 1.01470379877342
1.32903083379994 1.01400226329925
1.33548356433071 1.01339094478428
1.34159886710222 1.01289871201323
1.34776369223884 1.01222672934129
1.35256289514385 1.01178669562625
1.35940012648589 1.01119051825643
1.3647525061162 1.01068835281413
1.3723945023135 1.01032792955752
1.37572277012039 1.0096877769556
1.38064604179561 1.00920442853723
1.38592842834554 1.00875717194621
1.38863075654231 1.00831839941908
1.39607844759622 1.00785560538963
1.40056897402795 1.00745992640823
1.40464279861951 1.00696927852201
1.40897113167956 1.00671355064748
1.4081215725295 1.0068390950926
1.42119791005106 1.00610423029478
1.42231346414045 1.00578764161838
1.42639394291261 1.0055067284872
1.43052192470923 1.00524900506612
1.43428496571041 1.00498803927669
1.43813751057205 1.00477510445006
1.44183173030671 1.00457549064643
1.44400058400344 1.00434688224529
1.44901423088906 1.00409613521423
1.45252238397262 1.00390571396136
1.45576110565408 1.0037241267993
1.45718289429052 1.00357014227175
1.46121782868752 1.00297916712246
1.47209345623178 1.00284059630784
1.47490128944075 1.00263125951364
1.47811924918983 1.0024477975019
1.48101690976591 1.00228771143831
1.48432926192216 1.00207101284246
1.48664075358742 1.00196770131731
1.48957064740802 1.00179804116805
1.49198569127231 1.00165366483195
1.49483594230124 1.00153126284474
1.49346541729261 1.00159613416909
1.49789277508697 1.00141491115098
1.50365105099353 1.00109218719726
1.50655370564991 1.00100424987227
1.5094254104892 1.00086276299035
1.51028534459979 1.00078538061107
1.51739787535417 1.00061393612214
1.51896747794997 1.00056171675806
1.52663405852932 1.00036200856258
1.51957659664831 1.0003695032153
1.53230159318027 1.00020363255038
1.53513701275818 1.00019133888759
1.53694358197204 1.00017015667451
1.53878073347556 1.00014885590579
1.54052131136049 1.00013050425699
1.54262431739639 1.00009965487693
1.54432292523624 1.0000844240366
1.54634023237672 1.00005568724939
1.55397178591997 1.00001047563676
1.55799302060645 1
};
\addplot [dashed, line width=2pt, color=color3]
table {%
1 1.51355675945567
1.02176255267465 1.18592437282101
1.23181669748365 1.16590656115161
1.23786559835767 1.07971011751014
1.55772539482382 1.00231949993955
1.55780963667575 1.00025161441859
1.73220963485919 1
};
\end{axis}
\node[text width=1.5cm] at (2.3,2.4) {\small Friedrichshain};
\end{tikzpicture}
    \end{subfigure}\hspace{8pt}
    \begin{subfigure}[b]{0.323\columnwidth}
        \centering
\begin{tikzpicture}

\definecolor{color0}{rgb}{0.12156862745098,0.466666666666667,0.705882352941177}
\definecolor{color1}{rgb}{1,0.498039215686275,0.0549019607843137}
\definecolor{color2}{rgb}{0.172549019607843,0.627450980392157,0.172549019607843}
\definecolor{color3}{rgb}{0.83921568627451,0.152941176470588,0.156862745098039}

\begin{axis}[
width = \siwidth,
height = 1.7in,
legend cell align={left},
legend style={fill opacity=0.8, draw opacity=1, text opacity=1, draw=white!80!black},
tick align=outside,
tick pos=left,
x grid style={white!69.0196078431373!black},
xlabel={Envy-Free Unfairness $\Tilde{U}(\f)$},
xmin=0.96967947835315, xmax=1.63674378150016,
xticklabels={0.00, 0.00, 1.0, 1.2, 1.4, 1.6},
xtick style={color=black},
y grid style={white!69.0196078431373!black},
ymin=0.999, ymax=1.04,
yticklabels={0.00, 0.00, 1.00, 1.01, 1.02, 1.03, 1.04},
ytick style={color=black},
style={font=\footnotesize}
]
\addplot [dashed, line width=2pt, color=color0]
table {%
1.0415798961255 1.03842126315122
1.07576637026551 1.032385835623
1.08007789074929 1.02930208526587
1.09631464870632 1.02683345004132
1.10865263706428 1.02327474195923
1.11795290580365 1.01978487884171
1.13125068929408 1.01849543729173
1.16881357208051 1.01504965589419
1.17239742617107 1.01427346251606
1.19896661950175 1.01214513408235
1.22466422763239 1.0101790973289
1.23276469068346 1.00886189012525
1.32949096403207 1.00778616681752
1.35524879895757 1.00590006741294
1.36737208496817 1.00556887791138
1.37303459904537 1.00517062262808
1.37945605002522 1.00493660427735
1.39242057305898 1.0040357445566
1.40894807070717 1.00376197904945
1.41947797920481 1.00313676400093
1.42898305752734 1.00283657839387
1.43503204874575 1.00264481939525
1.43745778233635 1.00255411075781
1.4482867333244 1.0022000421102
1.4583722094363 1.00201006210967
1.46383799994945 1.0018166843505
1.46582095770959 1.00181070943076
1.47225033209451 1.00165644994654
1.47720310046831 1.00139529387857
1.48741095489382 1.00133933067813
1.49315318439882 1.00113356540961
1.50311876096558 1.00102698289222
1.51065704251881 1.00095245104715
1.51914889718233 1.00065026097793
1.5131317066045 1.00070788620207
1.52638666160683 1.00059863455539
1.53441255139926 1.00056497508452
1.54386108444038 1.0004404439087
1.54868286665003 1.00019581288586
1.55170182456946 1.00014197231009
1.56227253811229 1.00010832409044
1.57942622925095 1.00009872043471
1.59088103480664 1
};
\addplot [dashed, line width=2pt, color=color3]
table {%
1.00000058304165 8.52174813028758
1.00965980530861 8.46919433999255
1.01624640160335 5.34444789301628
1.02995503944954 3.0176163816457
1.1027181808789 1.87069600386083
1.11597088236877 1.14125963093868
1.18312237580067 1.01585642119357
1.29681347394932 1.00746786362138
1.30006096789984 1.00121828594962
1.45769176085875 1
};
\end{axis}
\node[text width=2cm] at (2.6,2.4) {\small Sioux Falls};
\end{tikzpicture}
    \end{subfigure}\hspace{-5pt}
    \begin{subfigure}[b]{0.323\columnwidth}
        \centering
\begin{tikzpicture}

\definecolor{color0}{rgb}{0.12156862745098,0.466666666666667,0.705882352941177}
\definecolor{color1}{rgb}{1,0.498039215686275,0.0549019607843137}
\definecolor{color2}{rgb}{0.172549019607843,0.627450980392157,0.172549019607843}
\definecolor{color3}{rgb}{0.83921568627451,0.152941176470588,0.156862745098039}

\begin{axis}[
width = \siwidth,
height = 1.7in,
legend cell align={left},
legend style={fill opacity=0.8, draw opacity=1, text opacity=1, draw=white!80!black},
tick align=outside,
tick pos=left,
x grid style={white!69.0196078431373!black},
xlabel={Envy-Free Unfairness $\Tilde{U}(\f)$},
xmin=0.957727873437494, xmax=1.88771465781263,
xticklabels={0.0, 0.0, 1.0, 1.2, 1.4, 1.6, 1.8},
xtick style={color=black},
y grid style={white!69.0196078431373!black},
ymin=0.999, ymax=1.04,
yticklabels={0.00, 0.00, 1.00, 1.01, 1.02, 1.03, 1.04},
ytick style={color=black},
style={font=\footnotesize}
]
\addplot [dashed, line width=2pt, color=color0]
table {%
1.03672245291724 1.03652424176196
1.0226344847168 1.03324752465105
1.02550179776974 1.03262933652619
1.04287080855659 1.03071210147687
1.05076227038471 1.02975565441929
1.05388727071992 1.02815996213305
1.06119700607259 1.02715623344039
1.18628935642275 1.02616883558966
1.1967912039846 1.02519872420061
1.21395587870278 1.02403422424142
1.23796790885949 1.02254556066436
1.24537947260722 1.02198488741122
1.26033373518824 1.02106832114016
1.2767270448256 1.02012271531014
1.29039888398771 1.01926684665749
1.30500702749838 1.01858634455729
1.31737457556133 1.01751915577269
1.33057673564651 1.01651136355009
1.34368011810848 1.01551871509716
1.35663634143437 1.01460529098159
1.3681592320483 1.01373857236214
1.38070454904429 1.01291032884821
1.39230905047826 1.01213488144674
1.40438803836272 1.01156160047472
1.41533613168528 1.01110466467822
1.425975934206 1.01044354105206
1.43663419499817 1.01006084821827
1.44682964565156 1.00971249855662
1.45762548231361 1.00939516103938
1.46741657602615 1.00906467673361
1.47694577100812 1.008715617981
1.49076795080293 1.0083475629617
1.4956413097021 1.00815330957918
1.50461904869886 1.00786678565653
1.51386749588823 1.00744967756313
1.53805306958992 1.00666200204941
1.53111300536628 1.00674783010233
1.53922884253374 1.00624802896109
1.54757725236952 1.00606394697898
1.55435289610883 1.00573750127925
1.56605753078536 1.00540825624147
1.57157095627726 1.00517277741572
1.57916776911221 1.00490673359551
1.58572771075022 1.0046056303769
1.59375831823923 1.00438481945924
1.6065131196664 1.00400763174575
1.6081559088087 1.00390491354557
1.61516803979347 1.00368551477299
1.62210347703546 1.00347661546136
1.62870805659832 1.00329154763578
1.63543666623415 1.00309926952468
1.64245871468799 1.00288760884354
1.64757384713466 1.00276240287848
1.6585006939462 1.00252751203415
1.66111949639228 1.00242746407034
1.67076106669155 1.00224130578431
1.67191069770475 1.00214305176096
1.67745164444213 1.00200457846405
1.69040935023902 1.00171105516087
1.68872682281297 1.00176827595284
1.69522925412295 1.0016807315389
1.69990499073641 1.0015489011978
1.70503389001784 1.00144411066528
1.71044075716925 1.00135360369856
1.71547390222687 1.0012577712703
1.72127914579171 1.00114711776424
1.72532145913196 1.00109146383828
1.73053553677932 1.00101237816411
1.73459477204077 1.00095002859474
1.73931791357628 1.00082213108646
1.74943615581081 1.00074290982962
1.75189420260262 1.00073619637768
1.75815161279011 1.00063403376162
1.76270878449534 1.00058844853654
1.76711023325426 1.00054922396839
1.76210366503677 1.00060075444421
1.77589082131155 1.00044357054071
1.77959337551325 1.00040420027332
1.7840882476894 1.00037176451859
1.78647076475064 1.00034056705524
1.79396614573188 1.00029462989122
1.79564229911913 1.00025012698988
1.80052617131768 1.00022481105851
1.80340139370984 1.00019236244154
1.80511885674786 1.00016591744845
1.81108400933928 1.00013170352962
1.8137377847152 1.00012123266896
1.81775971141073 1.0000992220329
1.82209282795821 1.00007007770955
1.82835601887225 1.00004915349003
1.83525901128958 1.00002021991049
1.8391186241206 1.00001311574237
1.84214251019417 1.00001264492894
1.84544253125013 1
};
\addplot [dashed, line width=2pt, color=color3]
table {%
1 5.62867155729629
1.2280164862816 2.82784066350701
1.22834009750098 2.39480344356529
1.23084411732849 1.42967488182058
1.23488796641289 1.07833119177006
1.23525832830698 1.05908882652385
1.23618046730401 1.01461690354873
1.23636106761171 1.00834160341115
1.57447257397045 0.99999859355778
1.41993571208141 1
};
\end{axis}
\node[text width=1.5cm] at (2.3,2.4) {\small Prenzlauerberg};
\end{tikzpicture}
    \end{subfigure}
    \vspace{-15pt}
    \caption{{\small \sf Pareto frontier depicting the trade-off between efficiency and ``envy-freeness'' for the 
    (i) I-TAP method with a step size $s = 0.01$, and (ii) Jahn et al.'s method with $s=0.05$.}} 
    \label{fig:vals_updated_envy_free}
\end{figure*}

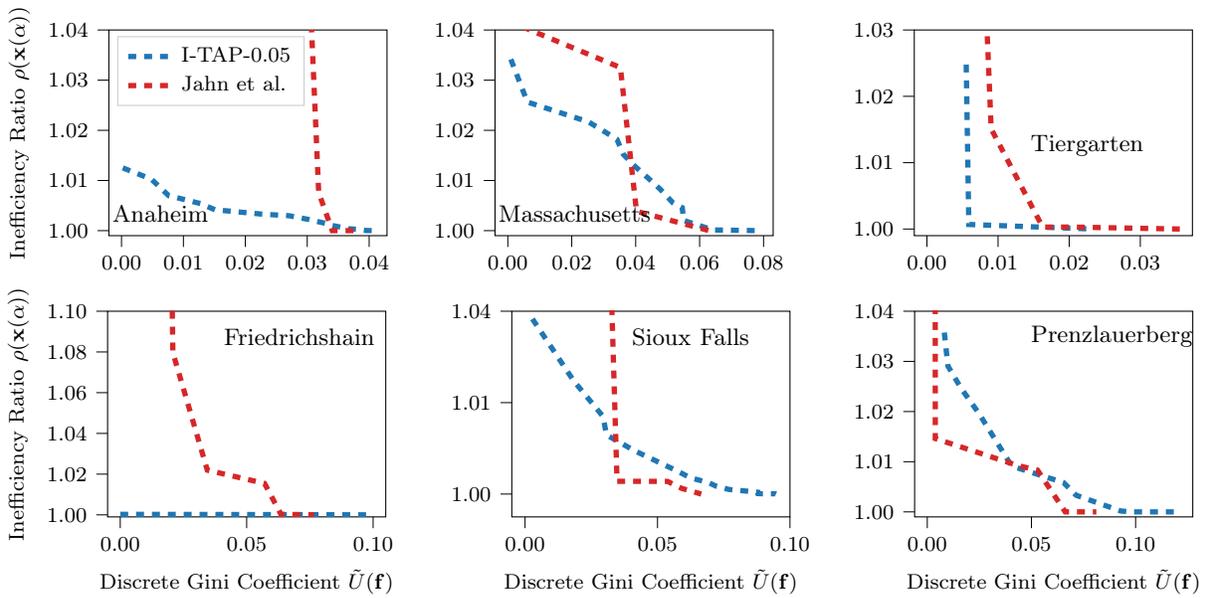
\begin{figure*}
    \centering
    \begin{subfigure}[b]{0.323\columnwidth}
        \centering
\begin{tikzpicture}

\definecolor{color0}{rgb}{0.12156862745098,0.466666666666667,0.705882352941177}
\definecolor{color1}{rgb}{1,0.498039215686275,0.0549019607843137}
\definecolor{color2}{rgb}{0.172549019607843,0.627450980392157,0.172549019607843}
\definecolor{color3}{rgb}{0.83921568627451,0.152941176470588,0.156862745098039}

\begin{axis}[
width = \siwidth,
height = 1.7in,
legend cell align={left},
legend style={
  fill opacity=0.8,
  draw opacity=1,
  text opacity=1,
  at={(0.03,0.97)},
  anchor=north west,
  draw=white!80!black
},
tick align=outside,
tick pos=left,
x grid style={white!69.0196078431373!black},
xmin=-0.00204175132817059, xmax=0.0428864969209217,
xticklabels={0.00, 0.00, 0.01, 0.02, 0.03, 0.04},
xtick style={color=black},
y grid style={white!69.0196078431373!black},
ylabel={Inefficiency Ratio \(\displaystyle \rho(\mathbf{x}(\alpha))\)},
ymin=0.999, ymax=1.04,
yticklabels={0.00, 0.00, 1.00, 1.01, 1.02, 1.03, 1.04},
ytick style={color=black},
style={font=\footnotesize}
]
\addplot [dashed, line width=2pt, color=color0]
table {%
0.000207525266860167 1.01250767697042
0.00482947714956234 1.01025854375383
0.00761605852792852 1.00698212287757
0.0127389063186292 1.00540571425179
0.0150921393158084 1.00416221425264
0.0226782831429586 1.00329546868477
0.0270502247018398 1.00297627709161
0.0302621973257367 1.00222153417545
0.0315602246154931 1.00184170279742
0.033050710830598 1.00142302371727
0.0338808170584262 1.00106017744469
0.0350398504464269 1.00078470973894
0.0359127463802999 1.00057302840763
0.0366630645526584 1.00038687243162
0.0380682160276109 1.00024992967375
0.0394727748470348 1.00002793831495
0.0408443038186902 1
};
\addlegendentry{I-TAP-0.05}
\addplot [dashed, line width=2pt, color=color3]
table {%
0.0000004177 1.90599846403979
0.0318514499759228 1.00759256329472
0.0340404365188011 1.00000260875856
0.0367622335113123 1.00000121261685
0.0387028060983835 1
};
\addlegendentry{Jahn et al.}
\end{axis}
\node[text width=1.5cm] at (0.8,0.3) {\small Anaheim};
\end{tikzpicture}
    \end{subfigure}
    \begin{subfigure}[b]{0.323\columnwidth}
        \centering
\begin{tikzpicture}

\definecolor{color0}{rgb}{0.12156862745098,0.466666666666667,0.705882352941177}
\definecolor{color1}{rgb}{1,0.498039215686275,0.0549019607843137}
\definecolor{color2}{rgb}{0.172549019607843,0.627450980392157,0.172549019607843}
\definecolor{color3}{rgb}{0.83921568627451,0.152941176470588,0.156862745098039}

\begin{axis}[
width = \siwidth,
height = 1.7in,
legend cell align={left},
legend style={fill opacity=0.8, draw opacity=1, text opacity=1, draw=white!80!black},
tick align=outside,
tick pos=left,
x grid style={white!69.0196078431373!black},
xmin=-0.004, xmax=0.083,
xtick style={color=black},
xticklabels={0.00, 0.00, 0.02, 0.04, 0.06, 0.08},
y grid style={white!69.0196078431373!black},
ymin=0.999, ymax=1.04,
restrict y to domain=0.999:1.06,
yticklabels={0.00, 0.00, 1.00, 1.01, 1.02, 1.03, 1.04},
ytick style={color=black},
style={font=\footnotesize}
]
\addplot [dashed, line width=2pt, color=color0]
table {%
0.000947727825637959 1.03415104739086
0.00611902781110721 1.02563761914634
0.0252205275367306 1.02168908337573
0.0340980988095859 1.01824079246336
0.0361400701825563 1.01521216442623
0.0392813019705228 1.01310173961343
0.0449191817360341 1.00988763164429
0.0477167501608073 1.0082282068902
0.051614579775064 1.00556769695321
0.0545725428006892 1.00457167967715
0.0550098802360843 1.00202674847829
0.064198478842727 1.00013837971253
0.0788118553846344 1
};
\addplot [dashed, line width=2pt, color=color3]
table {%
0 1.04184483668255
0.0352122108921305 1.03251666135641
0.040255723610816 1.00384905065396
0.0630614711565026 1
};
\end{axis}
\node[text width=1.5cm] at (0.8,0.3) {\small Massachusetts};
\end{tikzpicture}
    \end{subfigure}
    \begin{subfigure}[b]{0.323\columnwidth}
        \centering
\begin{tikzpicture}

\definecolor{color0}{rgb}{0.12156862745098,0.466666666666667,0.705882352941177}
\definecolor{color1}{rgb}{1,0.498039215686275,0.0549019607843137}
\definecolor{color2}{rgb}{0.172549019607843,0.627450980392157,0.172549019607843}
\definecolor{color3}{rgb}{0.83921568627451,0.152941176470588,0.156862745098039}

\begin{axis}[
width = \siwidth,
height = 1.7in,
legend cell align={left},
legend style={fill opacity=0.8, draw opacity=1, text opacity=1, draw=white!80!black},
tick align=outside,
tick pos=left,
x grid style={white!69.0196078431373!black},
xmin=-0.00177749753399089, xmax=0.0373274482138088,
xticklabels={0.00, 0.00, 0.01, 0.02, 0.03},
xtick style={color=black},
y grid style={white!69.0196078431373!black},
ymin=0.999, ymax=1.03,
yticklabels={0.00, 0.00, 1.00, 1.01, 1.02, 1.03},
ytick style={color=black},
yticklabel style = {scaled y ticks=false},
style={font=\footnotesize}
]
\addplot [dashed, line width=2pt, color=color0]
table {%
0.00552565570748857 1.02478812979544
0.00589162225633511 1.00068861695954
0.0233433575048311 1
};
\addplot [dashed, line width=2pt, color=color3]
table {%
0 1.24625684755394
0.00903332410902836 1.01510091070449
0.016267715705469 1.00029198043607
0.0207801228407467 1.0002919015365
0.0355499506798179 1
0.0355499506798179 1
};
\end{axis}
\node[text width=1.5cm] at (2.3,1.2) {\small Tiergarten};
\end{tikzpicture}
    \end{subfigure}\par\medskip
    \vspace{-20pt}
    \hspace{0pt}
    \begin{subfigure}[b]{0.323\columnwidth}
        \centering
\begin{tikzpicture}

\definecolor{color0}{rgb}{0.12156862745098,0.466666666666667,0.705882352941177}
\definecolor{color1}{rgb}{1,0.498039215686275,0.0549019607843137}
\definecolor{color2}{rgb}{0.172549019607843,0.627450980392157,0.172549019607843}
\definecolor{color3}{rgb}{0.83921568627451,0.152941176470588,0.156862745098039}

\begin{axis}[
width = \siwidth,
height = 1.7in,
legend cell align={left},
legend style={fill opacity=0.8, draw opacity=1, text opacity=1, draw=white!80!black},
tick align=outside,
tick pos=left,
x grid style={white!69.0196078431373!black},
xlabel={Discrete Gini Coefficient $\Tilde{U}(\f)$},
xmin=-0.005, xmax=0.105,
xticklabels={0.00, 0.00, 0.05, 0.10},
xtick style={color=black},
y grid style={white!69.0196078431373!black},
ylabel={Inefficiency Ratio \(\displaystyle \rho(\mathbf{x}(\alpha))\)},
ymin=0.999, ymax=1.1,
yticklabels={0.00, 0.00, 1.00, 1.02, 1.04, 1.06, 1.08, 1.10},
ytick style={color=black},
style={font=\footnotesize}
]
\addplot [dashed, line width=2pt, color=color0]
table {%
0 1.00025723899804
0.0699258085322619 1.00011010832599
0.0999203271887767 1
};
\addplot [dashed, line width=2pt, color=color3]
table {%
0 1.51355675945567
0 1.48408912474145
0.00537615874751643 1.18592437282101
0.0198290940600971 1.16590656115161
0.0208096798411033 1.07971011751014
0.0344308522431447 1.0219458553382
0.0570613267376704 1.01545066428188
0.0635542846906807 1.00025161441859
0.0766730922612522 1
};
\end{axis}
\node[text width=1.5cm] at (2.3,2.4) {\small Friedrichshain};
\end{tikzpicture}
    \end{subfigure}\hspace{8pt}
    \begin{subfigure}[b]{0.323\columnwidth}
        \centering
\begin{tikzpicture}

\definecolor{color0}{rgb}{0.12156862745098,0.466666666666667,0.705882352941177}
\definecolor{color1}{rgb}{1,0.498039215686275,0.0549019607843137}
\definecolor{color2}{rgb}{0.172549019607843,0.627450980392157,0.172549019607843}
\definecolor{color3}{rgb}{0.83921568627451,0.152941176470588,0.156862745098039}

\begin{axis}[
width = \siwidth,
height = 1.7in,
legend cell align={left},
legend style={fill opacity=0.8, draw opacity=1, text opacity=1, draw=white!80!black},
tick align=outside,
tick pos=left,
x grid style={white!69.0196078431373!black},
xlabel={Discrete Gini Coefficient $\Tilde{U}(\f)$},
xmin=-0.005, xmax=0.1,
xticklabels={0.00, 0.00, 0.05, 0.10},
xtick style={color=black},
y grid style={white!69.0196078431373!black},
ymin=0.995, ymax=1.04,
yticklabels={0.00, 0.00, 1.00, 1.01, 1.04},
ytick style={color=black},
style={font=\footnotesize}
]
\addplot [dashed, line width=2pt, color=color0]
table {%
0.00278719740524245 1.03831401210849
0.0183595975215492 1.02477755913583
0.0294289880263073 1.01697634891823
0.0307402488357417 1.01293488690625
0.0413892223390185 1.00945953291084
0.058006314026618 1.00488206066092
0.062308451972296 1.00356367887171
0.0692607892888724 1.0025063135943
0.0719309983587558 1.00177166746141
0.0769595979083775 1.00100443030013
0.0830958673771238 1.00066912628689
0.0878300956156301 1.00040723602053
0.0879515839413494 1.00010519259326
0.0896608916200845 1.00006249729303
0.0944901665693773 1.00006170142505
0.0947216530062414 1
};
\addplot [dashed, line width=2pt, color=color3]
table {%
0 8.52174813028758
0.00035172502641255 8.46919433999255
0.000736495667136006 5.34444789301628
0.0064020994358792 3.0176163816457
0.0141005405407945 1.61127405051206
0.0181163375835599 1.18592479698335
0.0231432509266176 1.14125963093868
0.0305771714490453 1.08269887610134
0.0347279335017648 1.00279000458574
0.0537655237420207 1.00277023970412
0.0592372006990233 1.00121828594962
0.0665173781595905 1
};
\end{axis}
\node[text width=2cm] at (2.6,2.4) {\small Sioux Falls};
\end{tikzpicture}
    \end{subfigure}\hspace{-5pt}
    \begin{subfigure}[b]{0.323\columnwidth}
        \centering
\begin{tikzpicture}

\definecolor{color0}{rgb}{0.12156862745098,0.466666666666667,0.705882352941177}
\definecolor{color1}{rgb}{1,0.498039215686275,0.0549019607843137}
\definecolor{color2}{rgb}{0.172549019607843,0.627450980392157,0.172549019607843}
\definecolor{color3}{rgb}{0.83921568627451,0.152941176470588,0.156862745098039}

\begin{axis}[
width = \siwidth,
height = 1.7in,
legend cell align={left},
legend style={fill opacity=0.8, draw opacity=1, text opacity=1, draw=white!80!black},
tick align=outside,
tick pos=left,
x grid style={white!69.0196078431373!black},
xlabel={Discrete Gini Coefficient $\Tilde{U}(\f)$},
xmin=-0.00606850963430577, xmax=0.127438702320421,
xticklabels={0.00, 0.00, 0.05, 0.10},
xtick style={color=black},
y grid style={white!69.0196078431373!black},
ymin=0.999, ymax=1.04,
yticklabels={0.00, 0.00, 1.00, 1.01, 1.02, 1.03, 1.04},
ytick style={color=black},
style={font=\footnotesize}
]
\addplot [dashed, line width=2pt, color=color0]
table {%
0.00817827285838573 1.03574965919486
0.00991543641846255 1.02908732260394
0.0161805178279814 1.02467431411308
0.0247126961236807 1.01968164146827
0.0405111963913554 1.00913005264855
0.0653932644235734 1.00584561786713
0.0714497502840285 1.00331316685426
0.0928845221645388 1.00017515026314
0.098198672112231 1.000001644158
0.121370192686115 1
};
\addplot [dashed, line width=2pt, color=color3]
table {%
0 5.62867947370875
0.0038495162353235 1.01461833055078
0.043926838313763 1.00917607647704
0.0528808842237193 1.00834302158734
0.0662180475095875 1.0000014064442
0.0811587130996032 1
};
\end{axis}
\node[text width=1.5cm] at (2.3,2.4) {\small Prenzlauerberg};
\end{tikzpicture}
    \end{subfigure}
    \vspace{-15pt}
    \caption{{\small \sf Pareto frontier depicting the trade-off between efficiency and discrete Gini coefficient for the 
    (i) I-TAP method with a step size $s = 0.05$, and (ii) Jahn et al.'s method with $s=0.05$.}} 
    \label{fig:vals_updated_gini}
\end{figure*}

We note that the Pareto-frontier depicting the Envy-Free metric in Figure~\ref{fig:vals_updated_envy_free} is almost identical to the Pareto frontier of the edge-based unfairness metric presented in Section~\ref{sec:fair-eff-metric}. This observation suggests that for many practical instances these two notions of unfairness are quite close to each other. We note this despite the fact that the Envy-Free notion involves the maximum ratio between experienced travel times for a given flow decomposition while the edge-based unfairness metric involves the maximum possible ratio between travel times on positive paths and is thus independent of the realized path decomposition. Since the Used Nash measure is an intermediary between the Envy-Free and the edge-based unfairness measure in Section~\ref{sec:fair-eff-metric}, we note that it would also correspond to near identical Pareto frontiers for the six tested instances. Thus, we do not depict the Used Nash unfairness measure.


For the discrete Gini coefficient metric, we first note that at the user equilibrium solution the discrete Gini coefficient is zero, rather than having unfairness of one for the earlier explored unfairness measures. This is because all users travelling between the same O-D pair have equal travel times at the user equilibrium and thus the numerator of the discrete Gini coefficient is 0. Note that the further the discrete Gini coefficient is away from zero the more unfair the solution.

We observe from Figure~\ref{fig:vals_updated_gini} that, modulo numerical errors, the discrete Gini coefficient of the user equilibrium for all the scenarios is very close to zero. As with the earlier explored unfairness measures, we also observe that the I-TAP method outperforms the approach used in~\citet{so-routing-seminal} even on the discrete Gini coefficient metric for low levels of desired Gini coefficients for all data-sets but Prenzlauerberg.

\end{document}